\newtheorem{theorem}{Theorem}[section] 
\newtheorem{prop}[theorem]{Proposition}
\newtheorem{cor}[theorem]{Corollary}
\newtheorem{lemma}[theorem]{Lemma}
\theoremstyle{plain}
\newtheorem{example}{Example}
\theoremstyle{definition}
\newtheorem{definition}{Definition}[section] 
\theoremstyle{remark}
\newtheorem{remark}{Remark}[section]
\numberwithin{equation}{section}
\newcommand{\impl}{\rightarrow}
\newcommand{\eqv}{\leftrightarrow}
\newcommand{\Z}{\mathsf{Z}}
\newcommand{\algA}{\mathsf{A}}
\newcommand{\algB}{\mathsf{B}}
\newcommand{\algC}{\mathsf{C}}
\newcommand{\alga}{\mathsf{a}}
\newcommand{\algb}{\mathsf{b}}
\newcommand{\algc}{\mathsf{c}}
\newcommand{\algd}{\mathsf{d}}
\newcommand{\algf}{\mathsf{f}}
\newcommand{\algt}{\mathsf{t}}
\newcommand{\algg}{\mathsf{g}}
\newcommand{\algr}{\mathsf{r}}
\newcommand{\oalga}{\overline{\mathsf{a}}}
\newcommand{\oalgb}{\overline{\mathsf{b}}}
\newcommand{\oalgd}{\overline{\mathsf{d}}}
\newcommand{\ualga}{\underline{\mathsf{a}}}
\newcommand{\classK}{\mathcal{K}}
\newcommand{\classV}{\mathcal{V}}
\newcommand{\Heyt}{\mathcal{H}}
\newcommand{\Topo}{\mathcal{I}}
\newcommand{\classI}{\mathcal{I}}
\newcommand{\classM}{\mathcal{M}}
\newcommand{\CHom}{\boldsymbol{\mathsf{H}}}
\newcommand{\CSub}{\boldsymbol{\mathsf{S}}}
\newcommand{\one}{\mathbf{1}}
\newcommand{\zero}{\mathbf{0}}
\newcommand{\lbr}{\langle}
\newcommand{\rbr}{\rangle}
\newcommand{\bydef}{\leftrightharpoons}    %%%%%{\overset{def}{\longleftrightarrow}}
\newcommand{\dimpl }{\Rightarrow}
\newcommand{\op}{\overline{p}}
\newcommand{\up}{\underline{p}}
\newcommand{\uq}{\underline{q}}
\newcommand{\ua}{\underline{\alga}}
\begin{document}

\renewcommand{\footnoterule}{\noindent\rule{5cm}{0.4pt}{\vspace{5pt}}}%
\renewcommand\thefootnote{\arabic{footnote}}%
\setcounter{footnote}{0}%

\title[Characteristic formulas]{Characteristic formulas over intermediate logics}

\author{Alex Citkin}
\email{acitkin@gmail.com}

\begin{abstract}
We expand the notion of characteristic formula to infinite finitely presentable subdirectly irreducible algebras. We prove that there is a continuum of varieties of Heyting algebras containing infinite finitely presentable subdirectly irreducible algebras. Moreover, we prove that there is a continuum of intermediate logics that can be axiomatized by characteristic formulas of infinite algebras while they are not axiomatizable by standard Jankov formulas. We give the examples of intermediate logics that are not axiomatizable by characteristic formulas of infinite algebras. Also, using the G\"odel-McKinsey-Tarski translation we extend these results to the varieties of interior algebras and normal extensions of S4.  
\end{abstract}

\keywords{intermediate logic, Heyting algebra, Jankov formula, characteristic formulas, finitely presentable algebra,  interior algebra, modal logic}

\maketitle

\section{Introduction}

One of the very useful notions in research of intermediate logics and Heyting algebras is a notion of characteristic or Jankov formula introduced in \cite{Jankov_1963_ded}. With every finite subdirectly irreducible (s.i.) algebra $\algA$, using a diagram of algebra $\algA$, one can construct a formula $\chi(\algA)$ that enjoys the following properties:

\begin{center}
\begin{tabular}{l p{10cm}}
(Hom) & if formula $\chi(\algA)$ is refutable in algebra $\algB$ (in symbols $\algB \nvDash A$), then algebra $\algA$ is embeddable into some quotient algebra of algebra $\algB$ (i.e. $\algA$ is homo-embeddable in $\algB$);\\
(Ded) & if formula $A$ is refutable in algebra $\algA$, then $A \vdash \chi(\algA)$ in intuitionistic propositional calculus IPC, i.e. characteristic formula is the weakest relative to derivability formula among formulas refutable in $\algA$;\\
(Irr)  & characteristic formula $\chi(\algA)$ is meet-prime, that is for any formulas $A,B$ if $A \land B \vdash \chi(\algA)$ then $A \vdash \chi(\algA)$ or $b \vdash \chi(\algA)$.
\end{tabular}
\end{center}

Independently, and about at the same time, formulas with similar properties but constructed based upon finite frames rather than algebras, were introduced by K. Fine for modal logics \cite{Fine_Asc_1974} and D. de Jongh for intermedite logics \cite{deJongh_Th}. Then the theory of frame based formulas was extended further to different types of subframes (for more details see e.g. \cite{Chagrov_Zakh,Bezhanishvili_N_PhD,Yang_PhD}).

Let us observe that (Hom) entails that every Jankov or frame formula defines a splitting\footnote{For definition see \cite{McKenzie_1972}.} variety. Thus, every Jankov or frame formula is meet-prime, that is, \(A,B \vdash C\) yields \(A \vdash C\) or \(B \vdash C\) for any formulas \(A,B,C\). So, it was natural to try and use Jankov or frame formulas as the building blocks for axiomatization of intermediate or modal logics. It turned out though that not every intermediate logic can be axiomatized by Jankov formulas\footnote{An example can be found in \cite[Proposition 9.5]{Chagrov_Zakh}. In fact, there is a continuum of intermediate logics that cannot be axiomatized by Jankov formulas \cite[Corollary from the Theorem 4.8]{Tomaszewski_PhD}.}. In \cite{Zakh_intermediate_1983,Zakh_Sintax_1988,Zakharyashchev_Sintax_1989} M.Zakharyaschev introduced canonical formulas, using which one can axiomatize any intermediate logic or any logic of transitive frames (see also  \cite{Zakh_Canonical_1992,Zakh_Canonical_1996,Zakh_Canonical_1997}). Canonical formulas proved to be very helpful in solving some problems in intermediate or modal logics (see e.g. \cite{Chagrov_Zakh}). An algebraic account of the theory of canonical formulas was offered in \cite{Tomaszewski_PhD,Bezhanishvili_G_N_2009} where canonical formulas are regarded as modified Jankov formulas of finite subdirectly irreducible algebras\footnote{The difference between these two approaches is outlined in \cite{Bezhanishvili_G_N_2009}.}. 

The notion of characteristic formula can be extended to finitely presentable s.i. algebras\footnote{The idea of using a presentation of algebra instead of its diagram was introduced in \cite{Citkin1977} for quasi-characteristic rules, and for varieties with equationally definable principal congruences (EDPC) a finite presentation was used in \cite{Blk_Pgz_1}, see also \cite{Wolter_PhD}[Definition 2.4.10].}. Utilizing a diagram formula is just one particular way to constructing some presentation. Using different formulas  defining algebra $\algA$ (as defining relation) one can construct syntactically different formulas each of which is interderivable with $\chi(\algA)$ and many of which are syntactically simpler than $\chi(A)$.  

Naturally a question arises whether one can use finite presentation for infinite algebras and in such a way to expand the notion of characteristic formula. The obvious negative answer to this question follows from \cite{Blk_Pgz_1} where it was observed that in finitely approximated varieties with EDPC (and, as it is well known, the variety $\Heyt$ of all Heyting algebras is finitely approximated and enjoy EDPC) every s.i. finitely presentable algebra is finite. Let us note the very important property of finite presentability: finite presentability is relative to a given variety and an algebra can be finitely presentable over some varieties while being not finitely presentable over others\footnote{The Heyting algebras finitely presentable over variety of all Heyting algebras are studied in \cite{Butz_1998}.}. 

Let us also recall that not every variety of Heyting algebras is finitely approximated. In fact, in \cite{Jankov_1968} using characteristic formulas V.A. Jankov proved that there is a continuum of not finitely approximated intermediate logics, thus, there is a continuum of not finitely approximated varieties of Heyting algebras. Therefore one can ask whether there are varieties of Heyting algebras containing infinite finitely presentable over them s.i. algebras. We give the positive answer to this question and we demonstrate that there is a continuum varieties of Heyting algebras containing infinite finitely presentable s.i. algebras. So one can construct the characteristic formulas relative to a particular variety and such relative characteristic formulas are not Jankov formulas. We also prove that there is a continuum of varieties of Heyting algebras (or intermediate logics) that can be axiomatized by relative characteristic formulas, but cannot be axiomatized by Jankov formulas.

M.Kracht \cite{Kracht_1990} was using finitely presentable algebras as a main tool while studying the splittings in non-transitive modal logics. As he pointed out the situation with non-transitive logics is totally different from transitive case since in non-transitive logics ``it is no longer true that only the finitely presentable, subdirectly irreducible (s.i.) algebras induce splittings''. Later it was observed that there are not finitely presentable splitting algebras in extensions of \textbf{GL} (cf. \cite{Kracht_Tools_1999}[Theorem 7.5.16]) and even in extensions of \textbf{S4} or \textbf{IPC} (cf. \cite{Citkin_Not_2012}). In the present paper we are not concerned with splitting algebras. Our goal is to demonstrate that in some intermediate logics, or in some varieties of Heyting algebras for this matter, there are infinite subdirectly irreducible finitely presentable algebras and we can make a use of characteristic formulas associated with this algebras. 

\section{Basic definitions}

We will consider Heyting algebras in the signature $\left\{\land, \lor, \impl, \neg \right\}$ and use \(\eqv\) as abbreviation: \(A \eqv B \bydef (A \impl B) \land (B \impl A)\). By $\Z_n$ we denote a $n$-element 1-generated Heyting algebra, so $\Z_2$ is a two-element Boolean algebra, $\Z_\infty$ is a Rieger-Nishimura ladder that we also will denote by $\Z$. If $\alga \in \algA$ by $\nabla(\alga)$ we denote a filter generated by element $\alga$.  

If $\algA, \algB$ are algebras by $\algA + \algB$ we denote a\textit{ concatenation}\footnote{The concatenations are often called ordered, linear or Troelstra sums. We are trying to avoid use of the term ``sum'' since it suggests some kind of commutativity which is not the case here.} of $\algA$ and $\algB$, that is $\algA + \algB$ is an algebra obtained by putting algebra $\algB$ onto $\algA$ and ``gluing'' the top element of $\algA$ and the bottom element of $\algB$. Let us observe the following rather simple property of concatenation.

\begin{prop} \label{concatf}If $\algA$ and $\algB$ are algebras and $\nabla \subseteq \algB$ is a filter, then
\begin{equation}
(\algA + \algB)/\nabla \cong \algA + \algB/\nabla.
\end{equation}
\end{prop}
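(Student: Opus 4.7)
The plan is to define a map $\phi\colon \algA + \algB \to \algA + \algB/\nabla$ that is the identity on $\algA$ and the canonical quotient map $b \mapsto [b]_\nabla$ on $\algB$. The two definitions agree at the gluing point since $\phi(1_\algA) = 1_\algA$ and $\phi(0_\algB) = [0_\algB]_\nabla = 0_{\algB/\nabla}$ coincide in the concatenation $\algA + \algB/\nabla$. First I would check that $\phi$ is a surjective Heyting homomorphism. Surjectivity is clear because every element of $\algA + \algB/\nabla$ lies either in $\algA$ or in $\algB/\nabla$, and both are covered by $\phi$. For preservation of the operations, the cases where both arguments lie in $\algA$, or both in $\algB$, are immediate since the operations of either concatenation restrict to the original operations on each summand. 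The mixed cases reduce to a brief computation using the basic identities in a concatenation (for instance, $a \land b = a$ and $a \lor b = b$ when $a \in \algA$ and $b \in \algB$), which transfer under $\phi$ directly.

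Next I would identify the kernel of $\phi$ with the congruence $\theta_\nabla$ on $\algA + \algB$ corresponding to $\nabla$; note that $\nabla$ is indeed a filter of $\algA + \algB$, since it is already upward closed and meet closed in $\algB$ and its elements lie above every element of $\algA$. Recall that $x \, \theta_\nabla \, y$ iff $x \eqv y \in \nabla$. For $x, y \in \algB$ this is exactly $x \eqv_\algB y \in \nabla$, the defining relation of $\algB/\nabla$, which matches $\phi$. For distinct $x, y \in \algA$ the element $x \eqv y$ lies in $\algA \setminus \{1_\algA\}$, hence outside $\nabla \subseteq \algB$, so such pairs are not identified, again matching $\phi$. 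For $x \in \algA \setminus \{1_\algA\}$ and $y \in \algB \setminus \{0_\algB\}$, a short direct computation in the concatenation shows $x \impl y = 1$ and $y \impl x = x$, so $x \eqv y = x$, which is again outside $\nabla$, consistent with the fact that $\phi(x) = x$ and $\phi(y) = [y]_\nabla$ land in different summands of $\algA + \algB/\nabla$. Thus $\ker \phi = \theta_\nabla$, and the first isomorphism theorem delivers $(\algA + \algB)/\nabla \cong \algA + \algB/\nabla$.

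I do not expect a real obstacle: the whole content is the behaviour of the Heyting operations across the two summands of a concatenation, which is the sort of routine verification the statement of the proposition already presumes. The only mild subtlety is keeping track of what the ``top'' means in $\algA + \algB$ versus in $\algA + \algB/\nabla$, so that elements of $\nabla$ are correctly seen to collapse to the top of $\algB/\nabla$ rather than to the gluing point $1_\algA = 0_{\algB/\nabla}$.
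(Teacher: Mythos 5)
Your proof is correct. Note that the paper itself offers no proof of this proposition: it is introduced as a ``rather simple property'' of concatenation and left to the reader, so there is no argument of the author's to compare yours against; the route you take (the map that fixes $\algA$ pointwise and acts as the canonical projection on $\algB$, checked to be a surjective homomorphism whose kernel is the congruence of $\nabla$, followed by the homomorphism theorem) is the natural way to fill that gap, and all the concatenation computations you invoke ($\alga \land \algb = \alga$, $\alga \lor \algb = \algb$, $\alga \impl \algb = \one$, $\algb \impl \alga = \alga$ for $\alga$ strictly below and $\algb$ strictly above the gluing point) are right. Two small remarks. First, the phrase that the operations of either concatenation ``restrict to the original operations on each summand'' is not literally true: for $\alga \leq \alga'$ in $\algA$ one has $\alga \impl \alga' = \one$, the top of the whole concatenation rather than $\one_{\algA}$, and similarly $\neg \algb$ computed in $\algA + \algB$ is the global bottom for every $\algb \in \algB$; this does no harm, because your map fixes $\algA$ pointwise and sends top to top and bottom to bottom, so the two computations still match, but the case analysis should acknowledge it rather than appeal to restriction. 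Second, the kernel computation can be shortened: since a congruence of a Heyting algebra is determined by the filter of elements congruent to $\one$, it suffices to observe that the preimage of the top under your map is exactly $\nabla$, viewed (as you correctly note) as a filter of $\algA + \algB$; the three-case pairwise verification is then unnecessary, though what you wrote is accurate.
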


Since in this paper we consider only Heyting algebras, we will simply say ``algebra''. By $\Heyt$ we denote a variety of all Heyting algebras.  

If $\algA$ is an algebra then $\classV(\algA)$ is a variety generated by algebra $\algA$. If $A$ and $B$ are (propositional) formulas and $p$ is a variable then by $A(B/p)$ we denote a result of the substitution of formula $B$ for variable $p$ in formula $A$. Strings of distinct variables are indicated by $\up, \uq$ and if $A$ contains variables only from the list $\up = p_1,\dots,p_n$, we express this fact by the notation $A(\up)$ or $A(p_1,\dots,p_n)$. Accordingly, if $\alga_1,\dots,\alga_n$ are elements of some algebra and $A(\up)$ is a formula we can write $A(\ualga)$ instead of $A(\alga_1,\dots,\alga_n)$.

Element $\alga$ is said to be \textit{regular} \cite{Rasiowa_Sikorski} if $\neg \neg \alga = \alga$, and element $\alga$ is said to be \textit{dense} \cite{Rasiowa_Sikorski} if $\neg \neg \alga = \one$. The set of all regular elements of algebra $\algA$ we denote by $Rg(\algA)$, the set of all dense elements of algebra $\algA$ we denote by $Dn(\algA)$. Clearly $Dn(\algA)$ is such a filter of algebra $\algA$ that $\algA/Dn(\algA)$ is a Boolean algebra and the natural homomorphism sends all dense elements of $\algA$ in $\one$. Moreover, $Dn(\algA)$ isomorphic as a lattice to $R(\algA)$ (e.g.\cite{Rasiowa_Sikorski}).

If $\classV$ is a variety and $A$ is a formula then by $\vdash_\classV A$ we denote the fact that the formula $A$ is valid in $\classV$, that is $\algA \vDash A$ for all $\algA \in \classV$. If $\classK$ is a class of algebras, by $\CSub\classK$ we denote a class of all subalgebras of algebras from $\classK$ and by $\CHom\classK$ we denote a class of all homomorphic images of all algebras from $\classK$. If class $\classK$ consists of just one algebra $\algA$ we will write $\CSub\algA$ and $\CHom\algA$.

Let us recall the following definition.

\begin{definition}\label{fpdef} (cf. \cite{MaltsevBook}) Let $\classV$ be a variety, $A(\up)$ be a formula and $\nu$ be a valuation in algebra $\algA$. Then a pair $\lbr A, \nu \rbr$ \textit{defines algebra} $\algA$ over $\classV$ if
\begin{enumerate}
\item Elements $\alga_1 = \nu(p_1),\dots, \alga_n = \nu(p_n)$ generate algebra $\algA$;
\item $A(\ua) = \one$;
\item For any formula $B(\up)$ if $B(\ua) = \one$ then $\vdash_\classV(A(\up) \impl B(\up))$.
\end{enumerate} 
Algebra $\algA$ is said to be \textit{finitely presentable over variety} $\classV$ if there exists a pair that defines algebra $\algA$ over variety $\classV$. We also will say that formula $A(\op)$ defines algebra $\algA$ over variety $\classV$ in generators $\ualga$ or that a pair $\lbr A,\nu \rbr$ is a presentation of algebra $\algA$ over $\classV$ (sometimes we will omit reference to $\classV$). 
\end{definition}

The following criterion is very useful.

\begin{prop}\cite{MaltsevBook}. \label{homext} Let $\classV$ be a variety, $A(\up)$ be a formula and $\nu$ be a valuation in algebra $\algA$. Then a pair $\lbr A, \nu \rbr$ \textit{defines algebra} $\algA$ over $\classV$ if and only if
\begin{enumerate}
\item Elements $\alga_1 = \nu(p_1),\dots, \alga_n = \nu(p_n)$ generate algebra $\algA$;
\item $A(\ua) = \one$;
\item if $\algB \in \classV$ is an algebra, $\algb_1,\dots,\algb_n \in \algB$ and $A(\algb_1,\dots,\algb) = \one_\algB$ then the mapping $\alga_i \mapsto \algb_i; i=1,\dots,n$ can be extended to homomorphism of $\algA$ in $\algB$. 
\end{enumerate} 
\end{prop}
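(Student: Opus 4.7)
The statement asks for equivalence between two formulations of condition (3) in the definition of a defining pair, since conditions (1) and (2) are identical on both sides. My plan is therefore to prove, assuming (1) and (2), that condition (3) of Definition \ref{fpdef} is equivalent to condition (3) of Proposition \ref{homext}.

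For the forward direction, I assume the definition holds and that $\algB \in \classV$ with $A(\ub) = \one_\algB$. Since $\ua$ generates $\algA$, every element of $\algA$ has the form $t(\ua)$ for some term $t$, so the only reasonable candidate for the extension is $h(t(\ua)) := t(\ub)$. Well-definedness is the crux: if $t_1(\ua) = t_2(\ua)$, I apply condition (3) of the definition to the formula $B(\up) := (t_1(\up) \eqv t_2(\up))$, which satisfies $B(\ua) = \one$. This yields $\vdash_\classV A(\up) \impl B(\up)$, and evaluating at $\ub$, where $A(\ub) = \one$, forces $t_1(\ub) = t_2(\ub)$. The homomorphism property is then automatic since $h$ is defined term-wise.

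For the converse, I assume the homomorphism extension property and let $B$ satisfy $B(\ua) = \one$. The clean route is to test the extension property against a single well-chosen witness algebra: take the quotient $\algF_\classV(\up)/\nabla(A(\up))$, where $\algF_\classV(\up)$ is the $\classV$-free algebra on the variables $\up$ and $\nabla(A(\up))$ is the filter generated by the value of $A$ at the free generators. By construction this quotient lies in $\classV$ and $A$ evaluates to $\one$ in it, so condition (3) of the proposition supplies a homomorphism $h : \algA \to \algF_\classV(\up)/\nabla(A(\up))$ sending each $\alga_i$ to the class $[p_i]$. Applying $h$ to the assumed equality $B(\ua) = \one$ yields $[B(\up)] = \one$, which by construction of the filter means $A(\up) \leq B(\up)$ already holds in $\algF_\classV(\up)$, equivalent to $\vdash_\classV A(\up) \impl B(\up)$.

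The main obstacle, as anticipated, is the well-definedness step in the forward direction: this is precisely the point at which the equational content of the syntactic condition is converted into the mapping content of the universal property. Everything else reduces to routine bookkeeping about generators, terms, and the filter-congruence correspondence in Heyting algebras.
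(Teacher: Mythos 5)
Your proof is correct. The paper gives no proof of this proposition at all---it is quoted from Mal'cev's book \cite{MaltsevBook}---and your argument is exactly the standard one from that source: the forward direction converts the syntactic condition into well-definedness of the term-wise map via the formula $t_1(\up) \eqv t_2(\up)$, and the converse tests the extension property against the single witness $\algF_{\classV}(\up)/\nabla(A(\up))$, using that an inequality at the free generators amounts to $\vdash_\classV A \impl B$; the only blemish is the cosmetic clash of using $B$ both for the test algebra and for the auxiliary formula.
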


\begin{remark} Since any variety is closed relative to homomorphisms, in Proposition \ref{homext} it is sufficient take into consideration only s.i. algebras $\algB \in \classV$. 
\end{remark}

If $\classV$ is a variety, by $SI(\classV)$ we will denote a class of all s.i. algebras from $\classV$, by $FP(\classV)$ - a class of all algebras finitely presentable over $\classV$, and by $FPSI(\classV)$ - a class of all finitely presentable over $\classV$ s.i. algebras.

Let us note also the following properties of finitely presentable algebras.

\begin{prop} \label{iso_1} If pairs $\lbr A(\up), \nu \rbr$ and $\lbr B(\up), \nu \rbr$ define over $\classV$ the same algebra $\algA$ then $\vdash_\classV A \eqv B$.
\end{prop}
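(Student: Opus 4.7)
The plan is to derive each implication directly from condition (3) of Definition \ref{fpdef}, exploiting the fact that both presentations share the same valuation $\nu$ and hence the same generators $\ualga = (\nu(p_1),\dots,\nu(p_n))$ of $\algA$.

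First I would observe that, since $\lbr B(\up), \nu \rbr$ is a presentation of $\algA$ over $\classV$, clause (2) of Definition \ref{fpdef} gives $B(\ua) = \one$. Applied to the presentation $\lbr A(\up), \nu \rbr$, clause (3) with the test formula taken to be $B(\up)$ itself then yields $\vdash_\classV (A(\up) \impl B(\up))$. Swapping the roles of $A$ and $B$ (using that $A(\ua) = \one$ holds by clause (2) of the first presentation, and clause (3) of the second presentation) delivers $\vdash_\classV (B(\up) \impl A(\up))$. Combining the two implications via the definition of $\eqv$ gives $\vdash_\classV A \eqv B$, as required.

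There is essentially no obstacle: the argument is a direct unfolding of Definition \ref{fpdef}. The only thing worth flagging is that the identity $\nu(p_i) = \alga_i$ is the same for both pairs by hypothesis, so the two defining formulas are being evaluated at the same tuple of generators; without this, step (3) of one presentation could not be applied to the other. Once this is noted, the proof is two applications of clause (3) and the definitional abbreviation $A \eqv B \bydef (A \impl B) \land (B \impl A)$.
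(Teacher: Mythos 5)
Your proof is correct and matches the paper's intent exactly: the paper's own proof consists of the single line ``Straight from the Definition \ref{fpdef}(3)'', and your argument is precisely that unfolding --- apply clause (2) of each presentation to feed clause (3) of the other, then combine the two implications into $\eqv$. Nothing to add.
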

\begin{proof} Straight from the Definition \ref{fpdef}(3). \end{proof}

\begin{prop} \cite[Theorem 5, Chap.V sect.11]{MaltsevBook} \label{dyck} Assume that formulas $A(\up)$ and $B(\up)$ define over variety $\classV$ algebras $\algA$ and $\algB$. Then if $\vdash_\classV B \impl A$ then $\algB$ is a homomorphic image of $\algA$. In particular, if $\vdash_\classV A \eqv B$ the algebras $\algA$ and $\algB$ are isomorphic. 
\end{prop}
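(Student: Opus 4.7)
The plan is to apply Proposition \ref{homext} directly to the defining pair of $\algA$. Fix presentations $\lbr A(\up),\mu\rbr$ of $\algA$ and $\lbr B(\up),\nu\rbr$ of $\algB$, and write $\alga_i = \mu(p_i)$, $\algb_i = \nu(p_i)$ for $i = 1,\dots,n$. By clause (2) of Definition \ref{fpdef}, $B(\algb_1,\dots,\algb_n) = \one_\algB$. Since $\algB \in \classV$ and $\vdash_\classV B \impl A$, the standard Heyting-algebra semantics yields $B(\ub) \leq A(\ub)$ in $\algB$, and therefore $A(\algb_1,\dots,\algb_n) = \one_\algB$ as well.

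With this on hand, clause (3) of Proposition \ref{homext} applied to the presentation $\lbr A,\mu\rbr$ of $\algA$ produces a homomorphism $h \colon \algA \to \algB$ sending $\alga_i$ to $\algb_i$. Because $\algb_1,\dots,\algb_n$ generate $\algB$ by clause (1) of Definition \ref{fpdef}, the image of $h$ is all of $\algB$, so $\algB$ is a homomorphic image of $\algA$. This settles the first assertion.

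For the second claim, under $\vdash_\classV A \eqv B$ I would run the same argument with the roles of $A$ and $B$ interchanged, using the implication $\vdash_\classV A \impl B$, to obtain a surjective homomorphism $g \colon \algB \to \algA$ with $g(\algb_i) = \alga_i$. The composition $g \circ h$ is then a homomorphism from $\algA$ to itself that fixes the generators $\alga_1,\dots,\alga_n$; similarly $h \circ g$ fixes the generators of $\algB$. Since homomorphisms between finitely generated algebras are determined by their values on a set of generators, both compositions coincide with the identity maps, hence $h$ and $g$ are mutually inverse isomorphisms.

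There is no serious obstacle here — the whole argument is essentially bookkeeping with the criterion in Proposition \ref{homext}. The only point meriting a moment's care is the semantic step translating $\vdash_\classV B \impl A$ into the pointwise inequality $B(\ub) \leq A(\ub)$ inside $\algB$; this is routine, since validity of an implication in a Heyting algebra is equivalent to the inequality between the interpretations of its antecedent and consequent under every valuation.
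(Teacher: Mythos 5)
Your argument is correct. Note that the paper does not prove Proposition \ref{dyck} at all --- it is quoted from Mal'cev's book with only a citation --- so there is no in-paper proof to compare against; what you give is essentially the standard derivation from the homomorphism-extension criterion of Proposition \ref{homext}: validity of $B \impl A$ in $\classV$ upgrades $B(\ub)=\one_\algB$ to $A(\ub)=\one_\algB$, the criterion then yields a homomorphism $\algA \to \algB$ whose image contains a generating set of $\algB$ and is hence surjective, and in the biconditional case the two surjections compose to maps fixing the respective generators, hence to the identities, so they are mutually inverse isomorphisms. Two minor remarks: you quietly use $\algB \in \classV$, which is legitimate because an algebra defined by a presentation over $\classV$ is a quotient of a $\classV$-free algebra and so lies in $\classV$ (this is also how the paper uses the notion), but it deserves a word; and the fact that a homomorphism is determined by its values on a generating set requires no finite generation --- any two homomorphisms with common domain agreeing on generators coincide --- so that hypothesis in your closing step is superfluous.
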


\begin{prop} \cite[Corollary 7, Chap.V sect.11]{MaltsevBook} \label{tietze} If an algebra $\algA$ is finitely presentable over $\classV$ then $\algA$ is finitely presentable in any set of its generators. 
\end{prop}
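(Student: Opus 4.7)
The plan is a standard change-of-generators (Tietze) argument. Suppose $\lbr A(\up), \nu \rbr$ defines $\algA$ over $\classV$ with $\nu(p_i) = \alga_i$, and let $\algb_1, \dots, \algb_m$ be any other finite generating set of $\algA$. Since each family generates $\algA$, one may pick formulas $t_j(\up)$ and $s_i(\uq)$ with $\algb_j = t_j(\ua)$ and $\alga_i = s_i(\ub)$, for $i = 1, \dots, n$ and $j = 1, \dots, m$. These are the ingredients I would combine into a new defining formula in the variables $\uq = q_1, \dots, q_m$.

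Namely, set
\begin{equation*}
B(\uq) \bydef A(s_1(\uq), \dots, s_n(\uq)) \land \bigwedge_{j=1}^{m}\bigl(q_j \eqv t_j(s_1(\uq), \dots, s_n(\uq))\bigr),
\end{equation*}
together with the valuation $\mu(q_j) = \algb_j$. Conditions (1) and (2) of Proposition \ref{homext} are then routine: the $\algb_j$ generate $\algA$ by hypothesis, and evaluating $B$ at $\ub$ yields $A(\ua) = \one$ in the first conjunct (since $s_i(\ub) = \alga_i$) together with $\algb_j \eqv t_j(\ua) = \algb_j$ in each equivalence.

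The crux is condition (3). Given $\algC \in \classV$ and elements $\algc_1, \dots, \algc_m \in \algC$ with $B(\ualgc) = \one$, set $\algc'_i \bydef s_i(\ualgc)$. The first conjunct of $B$ forces $A(\algc'_1, \dots, \algc'_n) = \one$, so condition (3) for the original presentation supplies a homomorphism $\varphi : \algA \to \algC$ with $\varphi(\alga_i) = \algc'_i$. Then $\varphi(\algb_j) = \varphi(t_j(\ua)) = t_j(\algc'_1, \dots, \algc'_n)$, while the $j$th conjunct of $B$ forces $\algc_j = t_j(\algc'_1, \dots, \algc'_n)$; hence $\varphi(\algb_j) = \algc_j$, as required. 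No step here is a serious obstacle; the one delicate design point is the second conjunct of $B$ — without those equivalences, the homomorphism produced by the original presentation would send $\algb_j$ to $t_j(\algc'_1,\dots,\algc'_n)$ rather than to the prescribed $\algc_j$, so $B$ must be rigged to force their agreement in every target algebra of $\classV$.
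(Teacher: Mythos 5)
Your argument is correct: the paper itself gives no proof of this proposition (it is quoted from Mal'cev's book), and your construction is exactly the standard Tietze-style change-of-generators argument via the extension criterion of Proposition \ref{homext}, which is the argument behind the cited corollary. The only reading you should make explicit is that ``any set of its generators'' means any \emph{finite} generating set (a defining pair involves only finitely many variables), which is precisely the case your formula $B(\uq)$ with the auxiliary equivalences $q_j \eqv t_j(s_1(\uq),\dots,s_n(\uq))$ handles correctly.
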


%%%%%%%%%%%%%%%%%%%%%%%%%%%%%%%%%%%%%%%%%%%%%
% Characteristic formulas
%%%%%%%%%%%%%%%%%%%%%%%%%%%%%%%%%%%%%%%%%%%%%

\section{Characteristic formulas}

Let us recall the definition of Jankov formula.

\begin{definition}\cite{Jankov_1968} Assume $\algA$ is a finite s.i. algebra and $\algA = \left\{\alga_1,\dots,\alga_n \right\}$. With every element $\alga_i \in \algA$ we associate a variable $p_i; \ i=1,\dots,n$. Let
\begin{equation}
\begin{split}
D(p_1,\dots,p_n) = & \bigwedge_{\alga_i \land \alga_j = \alga_k} (p_i \land p_j \eqv p_k) \land \\
 & \bigwedge_{\alga_i \lor \alga_j = \alga_k} (p_i \lor p_j \eqv p_k) \land \\
 & \bigwedge_{\alga_i \impl \alga_j = \alga_k} (p_i \impl p_j \eqv p_k) \land \\
 & \bigwedge_{\neg \alga_i = \alga_j} (\neg p_i \eqv p_j).
\end{split} \label{Jankov}
\end{equation}
for all $i,j,k \in \left\{1,\dots,n \right\}$. Formulas $D$ is a diagram formula of algebra $\algA$. Let $\alga_n$ be a \textit{opremum} of algebra $\algA$, that is, the greatest element among all distinct from $\one$ elements of $\algA$. Then formula
\begin{equation}
\chi(\algA) = D(p_1,\dots,p_n) \impl p_n
\end{equation}   
is called \textit{Jankov formula}.
\end{definition}

One of the most frequently used properties of Jankov formulas are presented in the following Proposition.

\begin{prop} \cite{Jankov_1969} \label{char} Assume $\algA$ is a finite s.i. algebra, $\algB$ is an algebra and $B$ is a formula. Then
\begin{itemize}
\item[(a)] if $\algB \nvDash \chi(\algA)$, then $\algA \in \CSub\CHom\algB$;
\item[(b)] if $\algA \nvDash B$, then $B \vdash_{IPC} \chi(\algA)$.
\end{itemize} 
\end{prop}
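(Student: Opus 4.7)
The plan is to reduce (a) to extracting an embedding $\algA \hookrightarrow \algB/\theta$ from a refuting valuation, and to reduce (b) to the same construction; both arguments hinge on the fact that in the s.i.\ algebra $\algA$ the monolith is generated by the pair $(\alga_n,\one)$, because $\alga_n$ is the opremum.

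For (a), suppose $\algB \nvDash \chi(\algA)$, so some valuation $\mu$ in $\algB$ gives $\mu(D(\up)) \not\leq \mu(p_n)$. Let $\theta$ be the Heyting-algebra congruence on $\algB$ corresponding to the filter $\nabla(\mu(D))$, and write $\overline{\algb}_i = \mu(p_i)/\theta$. Because $D$ records every atomic operation-equation of $\algA$, in $\algB/\theta$ the elements $\overline{\algb}_i$ satisfy precisely those equations. Since finite algebras are finitely presentable over any variety and $D$ is then a defining formula in the sense of Definition \ref{fpdef}, Proposition \ref{homext} turns the assignment $\alga_i \mapsto \overline{\algb}_i$ into a homomorphism $h:\algA \to \algB/\theta$. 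If $h$ were not injective, $\ker h$ would be a nontrivial congruence of the s.i.\ algebra $\algA$ and so would contain its monolith, forcing $h(\alga_n) = \one$; but then $\mu(p_n) \in \nabla(\mu(D))$, i.e.\ $\mu(D) \leq \mu(p_n)$, contradicting $\mu(\chi(\algA)) \neq \one$. Hence $h$ is an embedding and $\algA \in \CSub\CHom\algB$.

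For (b), let $\nu$ refute $B(\uq)$ in $\algA$, and write $\nu(q_j) = \alga_{i_j}$. Pick the substitution $\sigma$ defined by $\sigma(q_j) = p_{i_j}$. It suffices to prove $\vdash_{IPC} (D(\up) \land \sigma(B)) \impl p_n$, for then uniform substitution and modus ponens yield $B \vdash_{IPC} \chi(\algA)$. To establish this validity, fix an arbitrary Heyting algebra $\algC$ and a valuation $\mu$ on $\algC$ with $\mu(D) = \one$ and $\mu(\sigma(B)) = \one$; the goal is $\mu(p_n) = \one$. Exactly as in (a), $\mu(D) = \one$ produces a homomorphism $h:\algA \to \algC$ with $h(\alga_i) = \mu(p_i)$; by the choice of $\sigma$ we have $\mu(\sigma(B)) = h(\nu(B))$, so $h(\nu(B)) = \one$. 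Since $\nu(B) \neq \one$, $\ker h$ is nontrivial, contains the monolith of the s.i.\ algebra $\algA$, and hence $\mu(p_n) = h(\alga_n) = \one$.

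The point that most needs care is the universal-property step: that $\alga_i \mapsto \overline{\algb}_i$ (respectively $\alga_i \mapsto \mu(p_i)$) really is a homomorphism. This is where the calibration of the four conjuncts of $D$ to the Heyting signature $\{\land,\lor,\impl,\neg\}$ enters the argument: in a Heyting algebra the filter congruence of $\nabla(c)$ identifies $x$ with $y$ exactly when $c \leq x \eqv y$, so $\mu(D) \leq \mu((p_i \impl p_j) \eqv p_k)$ translates into $\overline{\algb}_i \impl \overline{\algb}_j = \overline{\algb}_k$ in $\algB/\theta$, and likewise for $\land$, $\lor$, $\neg$. Once this verification is in hand, the rest is the monolith argument, and no further calculation is needed.
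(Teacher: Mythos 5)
Your proof is correct, but note that the paper does not actually prove Proposition \ref{char} at all --- it is quoted from Jankov --- so the natural comparison is with the paper's proof of its generalization, Theorem \ref{charf}. Your part (a) is essentially that argument specialized to the diagram presentation: pass to the quotient of $\algB$ by the principal filter of $\mu(D)$, use the fact that the diagram formula presents the finite algebra (Proposition \ref{homext}, or equivalently your direct check that the four blocks of conjuncts force $\alga_i \mapsto \overline{\algb}_i$ to preserve the operations), and then use that any non-injective homomorphism of the s.i.\ algebra $\algA$ collapses the opremum, which would give $\mu(D) \le \mu(p_n)$. Your part (b), however, takes a genuinely different route: the paper deduces (b) from (a) by pure semantics (if $B \nvdash \chi(\algA)$, some algebra validates $B$ and refutes $\chi(\algA)$; by (a) then $\algA \in \CSub\CHom\algB$, and validity of $B$ passes to subalgebras of homomorphic images, contradicting $\algA \nvDash B$), whereas you give the classical Jankov-style syntactic argument: substitute $p_{i_j}$ for $q_j$ and prove $\vdash_{IPC} (D \land \sigma(B)) \impl p_n$ by the same monolith computation. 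Both are sound; the paper's route is shorter and needs nothing beyond (a), while yours yields an explicit derivation of $\chi(\algA)$ from $B$ but silently uses two standard facts worth flagging: strong algebraic completeness, i.e.\ the passage from ``every Heyting valuation with $\mu(D)=\mu(\sigma(B))=\one$ has $\mu(p_n)=\one$'' to provability of the implication (justified by quotienting an arbitrary algebra by the principal filter of $\mu(D \land \sigma(B))$), and the reading of $B \vdash_{IPC} \chi(\algA)$ as derivability with uniform substitution applied to the premise, which is indeed the intended reading of property (Ded).
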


The property (b) from the Proposition \ref{char} means that $\chi(\algA)$ is a weakest formula among formulas refutable in \(\algA\). Let us note that if $A_1$ and $A_2$ are two weakest formulas refutable in \(\algA\), then formulas $A_1$ and $A_2$ are inter-derivable in IPC. 

Now let us extend the definition of Jankov formulas to finitely presentable algebras.

\begin{definition} Assume $\classV$ is a variety and $\algA \in \classV$ is a s.i. algebra finitely presentable over $\classV$. Suppose $\lbr A(\up), \nu\rbr$ is a presentation of $\algA$ over $\classV$. If $B(\up)$ is such a formula that $\nu(B)$ is an opremum of $\algA$, then the formula
\begin{equation}
\chi_{_\classV}(\algA) = A(\up) \impl B(\up)
\end{equation}
is a \textit{characteristic formula of algebra }$\algA$ \textit{over variety }$\classV$.
\end{definition} 

First of all, let us note that since $\algA$ is s.i., it always has an opremum. And, since elements $\nu(p_1),\dots,\nu(p_n)$ generate algebra $\algA$ there alway is such a formula $B(\up)$ that $\nu(B)$ is an opremum. Thus, for any s.i. finitely presentable over $\classV$ algebra one can define a characteristic formula. Let us establish properties of characteristic formulas similar to those of Jankov formulas.

\begin{theorem} \label{charf} Assume $\algA \in FPSI(\classV)$, $\algB \in \classV$ is an algebra and $B$ is a formula. Then
\begin{itemize}
\item[(a)] if $\algB \nvDash \chi_{_\classV}(A)$, then $\algA \in \CSub\CHom\algB$;
\item[(b)] if $\algA \nvDash B$, then $B \vdash_\classV \chi_{_\classV}(A)$.
\end{itemize} 
\end{theorem}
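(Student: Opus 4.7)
The plan is to mimic Jankov's classical argument, replacing the role of the diagram formula by the defining formula from the given presentation $\lbr A(\up), \nu\rbr$ and appealing to the universal mapping criterion of Proposition \ref{homext} in place of the explicit description of the map on generators induced by the diagram.

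For part (a), I would start from a valuation $\mu$ in $\algB$ witnessing $\algB \nvDash A \impl B$, and set $\algb_i = \mu(p_i)$. Let $\nabla$ be the principal filter generated by $\mu(A(\up))$ in $\algB$, and pass to $\algB/\nabla$, which is again in $\classV$. By construction $A(\algb_1/\nabla, \ldots, \algb_n/\nabla) = \one$ in $\algB/\nabla$, so Proposition \ref{homext} produces a homomorphism $h \colon \algA \to \algB/\nabla$ sending each $\alga_i$ to $\algb_i/\nabla$. The inequality $\mu(A) \not\leq \mu(B)$ means $\mu(B) \notin \nabla$, hence $h(B(\ualga)) = \mu(B)/\nabla$ is distinct from $\one$. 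Since $B(\ualga)$ is the opremum $\omega$ of the subdirectly irreducible algebra $\algA$ and the monolith of a s.i.\ Heyting algebra is the congruence collapsing the opremum with $\one$, the kernel of $h$ must be trivial. Thus $h$ embeds $\algA$ into $\algB/\nabla$, proving $\algA \in \CSub\CHom\algB$.

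For part (b), I would assume $\algA \nvDash C$ and choose a refuting valuation $q_j \mapsto \algc_j \in \algA$. Because the elements $\ualga$ generate $\algA$, each $\algc_j$ can be expressed as $D_j(\ualga)$ for some formula $D_j(\up)$, so $C(D_1(\ualga), \ldots, D_k(\ualga)) \neq \one$. Subdirect irreducibility then forces this element below the opremum $B(\ualga)$, yielding $(C(D_1, \ldots, D_k) \impl B)(\ualga) = \one$ in $\algA$. Applying clause (3) of Definition \ref{fpdef} to this formula gives $\vdash_\classV A \impl (C(D_1, \ldots, D_k) \impl B)$, so $A, C(D_1, \ldots, D_k) \vdash_\classV B$. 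Substitution of $D_j$ for $q_j$ converts $C(\uq)$ into $C(D_1, \ldots, D_k)$, so $A, C \vdash_\classV B$, and the deduction theorem yields $C \vdash_\classV A \impl B = \chi_\classV(\algA)$.

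The main obstacle I anticipate is the injectivity argument in (a): one needs the standard fact that a congruence on a subdirectly irreducible Heyting algebra is trivial precisely when it does not collapse the opremum with $\one$, together with the verification that the opremum really does survive the quotient, which is exactly what the hypothesis $\mu(A) \not\leq \mu(B)$ is engineered to guarantee. The rest of the proof is bookkeeping plus direct applications of Definition \ref{fpdef} and the deduction theorem, which are available uniformly in any variety of Heyting algebras.
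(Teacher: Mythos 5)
Your proposal is correct. Part (a) is essentially the paper's argument: the paper also passes to a homomorphic image of $\algB$ in which $A$ evaluates to $\one$ while $B$ does not (you merely make the quotient by the principal filter generated by $\mu(A)$ explicit), applies Proposition \ref{homext} to get a homomorphism $\algA \to \algB'$, and concludes injectivity from the fact that a non-injective homomorphism of a s.i.\ algebra collapses the opremum to $\one$. Part (b), however, is where you genuinely diverge. The paper deduces (b) from (a) by a short semantic contrapositive: if some $\algB \in \classV$ validated $B$ but refuted $\chi_{_\classV}(\algA)$, then by (a) $\algA \in \CSub\CHom\algB$, and since validity is inherited by subalgebras of homomorphic images, $\algA \vDash B$, a contradiction. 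You instead give a direct syntactic derivation in the spirit of Jankov's original proof of the (Ded) property: express the refuting elements as terms $D_j(\ualga)$ in the generators, use that every non-unit element of a s.i.\ algebra lies below the opremum, and invoke clause (3) of Definition \ref{fpdef} to obtain $\vdash_\classV A \impl (C(D_1,\dots,D_k) \impl B)$. This is sound and arguably more informative (it exhibits the derivation explicitly rather than appealing to algebraic completeness), at the cost of being longer. One small caution: since $\vdash_\classV$ with premises is the schematic (substitution-closed) consequence, the deduction theorem is not available in general; but your detour through it is unnecessary, because from $\vdash_\classV C(D_1,\dots,D_k) \impl (A \impl B)$ and the substitution instance $C(D_1,\dots,D_k)$ of the premise $C$, modus ponens already gives $C \vdash_\classV A \impl B = \chi_{_\classV}(\algA)$.
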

\begin{proof} 
(a) Suppose $\algB \nvDash \chi_{_\classV}(A)$. By definition of characteristic formula $\chi_{_\classV}(\algA) = A(\up) \impl B(\up)$ where $\lbr A(\up), \nu \rbr$ is a defining pair and $\nu(B(\up)) = \alga \in \algA$ is an opremum of algebra $\algA$. Thus, since $\algB \nvDash (A(\up) \impl B(\up))$, for some homomorphic image $\algB'$ of algebra $\algB$ and some elements $\algb_1,\dots,\algb_n \in \algB'$ we have
\begin{equation}
A(\algb_1,\dots,\algb_n) = \one_\algB \text{ and } B(\algb_1,\dots,\algb_n) \neq \one_\algB. \label{refut1}
\end{equation}
By Proposition \ref{homext} the mapping $\phi: \nu(p_i) \mapsto \algb_i; \ i=1,\dots,n$ can be extended to homomorphism $\phi: \algB \impl \algB'$. Let us observe that 
\begin{equation}
\phi(\nu(B)) = B(\algb_1,\dots,\algb_n) \neq \one_\algB. \label{pretop}
\end{equation}
Recall that the opremum of a Heyting algebra is in a kernel of any proper homomorphism. Hence, from \eqref{pretop} it follows that $\phi$ is a isomorphism. Thus, algebra $\algA$ is embeddable into $\algB'$ and $\algA \in \CSub\CHom\algB$.   

(b) Assume the contrary: $\algA \nvDash B$ and  $B \nvdash_\classV \chi_{_\classV}(A)$. If $B \nvdash_\classV \chi_{_\classV}(A)$, then for some algebra $\algB \in \classV$ we have 
\begin{equation}
\algB \vDash B \text{ and } \algB \nvDash \chi_{_\classV}(\algA). 
\end{equation}
But we just have proven that, if $\algB \nvDash \chi_{_\classV}(\algA)$, then $\algA \in \CSub\CHom\algB$. Thus, if $\algB \vDash B$ then $\algA \vDash B$ and this contradict the assumption.   
\end{proof}
\begin{cor} \label{unique} Let $\classV$ be a variety and $\algA \in FPSI(\classV)$. Then all characteristic formulas of algebra $\algA$ over $\classV$ (regardless of which presentation we used) are inter-derivable over $\classV$.
\end{cor}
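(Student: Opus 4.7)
The plan is to derive the result directly from Theorem \ref{charf}(b), after first recording the basic sanity check that any characteristic formula $\chi_{_\classV}(\algA)$ is itself refutable in $\algA$. This is immediate from the definition: writing $\chi_{_\classV}(\algA) = A(\up) \impl B(\up)$ where $\lbr A,\nu \rbr$ is a presentation of $\algA$ over $\classV$ and $\nu(B)$ is an opremum $\alga$ of $\algA$, clause (2) of Definition \ref{fpdef} gives $\nu(A) = \one$, so that
\begin{equation}
\nu\bigl(A(\up) \impl B(\up)\bigr) = \one \impl \alga = \alga \neq \one .
\end{equation}
Hence $\algA \nvDash \chi_{_\classV}(\algA)$ for every choice of presentation.

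Now let $\chi_1$ and $\chi_2$ be any two characteristic formulas of $\algA$ over $\classV$, arising from possibly different presentations of $\algA$. By the observation above, $\algA \nvDash \chi_2$; applying Theorem \ref{charf}(b) with $B := \chi_2$ and the characteristic formula taken to be $\chi_1$ yields $\chi_2 \vdash_\classV \chi_1$. Interchanging the roles of $\chi_1$ and $\chi_2$ gives the reverse derivation, so $\chi_1$ and $\chi_2$ are inter-derivable over $\classV$.

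There is essentially no obstacle here: the corollary is a one-line consequence of the universal (weakest-refutable-formula) property encoded in Theorem \ref{charf}(b). The only step needing an argument is the self-refutation property of $\chi_{_\classV}(\algA)$, which is forced by the fact that in an s.i. algebra the opremum is strictly below $\one$.
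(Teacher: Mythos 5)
Your proposal is correct and is exactly the argument the paper intends: the corollary is stated without proof as an immediate consequence of Theorem \ref{charf}(b), using precisely your observation that every characteristic formula is refuted in $\algA$ because the defining valuation sends it to the opremum, which is strictly below $\one$ in an s.i. algebra. Applying Theorem \ref{charf}(b) in both directions, as you do, gives inter-derivability over $\classV$.
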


The Corollary \ref{unique} means that a finitely presentable over $\classV$ s.i. algebra defines unique modulo inter-derivability in \(\classV\) characteristic formula.
 
%%%%%%%%%%%%%%%%%%%
% independence
%%%%%%%%%%%%%%%%%%%

Let $\classV$ be a variety. A set of formulas is called \textit{$\classV$-independent} if no one formula of this set is derivable over $\classV$ from the rest of formulas. A $\Heyt$-independent set of formulas we will call \textit{independent}. 

On the set $\Heyt$ of all Heyting algebras we can define the following quasi-order: $\algA \leq \algB$ if $\algA \in \CSub\CHom\algB$. The reflexivity of $\leq$ is trivial, while transitivity follows from the fact that variety of Heyting algebras has a congruence extension property (see, for instance,  \cite{Blk_Pgz_1}). A class $\classK$ of algebras is said to be an \textit{antichain} if for any $\algA,\algB \in \classK$ we have $\algA \not\leq \algB$ and $\algB \not\leq \algA$   

\begin{cor}\label{indep} Let $\classV$ be a variety and $\classK \subseteq FPSI(\classV)$. $\classK$ is an antichain  if and only if the set $\left\{\chi_{_\classV}(\algA); \ \algA \in \classK\right\}$ is $\classV$-independent.
\end{cor}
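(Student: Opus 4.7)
The plan is to upgrade Theorem~\ref{charf}(a) to a biconditional and then exploit transitivity of $\leq$. Specifically, for $\algA \in FPSI(\classV)$ and $\algC \in \classV$ I would first establish that $\algC \nvDash \chi_{_\classV}(\algA)$ if and only if $\algA \leq \algC$. The left-to-right direction is Theorem~\ref{charf}(a). For the reverse, $\algA$ itself refutes $\chi_{_\classV}(\algA)$ under its defining valuation $\nu$, since $\nu(A) = \one$ while $\nu(B)$ is the opremum, so $\nu(A \impl B)$ equals the opremum, which is not $\one$. Because refutation is inherited upward under subalgebra inclusion and lifts from a quotient $\algC/\theta$ back to $\algC$ (pull a refuting valuation back along the quotient map), whenever $\algA$ embeds into a homomorphic image of $\algC$ the algebra $\algC$ also refutes $\chi_{_\classV}(\algA)$.

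For the forward direction (antichain $\Rightarrow$ independence), I would use $\algA$ itself as a separating model. Fix $\algA \in \classK$; by the biconditional $\algA \nvDash \chi_{_\classV}(\algA)$. For every $\algB \in \classK$ with $\algB \neq \algA$, the antichain hypothesis gives $\algB \not\leq \algA$, hence $\algA \vDash \chi_{_\classV}(\algB)$. So $\algA \in \classV$ validates every other member of the family while refuting $\chi_{_\classV}(\algA)$, which shows $\chi_{_\classV}(\algA)$ is not a consequence over $\classV$ of $\{\chi_{_\classV}(\algB) : \algB \in \classK,\ \algB \neq \algA\}$.

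For the converse I argue contrapositively. If $\classK$ is not an antichain, there exist distinct $\algA, \algB \in \classK$ with, say, $\algA \leq \algB$. Transitivity of $\leq$ on $\Heyt$, noted in the excerpt and ultimately coming from the congruence extension property, then yields $\algB \leq \algC \Rightarrow \algA \leq \algC$ for every $\algC \in \classV$. Through the biconditional this reads $\algC \nvDash \chi_{_\classV}(\algB) \Rightarrow \algC \nvDash \chi_{_\classV}(\algA)$, equivalently $\chi_{_\classV}(\algA) \vdash_\classV \chi_{_\classV}(\algB)$, so $\chi_{_\classV}(\algB)$ is derivable from another member of the family, breaking independence. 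The one subtlety worth attention is keeping the direction of implication straight: since $\leq$ measures ``being present in'' rather than ``containing'', a smaller algebra in the $\leq$ order yields a logically \emph{stronger} characteristic formula; once that orientation and the biconditional above are in hand, everything follows from Theorem~\ref{charf} and transitivity of $\leq$.
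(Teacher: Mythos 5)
Your proof is correct and follows essentially the same route as the paper: the forward direction uses each $\algA \in \classK$ as its own separating model (it refutes $\chi_{_\classV}(\algA)$ under the defining valuation while, by Theorem~\ref{charf}(a) and the antichain hypothesis, it validates every other member's characteristic formula), and the converse extracts $\chi_{_\classV}(\algA) \vdash_\classV \chi_{_\classV}(\algB)$ from a comparability $\algA \leq \algB$, contradicting $\classV$-independence. The only cosmetic difference is that where the paper cites Theorem~\ref{charf}(b), you inline its semantic proof by packaging Theorem~\ref{charf}(a) together with the upward transfer of refutation along $\CSub\CHom$ into a biconditional and then using transitivity of $\leq$.
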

\begin{proof} Let $\classK \subseteq FPSI(\classV)$ be an antichain. Then if $\algA \in \classK$ we have $\algA \nvDash \chi_{_\classV}(\algA)$, but $\algB \vDash \chi_{_\classV}(\algA)$, because $\classK$ is an antichain and, by Theorem \ref{charf}(a) $\algA \notin \CSub\CHom\algB$. 

Conversely, assume the contrary: $\algB \in \CSub\CHom\algA$. Then, since $\algB \nvDash \chi_{_\classV}(\algB)$, we have $\algA \nvDash \chi_{_\classV}(\algB)$. By virtue of Theorem \ref{charf}(b), $\chi_{_\classV}(\algB) \vdash_\classV \chi_{_\classV}(\algA)$. And the latter contradicts $\classV$-independence. 
\end{proof}

Let us note that if \(\classV_1,\classV_2\) are varieties and \(\classV_1 \subseteq \classV_2\) then \(\classV_1\)-independence yields \(\classV_2\)-independence. Thus, for every variety \(\classV\) any $\classV$-independent set of formulas is independent.

The following corollary is a consequence of the previous one.

\begin{cor}\label{indep1} Let $\classV$ be a variety and $\classK \subseteq FPSI(\classV)$ and $\classK$ is an antichain. Then set $\left\{\chi_{_\classV}(\algA); \ \algA \in \classK\right\}$ is independent.
\end{cor}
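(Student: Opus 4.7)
The plan is immediate: this corollary is a direct consequence of Corollary \ref{indep} combined with the monotonicity remark that precedes it. First, I would apply Corollary \ref{indep} to the hypothesis that $\classK \subseteq FPSI(\classV)$ is an antichain, obtaining that the set $\{\chi_{_\classV}(\algA);\ \algA \in \classK\}$ is $\classV$-independent. Then I would invoke the observation that $\classV_1 \subseteq \classV_2$ implies $\classV_1$-independence yields $\classV_2$-independence, specialized to the inclusion $\classV \subseteq \Heyt$, to conclude $\Heyt$-independence — which by the convention introduced just before is the same as independence.

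The only step that requires a moment of verification is the monotonicity claim. The reason it holds is that $\vdash_\classV$ becomes weaker as $\classV$ grows: if $\classV_1 \subseteq \classV_2$, then any $\algA \in \classV_1$ lies also in $\classV_2$, so $\vdash_{\classV_2}A$ entails $\vdash_{\classV_1}A$. Taking contrapositives in the statement ``$A$ is not derivable over $\classV_i$ from the other formulas'' shows that non-derivability, and hence independence, is preserved when passing from the smaller variety $\classV_1$ to the larger variety $\classV_2$. Since $\classV \subseteq \Heyt$ for every variety $\classV$ of Heyting algebras, applying this to the $\classV$-independent family produced in the first step finishes the proof. There is no genuine obstacle here — the content of the corollary is entirely absorbed into Corollary \ref{indep} and the preceding remark.
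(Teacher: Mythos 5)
Your proof is correct and matches the paper's intended argument exactly: the paper derives Corollary \ref{indep1} as an immediate consequence of Corollary \ref{indep} together with the remark that $\classV_1 \subseteq \classV_2$ makes $\classV_1$-independence yield $\classV_2$-independence, applied with $\classV_2 = \Heyt$. Your verification of the monotonicity step (countermodels in the smaller variety remain countermodels in the larger one) is the right justification, so nothing is missing.
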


\begin{remark} It is obvious that any finite s.i. algebra $\algA$ is finitely presentable (over $\Heyt$). Let us observe that diagram formula $D$ in the definition of Jankov formula \eqref{Jankov} defines algebra $\algA$ in the trivial set of generators: the set of all elements of algebra $\algA$. On the other hand, if as a set of generators we take a set of all distinct from $\one$ $\lor$-irreducible elements and use a diagram relations in order to construct a defining formula, we obtain a formula interderivable with de Jongh formula \cite{deJongh_Th} (or frame-based formula \cite{Bezhanishvili_N_PhD}) of algebra $\algA$.    
\end{remark}

Let us recall \cite{McKenzie_1972} that an algebra \(\algA\) from a variety \(\classV\) is call \textit{splitting in the variety} \(\classV\) if there is the greatest subvariety of \(\classV\) not containing algebra \(\algA\).

\begin{remark} \label{split} From Theorem \ref{charf} it is immediately follows any algebra from \(FPSI(\classV)\) is a splitting in the variety \(\classV\).
\end{remark}

%%%%%%%%%%%%%%%%%%%%%%%%%%%%%%%%%%%%%%%%%%%
% Example of infinite finitely presentable s.i. Heyting algebra
%%%%%%%%%%%%%%%%%%%%%%%%%%%%%%%%%%%%%%%%%%%

\section{ An example of infinite finitely presentable s.i. Heyting algebra}

The goal of this section is to give an example of a variety $\classV$ and an infinite algebra $\Z' \in FPSI(\classV)$. Then, in the following section, based on this example we will construct an infinite set of such algebras. More precisely, we will show that algebra $\Z' = \Z \times \Z_2 + \Z_2$ depicted in Fig. 1(a) (the corresponding frame is depicted in Fig. 1(c)) is finitely presentable over every variety generated by an algebra $\Z \times \Z_2 + \algA$, where $\algA$ is any non-degenerate algebra.

\[
\ctdiagram{
\ctnohead
\ctinnermid
\ctel 0,15,15,0:{}
\ctel 15,0,45,30:{}
\ctel 0,15,45,60:{}
\ctel 0,15,15,30:{}
\ctel 30,15,0,45:{}
\ctel 45,30,5,70:{}
\ctel 0,45,25,70:{}
\ctel 45,60,35,70:{}
%%----------- second layer
\ctel 0,25,15,10:{}
\ctel 15,10,45,40:{}
\ctel 0,25,45,70:{}
\ctel 0,25,15,40:{}
\ctel 30,25,0,55:{}
\ctel 45,40,10,75:{}
\ctel 0,55,20,75:{}
\ctel 45,70,40,75:{}
\ctel 20,110,20,125:{}
\ctel 15,95,20,100:{}
\ctel 15,105,20,110:{}
\ctel 25,95,20,100:{}
\ctel 25,105,20,110:{}
%%----------- between layers
\ctel 15,0,15,10:{}
\ctel 0,15,0,25:{}
\ctel 30,15,30,25:{}
\ctel 45,30,45,40:{}
\ctel 15,30,15,40:{}
\ctel 30,45,30,55:{}
\ctel 45,60,45,70:{}
\ctel 0,45,0,55:{}
\ctel 15,60,15,70:{}
\ctel 20,100,20,110:{}
%%----------- nodes 1st layer
\ctv 15,0:{\bullet}
\ctv 0,15:{\bullet}
\ctv 30,15:{\bullet}
\ctv 35,15:{\oalga}
\ctv 45,30:{\bullet}
\ctv 15,30:{\bullet}
\ctv 30,45:{\bullet}
\ctv 45,60:{\bullet}
\ctv 0,45:{\bullet}
\ctv 15,60:{\bullet}
\ctv 20,85:{. \ . \ . \ . \ . \ . \ . \ . \ .}
\ctv 20,100:{\bullet}
%%----------- nodes 2nd layer
\ctv 15,10:{\bullet}
\ctv 15,17:{\oalgb}
\ctv 0,25:{\bullet}
\ctv 30,25:{\bullet}
\ctv 45,40:{\bullet}
\ctv 15,40:{\bullet}
\ctv 30,55:{\bullet}
\ctv 45,70:{\bullet}
\ctv 55,70:{\oalga^7}
\ctv 0,55:{\bullet}
\ctv 15,70:{\bullet}
\ctv 20,90:{\overbrace{. \ . \ . \ . \ . \ . \ . \ . \ .}^{}}
\ctv 20,110:{\bullet}
\ctv 20,125:{\bullet}
%%%================= RN
\ctel 100,15,115,0:{}
\ctel 115,0,145,30:{}
\ctel 115,0,145,30:{}
\ctel 100,15,145,60:{}
\ctel 100,15,115,30:{}
\ctel 130,15,100,45:{}
\ctel 145,30,110,65:{}
\ctel 100,45,120,65:{}
\ctel 145,60,140,65:{}
%%-----------------
\ctv 115,0:{\bullet}
\ctv 115,-10:{0}
\ctv 100,15:{\bullet}
\ctv 93,15:{\algr_2}
\ctv 130,15:{\bullet}
\ctv 137,15:{\algg}
\ctv 145,30:{\bullet}
\ctv 153,30:{\algr_1}
\ctv 115,30:{\bullet}
\ctv 105,30:{\algt}
\ctv 130,45:{\bullet}
\ctv 145,60:{\bullet}
\ctv 100,45:{\bullet}
\ctv 115,60:{\bullet}
\ctv 120,80:{\overbrace{. \ . \ . \ . \ . \ . \ . \ . \ .}^{}}
\ctv 120,90:{\bullet}
\ctv 130,90:{1}
%%%================= frame
\ctel 225,0,240,15:{}
\ctel 225,0,210,15:{}
\ctel 195,45,195,95:{}
\ctel 225,45,225,95:{}
\ctel 195,95,225,65:{}
\ctel 195,80,225,50:{}
\ctel 195,65,210,50:{}
\ctel 195,50,200,45:{}
\ctel 225,95,195,80:{}
\ctel 225,80,195,65:{}
\ctel 225,65,195,50:{}
\ctel 225,50,220,45:{}
%%%-----------------
\ctv 225,0:{\bullet}
\ctv 210,15:{\bullet}
\ctv 240,15:{\bullet}
\ctv 247,15:{\oalgb}
\ctv 195,35:{.}
\ctv 195,40:{.}
\ctv 195,45:{.}
\ctv 195,50:{\bullet}
\ctv 195,65:{\bullet}
\ctv 195,80:{\bullet}
\ctv 195,95:{\bullet}
\ctv 225,35:{.}
\ctv 225,40:{.}
\ctv 225,45:{.}
\ctv 225,50:{\bullet}
\ctv 225,65:{\bullet}
\ctv 225,80:{\bullet}
\ctv 225,95:{\bullet}
\ctv 233,95:{\oalga}
%%%-----------------
%%\ctv 15,-35:{\Z' = \Z \times \Z_2 + \Z_2}
\ctv 15,-25:{(a)}
%%\ctv 115,-25:{\Z}
\ctv 115,-25:{(b)}
%%\ctv 210,-25:{\Z'}
\ctv 210,-25:{(c)}
}
\]

\begin{center}
Figure 1. An example of s.i. finitely presentable algebra
\end{center}

The elements of algebra $\Z \times \Z_2$ we will regard as pairs $\lbr\alga,\algb\rbr$ where $\alga \in \Z$ and $\algb \in \left\{0,1\right\}$. So, $\oalgb = \lbr0,1 \rbr$ and $\lbr 1,1 \rbr$ is an opremum of algebra $\Z'$ (see Fig.1). Let us note that elements $\oalga, \oalgb$ generate algebra $\Z'$.

Our goal is to prove the following theorem.

\begin{theorem} \label{mainth} Let $\algA$ be any non-degenerate Heyting algebra and $\Z^* = \Z \times \Z_2 + \algA$. Then algebra $\Z'$ is finitely presentable over variety $\classV(\Z^*)$.
\end{theorem}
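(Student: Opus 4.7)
The plan is to construct an explicit presentation $\lbr A(p,q),\nu\rbr$ with $\nu(p) = \oalga$ and $\nu(q) = \oalgb$, and then apply Proposition \ref{homext} (using the remark after it to restrict attention to s.i.\ test algebras). The first preliminary step is to verify that $\oalga, \oalgb$ actually generate $\Z'$: every element $\langle x, y\rangle$ of $\Z \times \Z_2$ can be obtained as a Heyting term in $\oalga = \langle p_0, 0\rangle$ and $\oalgb = \langle 0,1\rangle$ (since the Rieger--Nishimura generator $p_0$ generates $\Z$), the opremum $\algz = \langle 1,1\rangle$ is expressible as a suitable term in the generators, and $\one_{\Z'}$ is then $\algz \lor \neg\algz$ or an analogous expression.

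Next I would build $A(p,q)$ as a conjunction of finitely many equations (each of the form $\alpha \eqv \beta$) that together encode: incompatibility $p \land q \eqv \zero$; regularity $\neg\neg q \eqv q$; and an identity $t(p,q) \eqv s(p,q)$ for carefully chosen Heyting terms pinning down the passage from the $\Z\times\Z_2$-portion into the cap — intuitively, an equation forcing the existence of an opremum-like element just above the subalgebra generated by $p,q$. Verification that $A(\oalga,\oalgb) = \one_{\Z'}$ is a direct computation in $\Z\times\Z_2 + \Z_2$.

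The bulk of the proof lies in verifying clause (3) of Proposition \ref{homext}: given s.i.\ $\algB \in \classV(\Z^*)$ and elements $\algb_1,\algb_2 \in \algB$ with $A(\algb_1,\algb_2) = \one_\algB$, I must extend the assignment $\oalga \mapsto \algb_1$, $\oalgb \mapsto \algb_2$ to a homomorphism $\Z' \to \algB$. By J\'onsson's lemma, $\algB \in \CHom\CSub\CProd_u(\Z^*)$, so $\algB$ retains the concatenation form $\Z\times\Z_2 + \algA$ up to ultrapower, subalgebra, and quotient. The equations of $A$ force $\algb_1, \algb_2$ to lie in the ``$\Z\times\Z_2$-part'' of $\algB$ in positions analogous to $\oalga, \oalgb$ in $\Z'$, and in particular guarantee that the cap sits above them; the subalgebra $\langle \algb_1, \algb_2\rangle_\algB$ is then seen to be a homomorphic image of $\Z'$, which yields the desired extension via the universal property of the presentation.

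The main obstacle is the design of $A$. Since $\Z$ is not finitely presentable over $\Heyt$, no finite set of Heyting equations can by itself pin down the infinite ladder structure; the argument must genuinely exploit the variety $\classV(\Z^*)$, and in particular the presence of the nondegenerate cap $\algA$, to ensure that the relations needed to recover the ladder become forced once the cap-related identities of $A$ are imposed. Showing that nondegeneracy of $\algA$ is sufficient — no matter how small $\algA$ is — and that no pathological s.i.\ algebra in $\classV(\Z^*)$ can satisfy $A$ while generating something other than a quotient of $\Z'$ is where the technical work of the proof must concentrate.
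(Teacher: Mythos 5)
Your plan starts in the same place as the paper (a presentation in the two generators $\oalga,\oalgb$, with conjuncts $\neg(p\land q)$ and $\neg\neg q\impl q$), but it has a genuine gap: the entire technical content is deferred. You never specify the third relation, and that relation is not a minor detail — the paper uses the concrete Rieger--Nishimura polynomial $p^{10}=(\neg\neg p\impl p)\lor((\neg\neg p\impl p)\impl(p\lor\neg p))$ and the conjunct $p^{10}\impl(q\lor\neg q)$, and the whole case analysis hinges on computing where this term sits in $\Z\times\Z_2$ (it evaluates to $\oalga^7$, strictly above $\algd\lor\neg\algd$ in the critical subcase). More importantly, you give no mechanism for transferring an arbitrary relation $B(\oalga,\oalgb)=\one$ to the test algebra. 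The paper's mechanism is its Lemma \ref{lcase_0}: since $\Z$ is the free $1$-generated Heyting algebra with free generator $\algg$, $B(\oalga,\oalgb)=\one_{\Z'}$ forces $B(\algg,\zero)=\one_\Z$ and hence $IPC\vdash B(p,q\land\neg q)$ — a syntactic fact usable in every algebra — supplemented by the Boolean-pair computations (Corollary \ref{BoolB}, Corollary \ref{cor_d4}). Without something playing this role, your claim that the equations force $\algb_1,\algb_2$ into ``positions analogous to $\oalga,\oalgb$'' and that the generated subalgebra is a quotient of $\Z'$ is unsupported; and your last step is circular, since extending the specific assignment $\oalga\mapsto\algb_1,\ \oalgb\mapsto\algb_2$ to a homomorphism \emph{is} the universal property you are trying to prove, so you cannot invoke it.

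There is also a strategic difference that makes your route harder than the paper's. You propose to verify Proposition \ref{homext}(3) for every s.i. $\algB\in\classV(\Z^*)$ and control these via J\'onsson's lemma, i.e.\ via quotients of subalgebras of ultrapowers of $\Z^*$; ultrapowers of $\Z^*$ contain nonstandard ladder elements and do not simply ``retain the concatenation form,'' so this class is not easy to describe. The paper instead verifies Definition \ref{fpdef}(3) directly: $A\impl B$ is an identity, so its validity in $\classV(\Z^*)$ reduces to validity in $\Z^*$ itself, hence to checking the finitely describable s.i. homomorphic images of $\Z^*$ (quotients by filters containing or not containing the top of $\Z\times\Z_2$), where an explicit case analysis on $\algc$ dense, regular, or neither is feasible. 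As written, your proposal is a reasonable outline of the setup but omits the ideas that constitute the proof.
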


We will prove that the formula
\begin{equation}
A = \neg(p \land q) \land (\neg \neg q \impl q) \land ( p^{10} \impl (q \lor \neg q)),
\end{equation}
where $p^{10} = (\neg \neg p \impl p) \lor ((\neg \neg p \impl p) \impl (p \lor \neg p))$, and the valuation $\nu : p \mapsto \oalga, \ \nu : q \mapsto \oalgb$ define algebra $\Z'$ over $\classV$.

Let us observe that formula $A$ is equivalent to the following formula:
\begin{equation}
\begin{split}
& \neg(p \land q) \land (\neg \neg q \impl q) \land \\
& ((\neg \neg p \impl p )\impl (q \lor \neg q)) 
\land (((\neg \neg p \impl p) \impl (p \lor \neg p)) \impl (q \lor \neg q)). \label{presform}
\end{split}
\end{equation}

In order to prove the theorem we need to establish some properties of formulas valid on elements $\oalga, \oalgb \in \Z'$.
\subsection{Auxiliary lemmas}

\begin{lemma} \label{lcase_0} Suppose $B(p,q)$ is a formula and $B(\oalga,\oalgb) = \one_{\Z'}$. Then 
\begin{equation}
IPC \vdash B(p,q \land \neg q). \label{case_0}
\end{equation}
\end{lemma}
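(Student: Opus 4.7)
The plan is to reduce the derivability to a single identity in the Rieger-Nishimura ladder $\Z$ and then transport it from $\Z'$ via a projection. Since $q \land \neg q$ evaluates to $0$ in every Heyting algebra under every valuation, completeness of IPC shows that $\mathrm{IPC} \vdash B(p, q \land \neg q)$ is equivalent to $B(c, 0_H) = 1_H$ for every Heyting algebra $H$ and every $c \in H$; and by freeness of $\Z$ on its generator $\algg$, this in turn reduces to the single identity $B(\algg, 0) = 1_\Z$.

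To extract this identity I would construct a surjective Heyting homomorphism $\phi\colon \Z' \to \Z$ by first-coordinate projection: $\phi(\lbr a, b\rbr) = a$ for $\lbr a, b\rbr \in \Z \times \Z_2$, together with $\phi(1_{\Z'}) = 1_\Z$. That $\phi$ is a Heyting homomorphism is a routine check that uses the concatenation structure $\Z' = \Z \times \Z_2 + \Z_2$: within the $\Z \times \Z_2$ component the operations are componentwise, so projection preserves them; and the adjoined top satisfies $x \impl 1_{\Z'} = 1_{\Z'}$ and $1_{\Z'} \impl y = y$ for $y \in \Z \times \Z_2$, both preserved by $\phi$. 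Directly, $\phi(\oalgb) = \phi(\lbr 0, 1\rbr) = 0_\Z$.

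It remains to verify that $\phi(\oalga) = \algg$. Because $\phi$ is surjective and $\{\oalga, \oalgb\}$ generates $\Z'$, the pair $\{\phi(\oalga), 0\}$ generates $\Z$; since $0$ is term-definable as $x \land \neg x$, the single element $\phi(\oalga)$ already generates $\Z$. The universal property of $\Z$ as the free $1$-generated Heyting algebra then produces an endomorphism of $\Z$ sending $\algg$ to $\phi(\oalga)$ which is necessarily surjective, and the hopfian and rigid character of the Rieger-Nishimura ladder forces this endomorphism to be the identity, whence $\phi(\oalga) = \algg$.

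Applying $\phi$ to the hypothesis $B(\oalga, \oalgb) = 1_{\Z'}$ then yields $B(\algg, 0) = 1_\Z$, which by the reduction above establishes $\mathrm{IPC} \vdash B(p, q \land \neg q)$. The two technical points that need any care are the verification that $\phi$ respects implication at the interaction with the adjoined top element of the concatenation, and the identification $\phi(\oalga) = \algg$ via uniqueness of the generator in $\Z$; both follow from a straightforward unpacking of the algebraic structures.
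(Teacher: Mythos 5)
Your proof is correct and follows the same overall strategy as the paper: transport the hypothesis $B(\oalga,\oalgb)=\one_{\Z'}$ along the first-coordinate projection $\Z' \to \Z$, and then use freeness of $\Z$ on $\algg$ together with the interderivability of $p\land\neg p$ and $q\land\neg q$ (your detour through completeness of IPC is just the semantic phrasing of the same freeness argument). The one place you diverge is the identification of the image of $\oalga$: the paper gets $B(\algg,\zero)=\one_\Z$ immediately because $\oalga$ is, by construction (Fig.~1), the element $\lbr\algg,\zero\rbr$, so the projection sends it to $\algg$ with nothing to check; you instead recover $\phi(\oalga)=\algg$ abstractly from the fact that $\{\oalga,\oalgb\}$ generates $\Z'$, invoking the hopfian and rigid character of the Rieger--Nishimura ladder without proof. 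Those facts are true, and in fact they follow from material already used in the paper: all proper quotients of $\Z$ are finite (hence $\Z$ is hopfian), and $\algg$ is the unique element of $\Z$ that is neither dense nor regular (a fact the paper uses in Case~3(a) of Theorem \ref{mainth}), so any single generator, in particular $\phi(\oalga)$, must equal $\algg$; if you keep your formulation, you should at least cite or sketch this, since as written it is the only unsupported step. Your explicit verification that the projection respects implication across the adjoined top of the concatenation is a useful piece of bookkeeping that the paper leaves implicit.
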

\begin{proof} Recall that \(\oalga = \lbr \algg, \zero \rbr\) and \(\oalgb = \lbr \zero, \one \rbr\). Thus from $B(\oalga,\oalgb) = \one$ we have $B(\lbr \algg, \zero \rbr,\lbr \zero, \one \rbr) = \one$, hence $B(\algg,\zero) = \one_{\Z}$ and, obviously, $B(\algg, \algg \land \neg \algg) = \one_\Z$. Let us also recall that $\Z$ is a free algebra and $\algg$ is its free generator (see Fig.1), hence, $IPC \vdash B(p, p \land \neg p)$. Taking into consideration that  $IPC \vdash (p \land \neg p) \eqv (q \land \neg q)$, we can conclude that $IPC \vdash B(p, q \land \neg q)$.  
\end{proof}

%%%%%%%%%%%%%%%%%%%%%%%%%%%%%%%%%%%
% Lemma 2
%%%%%%%%%%%%%%%%%%%%%%%%%%%%%%%%%%%

\begin{lemma}  Suppose $B(p,q)$ is a formula and $B(\oalga,\oalgb) = \one_{\Z}'$. Then in the 2-element Boolean algebra \(\Z_2\) 
\begin{equation*}
B(\zero,\zero) = B(\zero,\one) = B(\one,\zero) =\one.
\end{equation*}
\end{lemma}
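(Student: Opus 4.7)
My plan is to exhibit three Heyting homomorphisms $h_1,h_2,h_3\colon \Z'\to\Z_2$ satisfying
\[
(h_1(\oalga),h_1(\oalgb)) = (\zero,\zero),\quad (h_2(\oalga),h_2(\oalgb)) = (\zero,\one),\quad (h_3(\oalga),h_3(\oalgb)) = (\one,\zero).
\]
Once these are in hand, applying each $h_i$ to the hypothesis $B(\oalga,\oalgb) = \one_{\Z'}$ yields $B(h_i(\oalga),h_i(\oalgb)) = h_i(\one_{\Z'}) = \one_{\Z_2}$, which are precisely the three equalities claimed. The missing pair $(\one,\one)$ cannot arise, since $\oalga \land \oalgb = \zero_{\Z'}$ forces $h(\oalga)\land h(\oalgb) = \zero_{\Z_2}$ for any homomorphism $h$.

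The key technical step is a Heyting surjection $\rho\colon \Z'\to \Z\times\Z_2$. Since $\Z' = \Z\times\Z_2 + \Z_2$ is obtained by adjoining a single new top element above the opremum $\lbr\one_\Z,\one\rbr$, I take $\rho$ to be the identity on $\Z\times\Z_2$ and send the new top to $\lbr\one_\Z,\one\rbr$. Preservation of $\land$, $\lor$, $\zero$, $\one$ is immediate. The only delicate point is $\impl$: when $x,y\in\Z\times\Z_2$ satisfy $x \le y$, the implication $x\impl y$ equals the new top inside $\Z'$ but equals $\lbr\one_\Z,\one\rbr$ inside $\Z\times\Z_2$, and $\rho$ is defined precisely so as to identify these; when $x\not\le y$, the two implications coincide because no element strictly above $\lbr\one_\Z,\one\rbr$ can witness $x\land c \le y$.

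The three homomorphisms are then obtained by post-composing $\rho$ with Heyting maps out of $\Z\times\Z_2$. Define $h_2 \bydef \pi_2\circ\rho$, where $\pi_2$ is the second projection; this sends $\oalga \mapsto \zero$ and $\oalgb \mapsto \one$. Because $\Z$ is the free one-generated Heyting algebra with free generator $\algg$, the assignments $\algg\mapsto\zero$ and $\algg\mapsto\one$ extend uniquely to Heyting homomorphisms $\sigma_0,\sigma_1\colon \Z\to\Z_2$; setting $h_1 \bydef \sigma_0\circ\pi_1\circ\rho$ and $h_3 \bydef \sigma_1\circ\pi_1\circ\rho$ yields the remaining two. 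The essentially only obstacle is the implication-preservation check for $\rho$ just described; everything else is routine.
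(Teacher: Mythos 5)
Your proof is correct and follows essentially the same strategy as the paper: produce three homomorphisms $\Z' \to \Z_2$ sending $(\oalga,\oalgb)$ to $(\zero,\zero)$, $(\zero,\one)$, $(\one,\zero)$, and push the hypothesis $B(\oalga,\oalgb)=\one_{\Z'}$ through them. The paper obtains these same three maps more directly as the quotient homomorphisms by the principal filters $\nabla(\neg\oalga \land \neg\oalgb)$, $\nabla(\oalgb)$, $\nabla(\oalga)$, which spares the hand verification that your collapse map $\rho$ preserves implication and the appeal to the freeness of $\Z$.
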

\begin{proof} Let us consider the following three filters : $\nabla(\neg \oalga \land \neg \oalgb),\nabla(\oalgb), \nabla(\oalga)$. And now let us observe that corresponding homomorphisms send elements $\oalga,\oalgb$ respectively in $\zero_\algB,\zero_\algB$, or $\zero_\algB,\one_\algB$, or $\one_\algB,\zero_\algB$. Since $B(\oalga,\oalgb) = \one_{\Z'}$ and any homomorphism preserves the top element, we can complete the proof.   
\end{proof}

\begin{cor} \label{BoolB} If $B(p,q)$ is a formula and $B(\oalga,\oalgb) = \one_{\Z}'$ then in any Heyting algebra
\begin{equation*}
B(\zero,\zero) = B(\zero,\one) = B(\one,\zero) =\one.
\end{equation*}
\end{cor}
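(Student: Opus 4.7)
The plan is to lift the conclusion of the preceding lemma from $\Z_2$ to an arbitrary Heyting algebra by exploiting the canonical embedding of $\Z_2$ as the subalgebra generated by the distinguished constants.

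Concretely, I would first check the elementary fact that in every Heyting algebra $\algA$ the two-element set $\{\zero_\algA, \one_\algA\}$ is closed under $\land, \lor, \impl$, and $\neg$, and therefore is a subalgebra of $\algA$. If $\algA$ is nondegenerate, this subalgebra is isomorphic to $\Z_2$ via the unique bijection $\zero \mapsto \zero_\algA$, $\one \mapsto \one_\algA$. If $\algA$ is degenerate then $\zero_\algA = \one_\algA$ and the three equalities hold trivially.

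Since formula evaluation commutes with subalgebra inclusion (subalgebras are closed under all term operations), for any choice of arguments $x, y \in \{\zero_\algA, \one_\algA\}$ the value $B(x, y)$ computed in $\algA$ coincides, under the above isomorphism, with the value of $B$ at the matching pair computed in $\Z_2$. The preceding lemma gives $B(\zero, \zero) = B(\zero, \one) = B(\one, \zero) = \one$ in $\Z_2$, and transporting these equalities back yields the desired identities in $\algA$.

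No real obstacle arises: the content of the corollary is simply that expressions built from the constants $\zero$ and $\one$ are evaluated identically in every Heyting algebra, which reduces the statement to the already-proved case of $\Z_2$.
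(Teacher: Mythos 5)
Your argument is correct and coincides with the paper's own (one-line) proof: both reduce the claim to the preceding lemma via the observation that $\{\zero,\one\}$ is a subalgebra isomorphic to $\Z_2$ in any (nondegenerate) Heyting algebra, with term evaluation preserved under the subalgebra inclusion.
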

\begin{proof} Recall that in any Heyting algebra the set $\left\{\zero,\one \right\}$ forms a subalgebra isomorphic with \(\Z_2\) .
\end{proof}

%%%%%%%%%%%%%%%%%%%%%%%%%%%%%%%%%%%
% Lemma 3
%%%%%%%%%%%%%%%%%%%%%%%%%%%%%%%%%%%
\begin{lemma} \label{lcase_1} Suppose $B(p,q)$ is a formula, $B(\oalga,\oalgb) = \one_{\Z'}$ and $A(\algc,\one_\algA) = \one_\algA$ for some element $\algc$ of an arbitrary algebra $\algA$. Then $B(\algc,\one_\algA)=\one_\algA$ . 
\end{lemma}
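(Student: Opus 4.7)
The plan is to reduce the claim to the already-proved Corollary \ref{BoolB} by showing that the hypothesis $A(\algc,\one_\algA) = \one_\algA$ actually forces $\algc = \zero_\algA$.

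First I would substitute $q = \one$ into the defining formula
\[
A = \neg(p \land q) \land (\neg \neg q \impl q) \land ( p^{10} \impl (q \lor \neg q))
\]
and simplify using the standard Heyting-algebra identities $\one \lor \algd = \one$, $\algd \impl \one = \one$, and $\neg\neg\one = \one$. Each of the last two conjuncts becomes $\one_\algA$ (since $q \lor \neg q$ and $\neg\neg q \impl q$ both evaluate to $\one_\algA$ when $q = \one_\algA$), while the first conjunct $\neg(p \land \one)$ simplifies to $\neg p$. Consequently, $A(\algc,\one_\algA) = \neg \algc$ in any Heyting algebra $\algA$.

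Next I would use the hypothesis $A(\algc,\one_\algA) = \one_\algA$, which by the previous step reads $\neg \algc = \one_\algA$; equivalently, $\algc \impl \zero_\algA = \one_\algA$, and this forces $\algc \leq \zero_\algA$, so $\algc = \zero_\algA$. At this point the problem has been reduced to evaluating $B$ at $(\zero_\algA, \one_\algA)$, which is precisely one of the three pairs covered by Corollary \ref{BoolB}. Applying that corollary to the hypothesis $B(\oalga,\oalgb) = \one_{\Z'}$ yields $B(\zero_\algA,\one_\algA) = \one_\algA$, and substituting back $\zero_\algA = \algc$ gives the desired conclusion $B(\algc,\one_\algA) = \one_\algA$.

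I do not expect any genuine obstacle here: the whole argument is a short symbolic computation followed by an appeal to Corollary \ref{BoolB}. The only thing worth double-checking is the simplification of $A(p,\one)$ — in particular, that $p^{10} \impl (q \lor \neg q)$ truly trivializes when $q = \one_\algA$ regardless of the rather intricate definition of $p^{10}$, which it does because the consequent is $\one_\algA$. The role of this lemma in the sequel is presumably to handle one of the boundary cases (the ``$q = \one$'' case) in the case analysis establishing that the pair $\langle A, \nu\rangle$ defines $\Z'$ via the criterion of Proposition \ref{homext}.
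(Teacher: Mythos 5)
Your proposal is correct and matches the paper's proof: the paper likewise extracts $\neg(\algc \land \one_\algA) = \neg\algc = \one_\algA$ from the hypothesis $A(\algc,\one_\algA)=\one_\algA$, concludes $\algc = \zero_\algA$, and then invokes Corollary \ref{BoolB} to get $B(\zero_\algA,\one_\algA)=\one_\algA$. Your extra simplification showing the other two conjuncts trivialize at $q=\one_\algA$ is harmless but unnecessary, since it suffices that the first conjunct equals $\one_\algA$ whenever the whole conjunction does.
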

\begin{proof} Since $A(\algc,\one_\algA) = \one_\algA$, we have $\neg(\algc \land \one_\algA) = \neg \algc = \one_\algA$, that is, $\algc = \zero_\algA$. Application of Corollary \ref{BoolB} completes the proof.
\end{proof}

%%%%%%%%%%%%%%%%%%%%%%%%%%%%%%%%
% Corrolary
%%%%%%%%%%%%%%%%%%%%%%%%%%%%%%%%

\begin{cor} \label{cor_d4} Suppose $B(p,q)$ is a formula, $B(\oalga,\oalgb) = \one_{\Z'}$ and $A(\algc,\algd) = \one_\algA$ where $\algc, \algd$ are some elements of an arbitrary s.i. algebra $\algA$. Then 
\begin{itemize}
\item[(a)] If $\algd \lor \neg \algd = \one_\algA$ then $B(\algc,\algd)=\one_\algA$;
\item[(b)] If $\algd = \neg \algc$ then $B(\algc,\algd)=\one_\algA$;
\item[(c)] If $\neg \neg \algd = \neg \algc$ then $B(\algc,\algd)=\one_\algA$.
\end{itemize}
\end{cor}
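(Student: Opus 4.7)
I will prove the three items by reducing each to the auxiliary results already established (Lemma \ref{lcase_0}, Lemma \ref{lcase_1}, Corollary \ref{BoolB}), exploiting subdirect irreducibility of $\algA$ to collapse the relevant parameters into the Boolean values $\{\zero,\one\}$.

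First, I observe that (c) reduces immediately to (b). Indeed, the conjunct $\neg\neg q \impl q$ of $A(\algc,\algd) = \one_\algA$ forces $\algd$ to be regular, so $\neg\neg\algd = \algd$; combined with the hypothesis $\neg\neg\algd = \neg\algc$, this yields $\algd = \neg\algc$, which is precisely the hypothesis of (b).

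For (a), the decisive consequence of s.i. is that every non-top element of $\algA$ lies below the opremum $\omega$, so $x \lor y = \one_\algA$ forces $x = \one_\algA$ or $y = \one_\algA$. Applied to $\algd \lor \neg\algd = \one_\algA$, this gives $\algd \in \{\zero,\one\}$. If $\algd = \one_\algA$, then the conjunct $\neg(\algc \land \one_\algA) = \one_\algA$ of $A$ forces $\algc = \zero$, and Corollary \ref{BoolB} supplies $B(\zero,\one) = \one_\algA$. If $\algd = \zero$, Lemma \ref{lcase_0} delivers $B(\algc,\zero) = \one_\algA$ directly, since $B(p,q \land \neg q)$ is IPC-derivable and $\zero = \zero \land \neg\zero$.

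Part (b) is the decisive case, which I plan to attack by case analysis on $\algc$. The easy sub-cases are $\algc \in \{\zero,\one\}$, handled by Corollary \ref{BoolB}, and $\algc$ dense, so that $\neg\algc = \zero$ and Lemma \ref{lcase_0} applies. The main obstacle is the remaining sub-case, where $\algc$ is non-trivial and non-dense; here $\neg\algc$ and $\neg\neg\algc$ are both proper non-zero elements, and none of the earlier lemmas applies directly. In this sub-case my strategy is to exploit the third conjunct of $A$, which reads $p^{10}(\algc) \leq \neg\algc \lor \neg\neg\algc$, and pass to the quotient $\algA/\nabla(\neg\algc \lor \neg\neg\algc)$: in every subdirectly irreducible subdirect component of this quotient, weak excluded middle for the image of $\algc$ holds, so (a) applies with $\algd = \neg\algc$, giving $B = \one$ in each component, and hence $B(\algc,\neg\algc) \in \nabla(\neg\algc \lor \neg\neg\algc)$ in $\algA$. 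The delicate remaining step is to upgrade this lower bound, together with analogous bounds obtained from the filters $\nabla(\algc)$, $\nabla(\neg\algc)$, $\nabla(\neg\neg\algc)$ (each yielding, via a Boolean sub-case, $B(\algc,\neg\algc) \geq \algc, \neg\algc, \neg\neg\algc$ respectively), to the conclusion $B(\algc,\neg\algc) = \one_\algA$. Combining these with the $p^{10}$-constraint and s.i.\ of $\algA$ is what I expect to be the main technical hurdle, and it is where the precise design of the formula $A$ (and in particular of $p^{10}$) does its work.
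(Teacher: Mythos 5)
Your parts (a) and (c) are correct and are essentially the paper's own argument: subdirect irreducibility collapses $\algd$ to $\left\{\zero_\algA,\one_\algA\right\}$ and Lemmas \ref{lcase_0} and \ref{lcase_1} (equivalently Corollary \ref{BoolB}) finish (a), while the conjunct $\neg\neg q\impl q$ makes $\algd$ regular and reduces (c) to (b). The genuine gap is in (b), which you yourself leave open: in the sub-case where $\algc$ is neither trivial nor dense you only sketch a strategy and defer the ``delicate remaining step''. Moreover that strategy cannot close the case as described: the quotients by $\nabla(\neg\algc\lor\neg\neg\algc)$, $\nabla(\algc)$, $\nabla(\neg\algc)$, $\nabla(\neg\neg\algc)$ yield at best the lower bound $B(\algc,\neg\algc)\geq\neg\algc\lor\neg\neg\algc$ (the other three bounds are subsumed by this one, since $\algc\leq\neg\neg\algc$), and this element is in general strictly below $\one$ (already in $\Z'$ with $\algc=\oalga$), so no combination of those inequalities alone gives $B(\algc,\neg\algc)=\one_\algA$; an additional idea is required and is not supplied.

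The missing idea, which is how the paper proves (b), is to use the third and fourth conjuncts of $A$ in the form \eqref{presform} together, inside $\algA$ itself, with no quotient analysis. Putting $\algd=\neg\algc$, the third conjunct gives $((\neg\neg\algc\impl\algc)\impl(\neg\algc\lor\neg\neg\algc))=\one_\algA$, and since in IPC the formulas $(\neg\neg p\impl p)\impl(\neg p\lor\neg\neg p)$ and $(\neg\neg p\impl p)\impl(p\lor\neg p)$ are equivalent, the antecedent of the fourth conjunct, namely $(\neg\neg\algc\impl\algc)\impl(\algc\lor\neg\algc)$, equals $\one_\algA$; the fourth conjunct then forces $\algd\lor\neg\algd=\one_\algA$ outright, and (a) completes the proof. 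In particular your ``main obstacle'' sub-case is vacuous: $A(\algc,\neg\algc)=\one_\algA$ in an s.i. algebra forces $\neg\algc\lor\neg\neg\algc=\one_\algA$, hence $\neg\algc\in\left\{\zero_\algA,\one_\algA\right\}$, i.e.\ $\algc$ is either $\zero_\algA$ or dense --- exactly the cases you had already handled.
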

\begin{proof} (a) Indeed, since algebra $\algA$ is s.i., then $\algd \lor \neg \algd = \one _\algA$ yields $\algd = \one_\algA$ or $\algd = \zero_\algA$. Applications of lemmas \ref{lcase_1} and \ref{lcase_0} concludes the proof.

(b) Since $A(\algc,\algd) = \one_{\algA}$ we have
\begin{equation}
\begin{split}
& \one_\algA = ((\neg \neg \algc \impl \algc) \impl (\algd \lor \neg \algd)) = \\
& ((\neg \neg \algc \impl \algc) \impl (\neg \algc \lor \neg \neg \algc)) = ((\neg \neg \algc \impl \algc) \impl (\algc \lor \neg \algc)).   \label{caseb_1}
\end{split}
\end{equation}
From $A(\algc,\algd) = \one_{\algA}$ it also follows that
\begin{equation}
((\neg \neg \algc \impl \algc) \impl (\algc \lor \neg \algc)) \impl (\algd \lor \neg \algd) =\one_{\algA}. \label{caseb_2}
\end{equation} 
From \eqref{caseb_1} and \eqref{caseb_2} it trivially follows that $\algd \lor \neg \algd =\one_\algA$ and application of (a) completes the proof of case (b). 

(c) Immediately from $A(\algc,\algd) = \one_\algA$ it follows that $\neg \neg \algd \impl \algd = \one_\algA$, that is, $\neg \neg \algd = \algd$. Thus, $ \neg \neg \algd = \neg \algc$ yields $\algd = \neg \algc$ and we can apply (b).
\end{proof}
%%%%%%%%%%%%%%%%%%%%%%%%%%%%%%%
% proof of the theorem
%%%%%%%%%%%%%%%%%%%%%%%%%%%%%%%

\subsection{The proof of the theorem}
\begin{proof}
In order to prove the theorem we will demonstrate that formula $A(p,q)$ and the valuation $\nu$ such that $\nu(p) = \oalga$ and $\nu(q) = \oalgb$ define algebra $\Z'$. It is clear that elements $\oalga, \oalgb$ generate algebra $\Z'$ and that $A(\oalga,\oalgb) = \one$, thus, the conditions (1) and (2) of the Definition \ref{fpdef} are satisfied. So, all what is left to prove is that for any formula $B(p,q)$ such that
\begin{equation}
B(\oalga, \oalgb) = \one \label{assumB_0}
\end{equation}
we have 
\begin{equation}
\vdash_{\classV(\Z^*)} A \impl B. \label{Vcons}
\end{equation}
In order to prove \eqref{Vcons} it is enough to show that for any s.i. homomorphic image $\algA$ of algebra $\Z^*$ and any two elements $\algc,\algd \in \algA$ if 
\begin{equation}
A(\algc, \algd) = \one _\algA, \label{assumpA}
\end{equation}
then
\begin{equation}
B(\algc, \algd) = \one_\algA. \label{assumpB}
\end{equation}
We will consider the following cases:

\begin{center}
\begin{tabular}{l|l|l}
\hline  & $\algc$ & $\algd$  \\
\hline (1) & $\algc \in Dn(\algA)$ & any  \\
\hline (2) & $\algc \in Rg(\algA)$ & any  \\
\hline (3) & $\algc \notin Rg(\algA)$ and $\algc \notin Dn(\algA)$ & any  \\
\end{tabular}
\end{center}

\subsubsection{Case (1).} If $A(\algc,\algd) = \one_\algA$ then $\neg (\alga \land \algc) =\one_\algA$, hence $\algc \land \algd = \zero_\algA$. Since $\algc \in Dn(\algA)$ we have $\algd = \zero$ and application of Lemma \ref{lcase_0} completes the proof.

\subsubsection{Case (2).} If $A(\algc,\algd) = \one_\algA$ then $(\neg \neg \algc \impl \algc)\impl (\algd \lor \neg \algd) = \one_\algA$. Since $\algc \in Rg(\algA)$, that is, $(\neg \neg \algc \impl \algc) = \one_\algA$, we can conclude that $\algd \lor \neg \algd = \one_\algA$ and apply Corollary \ref{cor_d4}.

\subsubsection{Case (3).} Let $\algc$ be neither regular, nor dense. Algebra $\algA$ is a s.i. homomorphic image of algebra $\Z^*$ and let $\nabla$ be a kernel of this homomorphism. Let us consider two cases:
\begin{itemize}
\item[(a)] $\lbr\one, \one \rbr \in \nabla$;
\item[(b)] $\lbr\one, \one \rbr \notin \nabla$.
\end{itemize}

(a) Let us recall that $\algA$ is a s.i. algebra, therefore $\lbr \zero, \one \rbr \lor \lbr \one, \zero \rbr = \lbr \one, \one \rbr \in \nabla$ yields $\lbr \zero,\one \rbr  \in \nabla$ or $\lbr \one,\zero \rbr  \in \nabla$.  

If $\lbr \zero,\one \rbr  \in \nabla$ then $\algA \cong \Z^*/\nabla = \Z'/\nabla$ is a two-element Boolean algebra and we can apply Lemma \ref{BoolB} (because $\algc \land \algd = \zero_\algA$ and, therefore,  $\algc = \zero_\algA$ or $\algd = \zero_\algA$).

If $\lbr \one,\zero \rbr  \in \nabla$ then $\algA \cong \Z^*/\nabla = \Z'/\nabla$ is a single-generated algebra. There is the only element of single-generated algebra which is not dense and regular, namely, its generator $\algg$. Let us observe that, since $A(\algg, \algd) = \one_\algA$, we have $\algg \land \algd = \zero_\algA$ and there are just two possibilities for $\algd$: either $\algd = \zero$, or $\algd = \algr_2$ (see Fig. 1). In the first case we can apply Lemma \ref{lcase_0}. In the second case we can apply Corollary \ref{cor_d4}(b), because $\algr_2 = \neg \algg$.  

(b) Since element $\algc \in \Z \times \Z_2$  is neither dense , nor regular, there are exactly three sub-cases:
\begin{itemize}
\item[i.] $\algc = \lbr\algf, \zero \rbr$, where $\algf \in Dn(\Z)$;
\item[ii.] $\algc = \lbr \algg, \one \rbr$;
\item[iii.] $\algc = \lbr \algg, \zero\rbr$.
\end{itemize}

i. If $\algc = \lbr\algf, \zero \rbr$, then 
\begin{equation}
\begin{split}
& (\neg \neg \algc \impl \algc) \impl (\algc \lor \neg \algc ) =  (\neg \neg \lbr\algf, \zero \rbr \impl \lbr\algf, \zero \rbr) \impl (\lbr\algf, \zero \rbr \lor \neg \lbr\algf, \zero \rbr ) = \\
& (\lbr \one,\zero\rbr \impl \lbr\algf, \zero \rbr) \impl (\lbr\algf, \zero \rbr \lor  \lbr\zero, \one \rbr) =  \lbr \algf , \one \rbr \impl \lbr \algf , \one \rbr = \one.
\end{split} \label{case_i}
\end{equation}
By assumption, $A(\algc,\algd) = \one$, hence, $((\neg \neg \algc \impl \algc) \impl (\algc \lor \neg \algc)) \impl (\algd \lor \neg \algd) = \one$. Therefore from \eqref{case_i} we have $\algd \lor \neg \algd = \one$ and we can apply Corollary \ref{cor_d4}(a).

ii. Let $\algc = \lbr \algg, \one \rbr$. From $\algc \land \algd = \zero$ it follows (see Fig. 1) that in this case $\algd = \zero$, or $\algd = \lbr\algr_2 ,\zero \rbr$. In the first case we can apply Lemma \ref{lcase_0}. In the second case we can apply Corollary \ref{cor_d4}(b), because $\lbr \algr_2,\zero \rbr = \neg \lbr \algg, \one\rbr = \algc$.

iii. Let $\algc = \lbr \algg, \zero \rbr = \oalga$. From $\algc \land \algd = \zero$ it follows that there are just four possibilities for $\algd$ (see Fig.1): $\algd \in \left\{ \zero, \lbr \zero,\one\rbr, \lbr \algr_2,\zero\rbr, \lbr \algr_2,\one\rbr\right\}$.
If $\algd = \zero$ we can apply Lemma \ref{lcase_0}. If $\algd = \lbr \zero,\one \rbr = \oalgd$, the statement trivially follows from the assumption \eqref{assumB_0}.

Let us observe the following (see Fig. 1):
\begin{equation}
(\neg \neg c \impl \algc) \impl (\algc \lor \neg \algc) = (\neg \neg \oalga \impl \oalga) \impl (\oalga \lor \neg \oalga) = \oalga^7. 
\end{equation}

Since $A(\algc,\algd) = \one$ we have 
\begin{equation}
(\neg \neg c \impl \algc) \impl (\algc \lor \neg \algc) \impl (\algd \lor \neg \algd) = \one,
\end{equation}
hence,
\begin{equation}
(\neg \neg c \impl \algc) \impl (\algc \lor \neg \algc) \leq (\algd \lor \neg \algd). %%\label{assumpA} 
\end{equation}
But
\begin{equation}
\begin{split}
& \algd \lor \neg \algd = \lbr \algr_2, \zero \rbr \lor \neg \lbr \algr_2, \zero \rbr = \lbr \algr_2, \zero \rbr \lor \lbr \algr_1, \one \rbr = \lbr \algr_2 \lor \algr_1 , \one \rbr; \\
& \algd \lor \neg \algd = \lbr \algr_2, \one \rbr \lor \neg \lbr \algr_2, \one \rbr = \lbr \algr_2, \one \rbr \lor \lbr \algr_1, \zero \rbr  = \lbr \algr_2 \lor \algr_1 , \one \rbr. \\
\end{split}
\end{equation}
The observation that $\lbr \algr_2 \lor \algr_1 , \one \rbr < \oalga^7$ completes the proof.
\end{proof}

%%%%%%%%%%%%%%%%%%%%%%%%%%%%%%%%%%%
% section f.p. and concatenations
%%%%%%%%%%%%%%%%%%%%%%%%%%%%%%%%%%%

\section{Finite presentability and concatenations}

In this section using concatenations of finitely presentable algebras we construct an infinite set of infinite finitely presentable algebras. 

\begin{theorem} \label{concatfp} Let $\classV$ be a variety of Heyting algebras and $\algA = \algA' + \Z_2 \in \classV$ and $\algB = \Z_2 + \algB' \in \classV$. Suppose algebras $\algA$ and $\algB$ are finitely presentable over $\classV$ and $\algA' + \algB' \in \classV$. Then algebra $\algA' + \algB'$ is finitely presentable over $\classV$.
\end{theorem}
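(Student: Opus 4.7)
The plan is to combine the given presentations of $\algA$ and $\algB$ with a single extra clause asserting that the coatom of $\algA$ and the atom of $\algB$ become identified.

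By Proposition \ref{tietze}, I may choose generators $\ua = (\alga_1,\dots,\alga_m)$ of $\algA$ with every $\alga_i \leq t_\algA$, where $t_\algA \bydef \one_{\algA'}$ is the unique coatom of $\algA = \algA' + \Z_2$, and generators $\ub = (\algb_1,\dots,\algb_n)$ of $\algB$ with every $\algb_j \geq s_\algB$, where $s_\algB \bydef \zero_{\algB'}$ is the unique atom of $\algB = \Z_2 + \algB'$. Let $\lbr A(\up),\nu_A \rbr$ and $\lbr B(\uq),\nu_B \rbr$ be the corresponding presentations over $\classV$. Since $\ua$ (resp.\ $\ub$) generates $\algA$ (resp.\ $\algB$), there are terms $t(\up)$ and $s(\uq)$ with $t(\ua) = t_\algA$ and $s(\ub) = s_\algB$. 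I propose the presentation formula
\[
C(\up,\uq) \bydef A(\up) \land B(\uq) \land (t(\up) \eqv s(\uq)),
\]
together with the valuation $\mu$ in $\algA'+\algB'$ sending $p_i \mapsto \alga_i$ (as element of the $\algA'$-part) and $q_j \mapsto \algb_j$ (as element of the $\algB'$-part), and claim that $\lbr C,\mu \rbr$ defines $\algA' + \algB'$ over $\classV$.

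Conditions (1) and (2) of Definition \ref{fpdef} go through easily: the natural homomorphisms $\phi_A\colon \algA \to \algA'+\algB'$ (identity on the $\algA'$-part, $\one_\algA \mapsto \one$) and $\phi_B\colon \algB \to \algA'+\algB'$ (identity on the $\algB'$-part, $\zero_\algB \mapsto \zero$) carry the chosen generators to generating sets of the two parts of $\algA' + \algB'$, and $\mu(C) = \one$ because $\mu(A) = \phi_A(A(\ua)) = \phi_A(\one_\algA) = \one$, similarly $\mu(B) = \one$, and $\mu(t) = \phi_A(t_\algA) = e = \phi_B(s_\algB) = \mu(s)$, where $e$ is the glued element of $\algA'+\algB'$.

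The heart of the proof is condition (3), verified through Proposition \ref{homext}. Given $\algC \in \classV$ and $\underline{\algc}, \underline{\algd}$ with $C(\underline{\algc},\underline{\algd}) = \one$, the presentations of $\algA$ and $\algB$ yield homomorphisms $\psi_A\colon \algA \to \algC$ and $\psi_B\colon \algB \to \algC$ which, by the conjunct $t \eqv s$, agree at the junction: $\psi_A(t_\algA) = \psi_B(s_\algB) =: e^*$. I glue them by setting $\psi = \psi_A$ on the $\algA'$-part and $\psi = \psi_B$ on the $\algB'$-part; consistency at $e$ is exactly the agreement just observed. Checking the homomorphism property is routine when both arguments lie in the same part. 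The main obstacle, which is also the key calculation, is the mixed case $y \impl x$ with $y > e$ in $\algB'$-part and $x < e$ in $\algA'$-part, where the concatenation structure forces $y \impl x = x$ in $\algA'+\algB'$ and so we need $\psi_B(y) \impl \psi_A(x) = \psi_A(x)$ in $\algC$. This resolves cleanly: $\psi_B(y) \geq e^*$ (since $s_\algB \leq y$ in $\algB$, giving $e^* \impl \psi_B(y) = \one$), so by contravariance of $\impl$ in its first argument $\psi_B(y) \impl \psi_A(x) \leq e^* \impl \psi_A(x)$; and in $\algA$ the equality $t_\algA \impl x = x$ holds for $x < t_\algA$, so applying the homomorphism $\psi_A$ yields $e^* \impl \psi_A(x) = \psi_A(t_\algA \impl x) = \psi_A(x)$. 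Combined with the trivial $\psi_A(x) \leq \psi_B(y) \impl \psi_A(x)$, this gives equality, completing the verification that $\psi$ is a Heyting homomorphism and hence that $\algA' + \algB'$ is finitely presentable over $\classV$.
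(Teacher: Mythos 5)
Your proposal is correct and follows essentially the same route as the paper: you take the conjunction of the two presentations together with a clause identifying a term for the coatom of $\algA$ with a term for the atom of $\algB$, and then verify condition (3) of Proposition \ref{homext} by gluing the two induced homomorphisms at the common value. The only differences are cosmetic — your use of Proposition \ref{tietze} to place the generators inside $\algA'$ and $\algB'$, and your explicit computation of the mixed case $y \impl x = x$, which the paper handles by simply listing the corresponding identities in $\algA' + \algB'$.
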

\begin{proof}
Let pairs $\lbr A(p_1,\dots,p_n); \nu \rbr$ and $\lbr B(q_1,\dots,q_m); \mu \rbr$ are defining respectively algebras $\algA$ and $\algB$ and $\left\{ p_1,\dots,p_n\right\} \cap \left\{ q_1,\dots,q_m\right\} = \emptyset$. Assume that $A'(p_1,\dots,p_n)$ and $B'(q_1,\dots,q_m)$ are such formulas that $\nu(A') = \alga$ is a coatom of $\algA$ and $\mu(B') = \algb$ is an atom of $\algB$. Let us note that $\nu(A') = \mu(B')$. Then the pair $\lbr C; \phi \rbr$, where
\begin{equation} 
\begin{split}
& C(p_1,\dots,p_n,q_1,.\dots,q_m) = \\
& A(p_1,\dots,p_n) \land B(q_1,\dots,q_m) \land (A'(p_1,\dots,p_n) \eqv B'(q_1,\dots,q_m)) \\
& \phi(p_i) = \nu(p_i); i=1,\dots,n \text{ and } \phi(q_j) = \mu(q_j); j=1,\dots,m, 
\end{split} \label{def_form} 
\end{equation}
defines algebra $\algA'+\algB'$ over $\classV$.

Assume $\nu(p_1) = \alga_i; i=1,\dots,n$ and $\mu(q_j) = \algb_j; j=1,\dots,m$. It is easy to see that elements $G = \left\{ \alga_1,\dots,\alga_n,\algb_1,\dots,\algb_m \right\}$ generate algebra $\algA' + \algB'$. From the Proposition \ref{homext} it follows that in order to prove our claim it is enough to demonstrate that for any algebra $\algC \in \classV$ any mapping $\psi: G \mapsto \algC$ such that
\begin{equation}
	C(\psi(\alga_1),\dots,\psi(\alga_n),\psi(\algb_1),\dots,\psi(\algb_m)) = \one_\algC \label{psi}
\end{equation}
can be extended to a homomorphism $\overline{\psi}: \algA' + \algB' \impl \algC$.

Let us consider the following reducts of $\phi$: 
\begin{equation} 
\phi_1: p_i \mapsto \alga_i; i=1,\dots,n \text{ and } \phi_2: p_i \mapsto \algb_i; i=1,\dots,m. \label{restr}
\end{equation}
Let us recall now that algebras $\algA$ and $\algB$ are finitely presentable over $\classV$. Hence, mappings $\phi_1$ and $\phi_2$ can be extended to homomorphisms $\overline{\phi_1}: \algA \impl \algC$ and $\overline{\phi_2}: \algB \impl \algC$. From \eqref{def_form},\eqref{psi} and \eqref{restr} it follows that 
\begin{equation}
\overline{\phi_1}(\alga) = \overline{\phi_2}(\algb).
\end{equation}
Moreover,
\begin{equation}
\overline{\phi_1}(\alga') \leq \overline{\phi_1}(\alga) \text{ for all } \alga' \in \algA' \text{ and } \overline{\phi_2}(\algb) \leq \overline{\phi_2}(\algb') \text{ for all } \algb' \in \algB'.
\end{equation}
Thus we can construct a homomorphism $\overline{\psi}$ in the following way:
\begin{equation}
\overline{\psi}(\algc) = \begin{cases}
\overline{\phi_1}(\algc) \text{, when } \algc \in \algA';\\
\overline{\phi_2}(\algc) \text{, when } \algc \in \algB'.\\
\end{cases}
\end{equation}
$\overline{\psi}$ is a homomorphism because $\overline{\phi_1}$ and $\overline{\phi_2}$ are homomorphisms and for any $\alga' \in \algA'$ and $\algb' \in \algB'$
\begin{equation}
\begin{split}
& \alga' \land \algb' = \alga'; \\
& \alga' \lor \algb' = \algb'; \\
& \alga' \impl \algb' = \one; \\
& \algb' \impl \alga' = \alga'. \\ 
\end{split}
\end{equation} 
\end{proof}

\begin{cor} \label{con_fin} Let algebra $\algA = \algA' + \Z_2 \in \classV$ be finitely presentable over $\classV$. If $\algB$ is a finite algebra and $\algA' +\algB \in \classV$ then algebra $\algA' +\algB$ is finitely presentable over $\classV$.  
\end{cor}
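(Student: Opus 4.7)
The plan is to reduce Corollary~\ref{con_fin} to a direct application of Theorem~\ref{concatfp} by ``prepending'' a two-element algebra to the finite algebra $\algB$. Set $\tilde{\algB} := \Z_2 + \algB$. Then $\tilde{\algB}$ has the form $\Z_2 + \algB'$ required by Theorem~\ref{concatfp}, with its ``top part'' being precisely our $\algB$, so the theorem's conclusion becomes the statement we want.

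First I would verify that $\tilde{\algB} = \Z_2 + \algB$ lies in $\classV$. For this, consider the subset $S := \{\bot_{\algA'}\} \cup \algB$ of $\algA' + \algB$, where $\bot_{\algA'}$ denotes the least element of $\algA'$ (and hence of $\algA' + \algB$). A direct check of the Heyting operations in $\algA' + \algB$ shows that $S$ is a subuniverse: for $x, y \in \algB$ the operations $\land$, $\lor$, $\impl$ computed in $\algA' + \algB$ agree with those in $\algB$ (since all elements of $\algA'$ lie below all elements of $\algB$), and $x \impl \bot_{\algA'} = \bot_{\algA'}$ for every $x \in \algB$, because no element strictly above $\bot_{\algA'}$ in $\algA'$ meets $x$ to $\bot_{\algA'}$. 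The resulting subalgebra is isomorphic to $\Z_2 + \algB$, with $\bot_{\algA'}$ playing the role of the new bottom. Since $\algA' + \algB \in \classV$ by hypothesis and varieties are closed under subalgebras, $\tilde{\algB} \in \classV$.

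Next, being finite, $\tilde{\algB}$ is finitely presentable over $\Heyt$ (its diagram formula in the style of~\eqref{Jankov}, taken over all its elements as generators, provides a presentation), and this same presentation works over any subvariety of $\Heyt$ containing it, in particular over $\classV$. Now $\algA = \algA' + \Z_2$ is finitely presentable over $\classV$ by hypothesis, $\tilde{\algB} = \Z_2 + \algB$ is finitely presentable over $\classV$ by the previous paragraph, and $\algA' + \algB \in \classV$. Applying Theorem~\ref{concatfp} with its $\algB$ replaced by $\tilde{\algB}$ (so that the theorem's $\algB'$ is our $\algB$) yields that $\algA' + \algB$ is finitely presentable over $\classV$.

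The main technical point I anticipate is the subalgebra verification $\Z_2 + \algB \hookrightarrow \algA' + \algB$, which relies on the concrete behavior of Heyting implication across the cut in the concatenation but is essentially a routine calculation. The degenerate case in which $\algA'$ is the one-element algebra, making $\algA' + \algB = \algB$ already finite, is handled separately and is immediate since every finite algebra in $\classV$ is finitely presentable over $\classV$ via its diagram formula.
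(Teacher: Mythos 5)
Your proposal is correct and follows essentially the same route as the paper: the paper's proof likewise observes that $\algA' + \algB \in \classV$ forces $\Z_2 + \algB \in \classV$, notes that this finite algebra is finitely presentable, and then applies Theorem~\ref{concatfp}. Your subalgebra verification of $\Z_2 + \algB \hookrightarrow \algA' + \algB$ simply spells out a detail the paper leaves implicit.
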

\begin{proof} If $\algA' + \algB \in \classV$ then $\Z_2 + \algB \in \classV$. Since algebra $\Z_2 + \algB$ is finite, it is finitely presentable and we can apply the theorem.  
\end{proof}

\begin{cor} Variety $\classV = \classV(\Z \times \Z_2 + \Z)$ contains infinitely many infinite finitely presentable s.i. algebras.
\end{cor}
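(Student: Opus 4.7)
The plan is to combine Theorem \ref{mainth} with Corollary \ref{con_fin} in order to lift the single example $\Z' = \Z \times \Z_2 + \Z_2$ of Section 4 into an infinite family. Setting $\classV := \classV(\Z \times \Z_2 + \Z)$, Theorem \ref{mainth} applied with $\algA = \Z$ yields $\Z' \in FPSI(\classV)$. Writing $\Z' = \algA' + \Z_2$ with $\algA' := \Z \times \Z_2$ puts us in position to invoke Corollary \ref{con_fin}: whenever $\algB$ is a finite Heyting algebra with $\Z \times \Z_2 + \algB \in \classV$, the algebra $\Z \times \Z_2 + \algB$ is finitely presentable over $\classV$. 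Hence it suffices to exhibit infinitely many such $\algB$ whose concatenations are pairwise non-isomorphic and subdirectly irreducible.

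To this end, I would fix a sequence $\algB_1, \algB_2, \dots$ of pairwise non-isomorphic finite subdirectly irreducible homomorphic images of the Rieger-Nishimura ladder $\Z$, so that each $\algB_n \cong \Z/\nabla_n$ for some filter $\nabla_n \subseteq \Z$. By Proposition \ref{concatf},
\[
(\Z \times \Z_2 + \Z)/\nabla_n \;\cong\; \Z \times \Z_2 + \algB_n,
\]
which places $\Z \times \Z_2 + \algB_n$ in $\CHom(\Z \times \Z_2 + \Z) \subseteq \classV$. Corollary \ref{con_fin} then gives $\Z \times \Z_2 + \algB_n \in FP(\classV)$.

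Each algebra $\Z \times \Z_2 + \algB_n$ is infinite because $\Z \times \Z_2$ is, and it is subdirectly irreducible because the opremum of any concatenation $\algA + \algB$ is the opremum of its top summand, which exists by the s.i.\ choice of $\algB_n$. The concatenations are pairwise non-isomorphic since one can recover $\algB_n$ inside $\Z \times \Z_2 + \algB_n$ as the principal filter above the unique coatom of the bottom part $\Z \times \Z_2$; choosing the $\algB_n$ of pairwise distinct cardinalities then separates them.

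The only non-routine ingredient is the supporting fact that $\Z$ admits infinitely many pairwise non-isomorphic finite s.i.\ homomorphic images. This is classical for the Rieger-Nishimura ladder; alternatively, it can be replaced by using finite s.i.\ \emph{subalgebras} of $\Z$ of arbitrary size (for instance chains obtained from the dense part of $\Z$), combined with the elementary monotonicity that $\algB \leq \algC$ implies $\algA + \algB \leq \algA + \algC$, which places the relevant concatenations in $\classV$ directly.
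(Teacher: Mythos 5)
Your proof is correct and rests on exactly the same two ingredients as the paper's: Theorem \ref{mainth} (with $\algA=\Z$) to place $\Z'$ in $FPSI(\classV)$, and Corollary \ref{con_fin} to glue finite algebras on top of $\Z\times\Z_2$. The difference is in which finite algebras you glue and how you certify membership in $\classV$. The paper produces the family $\Z'+\Z_{2n+1}=\Z\times\Z_2+(\Z_2+\Z_{2n+1})$, so the finite factor fed into Corollary \ref{con_fin} is $\Z_2+\Z_{2n+1}$, and the hypothesis $\Z\times\Z_2+\Z_2+\Z_{2n+1}\in\classV$ is only justified by the terse remark that $\Z_2+\Z_{2n+1}\in\classV$ (which ultimately rests on these algebras embedding into $\Z$, a fact the paper does not spell out). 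You instead take finite s.i.\ quotients $\algB_n\cong\Z/\nabla_n$ (in the paper's notation, the algebras $\Z_{2n+1}$) and glue them directly onto $\Z\times\Z_2$, so that $\Z\times\Z_2+\algB_n$ is a homomorphic image of the generating algebra by Proposition \ref{concatf}; this makes the $\classV$-membership hypothesis of Corollary \ref{con_fin} immediate and is arguably cleaner than the paper's route. You also make explicit what the paper leaves implicit: that the resulting algebras are s.i.\ (the opremum comes from the finite top factor) and pairwise non-isomorphic. The only loose spot is the ``recover $\algB_n$ as the principal filter above the unique coatom of the bottom part'' remark, since one must exhibit an isomorphism invariant rather than refer to the bottom part of a particular presentation; with the $\algB_n$ of distinct cardinalities this is easily repaired, e.g.\ the set of elements whose principal filter is finite has exactly $|\algB_n|+1$ elements in $\Z\times\Z_2+\algB_n$. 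The supporting fact you invoke as classical (infinitely many pairwise non-isomorphic finite s.i.\ homomorphic images of $\Z$) is indeed standard, so the argument goes through.
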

\begin{proof} From Theorem \ref{mainth} it follows that algebra $\Z'$ is finitely presentable over $\classV$. On the other hand, for all $n =1,2,\dots$ we have $\Z_2 + \Z_{2n+1} \in \classV$. By virtue of Corollary \ref{con_fin} all algebras $\Z' + \Z_{2n+1}; \ n=1,2,\dots$ are finitely presentable over $\classV$. 
\end{proof}

\section{Axiomatization by characteristic formulas}

It is well-known that not every variety (or every intermediate logic for this matter) can be axiomatized by Jankov formulas. In fact, there is a continuum of varieties that cannot be axiomatized by Jankov formulas \cite{Tomaszewski_PhD}[Corollary p.128]. In this section we will show that there is a continuum of varieties that cannot be axiomatized by Jankov formulas but, nevertheless, can be axiomatized by characteristic formulas. In order to do so we will construct an infinite independent set of characteristic formulas and then we will demonstrate that any subset of this set defines a variety that cannot by axiomatized by Jankov formulas.

First, let us observe the following simple criterion (the proof in terms of frames can be found, for instance, in \cite{Bezhanishvili_N_PhD}[Corollary 3.4.14(1)]).

\begin{prop} \label{Janax} A variety $\classV$ can be axiomatized by Jankov formulas if and only if  for every algebra $\algA \notin \classV$ there is such a finite algebra $\algB \in \CSub\CHom\algA$ that $\algB \notin \classV$.
\end{prop}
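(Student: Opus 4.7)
The plan is to prove each direction by a direct argument using Proposition \ref{char} (which characterizes when a Jankov formula is refuted), plus the standard fact that, for the variety of Heyting algebras, the congruence extension property yields $\CHom\CSub \subseteq \CSub\CHom$.

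For the forward direction, suppose $\classV$ is axiomatized by a family $\{\chi(\algB_i):i\in I\}$ of Jankov formulas of finite s.i.\ algebras $\algB_i$. Take any $\algA \notin \classV$. Then $\algA$ refutes some axiom, i.e.\ $\algA \nvDash \chi(\algB_i)$, and Proposition \ref{char}(a) gives $\algB_i \in \CSub\CHom\algA$. Since $\algB_i \nvDash \chi(\algB_i)$ but every algebra in $\classV$ validates $\chi(\algB_i)$, we get $\algB_i \notin \classV$. This finite $\algB_i$ witnesses the condition.

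For the backward direction, I would let
\[
\Gamma \;=\; \bigl\{\chi(\algB)\,:\, \algB \text{ finite s.i.},\ \algB \notin \classV\bigr\},
\]
and set $\classV' = \Var(\Gamma)$. Showing $\classV \subseteq \classV'$ is straightforward: if some $\algA \in \classV$ failed $\chi(\algB)$ for a finite s.i.\ $\algB \notin \classV$, then by Proposition \ref{char}(a) we would have $\algB \in \CSub\CHom\algA \subseteq \classV$, a contradiction. For the converse $\classV' \subseteq \classV$, take $\algA \in \classV'$ and assume $\algA \notin \classV$; by hypothesis there is a finite $\algB \in \CSub\CHom\algA$ with $\algB \notin \classV$. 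The main obstacle here is that $\Gamma$ only contains Jankov formulas of s.i.\ algebras, whereas the given $\algB$ need not be s.i. I would resolve this by invoking Birkhoff's subdirect decomposition: $\algB$ is a subdirect product of finitely many finite s.i.\ quotients, and since $\algB \notin \classV$ at least one such quotient $\algB'$ lies outside $\classV$. Then $\algB' \in \CHom\algB \subseteq \CHom\CSub\CHom\algA$, and the congruence extension property for $\Heyt$ gives $\CHom\CSub\CHom\algA \subseteq \CSub\CHom\CHom\algA = \CSub\CHom\algA$; so $\algB' \in \CSub\CHom\algA$ is a finite s.i.\ algebra outside $\classV$.

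To finish: $\chi(\algB') \in \Gamma$, hence $\algA \vDash \chi(\algB')$. On the other hand, $\algB' \nvDash \chi(\algB')$ and validity is preserved under $\CSub\CHom$ (both operators send refutations downward: valuations in a subalgebra are valuations in the whole algebra, and a valuation in a homomorphic image lifts to a valuation in the source along any chosen preimage), so from $\algB' \in \CSub\CHom\algA$ we get $\algA \nvDash \chi(\algB')$. This contradiction yields $\algA \in \classV$, completing the proof that $\classV = \classV' = \Var(\Gamma)$ is axiomatized by Jankov formulas.
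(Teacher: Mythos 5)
Your proof is correct. Note that the paper itself gives no proof of Proposition \ref{Janax}: it only points to a frame-theoretic argument in Bezhanishvili's thesis (Corollary 3.4.14(1)), so your algebraic argument is a self-contained counterpart of what the paper delegates to a citation rather than a variant of an in-paper proof. Both directions are sound: the forward direction is the expected application of Proposition \ref{char}(a) together with the standard fact $\algB_i \nvDash \chi(\algB_i)$ (which the paper also uses tacitly, e.g.\ in Corollary \ref{indep}, and which you could flag as coming from the diagram valuation). In the backward direction you correctly identified and handled the one real subtlety, namely that the hypothesized finite algebra $\algB$ need not be s.i.: passing to a finite s.i.\ factor of its subdirect decomposition that lies outside $\classV$, and then using the congruence extension property to get $\CHom\CSub\CHom\algA \subseteq \CSub\CHom\algA$, is exactly the transitivity of the quasi-order $\leq$ that the paper records when it introduces antichains, so your use of CEP is consistent with the paper's own toolkit. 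The final step, that a refutation in a member of $\CSub\CHom\algA$ lifts to a refutation in $\algA$ by choosing preimages under the quotient map, is also correct. What your route buys is a purely algebraic proof in the paper's own language (Jankov formulas and the operators $\CSub$, $\CHom$), at the modest cost of invoking Birkhoff's subdirect representation and CEP explicitly, whereas the cited frame-based proof works with finite rooted frames where the s.i.\ issue does not arise in the same form.
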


Variety $\classV$ is called \cite{MaltsevBook} \textit{locally finite} if for every $n$ there is a number $m$ such that every $n$-generated $\classV$-algebra contains less than $m$ elements.

\begin{cor}\cite{Bezhanishvili_N_PhD,Tomaszewski_PhD}. Every locally finite variety of Heyting algebras can be axiomatized by Jankov formulas. 
\end{cor}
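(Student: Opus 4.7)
The plan is to apply Proposition~\ref{Janax}: it suffices to show that for every algebra $\algA \notin \classV$ there is a finite algebra $\algB \in \CSub\CHom\algA$ with $\algB \notin \classV$. Since $\classV$ is equational, the failure $\algA \notin \classV$ provides a $\classV$-valid identity $\varphi(\up) = \one$ and a tuple $\ua = \alga_1,\ldots,\alga_n \in \algA$ with $e := \varphi(\ua) < \one_\algA$. Passing to the $n$-generated subalgebra $\algC := \langle \ua \rangle \subseteq \algA$, the identity $\varphi = \one$ still fails in $\algC$, so $\algC \notin \classV$; moreover $\algC \in \CSub\algA$, and by the congruence extension property of Heyting algebras (used already in the proof of Corollary~\ref{indep}) we have $\CHom\algC \subseteq \CSub\CHom\algA$. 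The task thus reduces to finding a finite member of $\{\algC\} \cup \CHom\algC$ that lies outside $\classV$.

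If $\algC$ is itself finite, we take $\algB = \algC$ and are done. Otherwise the target is a filter $\nabla$ of $\algC$ with $e \notin \nabla$ and $\algC/\nabla$ finite: such a quotient refutes $\varphi = \one$ (because $[e] \neq [\one]$) and hence lies outside $\classV$. Local finiteness of $\classV$ enters here. The $\classV$-reflection $\algC^* := \algC/\rho_\classV$, where $\rho_\classV$ is the congruence generated by $\{(s(\ua), t(\ua)) : s = t \text{ is a } \classV\text{-valid identity}\}$, is $n$-generated and lies in $\classV$, hence has at most $|F_\classV(n)|$ elements. The corresponding filter $\nabla_\classV \ni e$ therefore already has finite index in $\algC$, although $\algC/\nabla_\classV = \algC^* \in \classV$ is not the quotient we want.

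The main obstacle, and the real content of the corollary, is to refine $\nabla_\classV$ to a strictly smaller finite-index filter $\nabla$ with $e \notin \nabla$, that is, to separate the distinct elements $e$ and $\one$ of $\algC$ by a finite-index congruence distinct from $\nabla_\classV$. My plan is to invoke the residual-finiteness machinery for finitely generated Heyting algebras developed through Esakia duality: the Esakia dual of a finitely generated Heyting algebra is obtained as an inverse limit of finite $p$-morphic images, so finite-index congruences of $\algC$ suffice to separate any two of its points, in particular $e$ and $\one$. (As an alternative I would cite the structural arguments of Bezhanishvili's and Tomaszewski's theses given in the statement.) Once such $\nabla$ is secured, $\algB := \algC/\nabla$ is finite, refutes $\varphi = \one$, does not belong to $\classV$, and lies in $\CSub\CHom\algA$ via the chain $\algB \in \CHom\algC \subseteq \CSub\CHom\algA$; Proposition~\ref{Janax} then delivers the Jankov axiomatization of $\classV$.
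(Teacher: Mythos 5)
Your reduction is fine as far as it goes: invoking Proposition~\ref{Janax}, passing to the finitely generated subalgebra $\algC=\langle\ua\rangle$ on which the $\classV$-identity $\varphi=\one$ fails, and using the congruence extension property to get $\CHom\algC\subseteq\CSub\CHom\algA$ all match the paper's setup. The gap is the step you yourself flag as ``the real content'': producing a finite-index filter $\nabla$ of $\algC$ with $e=\varphi(\ua)\notin\nabla$. Your justification --- that finitely generated Heyting algebras are residually finite because ``the Esakia dual is an inverse limit of finite p-morphic images, so finite-index congruences separate points'' --- is false, and the duality fact is stated backwards: under Esakia duality finite p-morphic images of the dual space correspond to finite \emph{subalgebras} of $\algC$, not finite quotients, and if every dual of a finitely generated algebra were such an inverse limit, every finitely generated Heyting algebra would be a directed union of finite subalgebras, which already fails for the $1$-generated Rieger--Nishimura lattice $\Z$. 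More decisively: if for every finitely generated Heyting algebra $\algC$ and every $e<\one$ there were a finite quotient in which the image of $e$ stays below $\one$, then every identity failing in a variety $\classW$ would fail in a finite member of $\classW$, so every variety of Heyting algebras would be finitely approximated --- contradicting Jankov's continuum of non-finitely-approximated varieties recalled in the paper's introduction \cite{Jankov_1968}. Local finiteness of $\classV$ cannot repair this, since $\algC\notin\classV$ and your reflection $\algC^*$ gives no control over congruences of $\algC$ strictly below $\nabla_\classV$; and falling back on ``cite the theses'' is not a proof.

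The paper avoids finite quotients altogether and goes through finite \emph{subalgebras}: after reducing to an infinite finitely generated $\algA\notin\classV$, it invokes Kuznetsov's theorem \cite{Kuznetsov_1973} to the effect that such an algebra contains finite chain subalgebras of arbitrarily large size, and then uses the bound on cardinalities of $\classV$-algebras supplied by local finiteness to conclude that a large enough such subalgebra lies outside $\classV$; Proposition~\ref{Janax} then finishes. Note that this also shows your target was stronger than necessary: Proposition~\ref{Janax} only asks for \emph{some} finite member of $\CSub\CHom\algA$ outside $\classV$, and it may have to be a subalgebra rather than a quotient refuting your chosen identity, so even granting your setup the quotient route is not the right place to look.
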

\begin{proof} Let $\classV$ be a locally finite variety. It suffices to demonstrate that every $n$-generated algebra $\algA \notin \classV$ can be separated from $\classV$ by some Jankov formula. For finite algebras the statemnt is trivial, so we can assume that $\algA$ is an infinite algebra. Let $m$ be an upper bound of powers of $n$-generated algebras of $\classV$. By Kuznetsov Theorem \cite{Kuznetsov_1973} algebra $\algA$ contains chain subalgebras of any finite length. Thus, it contains a finite subalgebra of power greater than $m$. Hence, this subalgebra is not in $\classV$ and we can apply Proposition \ref{Janax}.  
\end{proof}

\begin{cor} If \(\classV\) is a variety and $\algA \in FPSI(\classV)$ is an infinite algebra, then formula $\chi(\algA)$ defines a variety that cannot be axiomatized by Jankov formulas. 
\end{cor}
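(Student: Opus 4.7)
The plan is to apply the criterion of Proposition~\ref{Janax} to the variety $\classV' := \{\algB \in \Heyt : \algB \vDash \chi_{_\classV}(\algA)\}$ axiomatized by the single formula $\chi_{_\classV}(\algA)$. By the contrapositive of that proposition, it suffices to exhibit one algebra $\algC \notin \classV'$ with the property that every finite member of $\CSub\CHom\algC$ belongs to $\classV'$. The natural candidate is $\algC = \algA$ itself.

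First I would verify that $\algA \notin \classV'$. Writing $\chi_{_\classV}(\algA) = A(\up) \impl B(\up)$ from a presentation $\lbr A, \nu \rbr$ in which $\nu(B)$ is the opremum of $\algA$, direct evaluation at the generators $\ua = \nu(\up)$ gives $\one \impl \nu(B) = \nu(B) \neq \one$, so $\algA$ refutes its own characteristic formula.

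The core step is to show that every finite $\algB \in \CSub\CHom\algA$ lies in $\classV'$. Since $\algA \in \classV$ and varieties are closed under $\CSub$ and $\CHom$, any such $\algB$ automatically belongs to $\classV$, which is precisely the hypothesis needed to invoke Theorem~\ref{charf}(a). If, for contradiction, $\algB \nvDash \chi_{_\classV}(\algA)$, Theorem~\ref{charf}(a) yields $\algA \in \CSub\CHom\algB$; but $\algB$ is finite, whence every member of $\CSub\CHom\algB$ is finite, contradicting the hypothesis that $\algA$ is infinite. Hence $\algB \vDash \chi_{_\classV}(\algA)$, i.e., $\algB \in \classV'$. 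Combining this with $\algA \notin \classV'$ and feeding both into Proposition~\ref{Janax} yields the conclusion.

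I do not anticipate a serious obstacle; the only point requiring attention is the observation that $\algB \in \CSub\CHom\algA$ forces $\algB \in \classV$, so that Theorem~\ref{charf}(a) is legitimately applicable to $\algB$ despite the theorem's statement restricting it to members of $\classV$.
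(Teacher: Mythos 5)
Your proposal is correct and follows essentially the same route as the paper's own proof: apply Proposition~\ref{Janax} with $\algA$ itself as the witness, and use Theorem~\ref{charf}(a) together with finiteness of $\algB \in \CSub\CHom\algA$ to rule out $\algB \nvDash \chi_{_\classV}(\algA)$. Your explicit check that $\algA$ refutes its own characteristic formula, and the remark that $\CSub\CHom\algA \subseteq \classV$ legitimizes the appeal to Theorem~\ref{charf}(a), are exactly the (partly implicit) points of the paper's argument.
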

\begin{proof} Let $\classV' = \left\{ \algB; \ \algB \vDash \chi(\algA) \right\}$ be a variety defined by formula $\chi(\algA)$. Due to Proposition \ref{Janax} it suffices to show that all finite members of $\CSub\CHom\algA$ belong to $\classV'$. For contradiction: assume that $\algB \in \CSub\CHom\algA$ is a finite algebra and $\algB \notin \classV'$. Then $\algB \nvDash \chi(\algA)$. Since $\algB \in \CSub\CHom\algA \subseteq \classV$, we can apply Theorem \ref{charf}(a) and conclude that $\algA \in \CSub\CHom\algB$ which is impossible because $\algA$ is an infinite algebra while $\algB$ is finite algebra and, therefore, all algebras from $\CSub\CHom\algB$ are finite.
\end{proof}

Moreover, in the similar way one can prove the following.

\begin{cor} \label{notJ} Let $\classV$ be a variety and $\classK$ be a set of infinite algebras from $FPSI(\classV)$. Then the set of formulas $\left\{ \chi(\algA); \ \algA \in \classK \right\}$ defines a variety that cannot be axiomatized by Jankov formulas. 
\end{cor}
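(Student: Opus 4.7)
The proof plan is to adapt the argument of the preceding corollary to handle a whole family of characteristic formulas simultaneously. Let $\classV'$ denote the variety defined by $\left\{\chi_{_\classV}(\algA) \;;\; \algA \in \classK\right\}$. By Proposition \ref{Janax}, to show that $\classV'$ is not axiomatizable by Jankov formulas it suffices to produce a single algebra in $\Heyt \setminus \classV'$ every finite member of whose $\CSub\CHom$-class already lies in $\classV'$.

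First, I would pick any $\algA \in \classK$. Since $\algA \nvDash \chi_{_\classV}(\algA)$, clearly $\algA \notin \classV'$, so $\algA$ is a candidate witness. The whole task is then to verify that every finite $\algB \in \CSub\CHom\algA$ belongs to $\classV'$, i.e., that $\algB \vDash \chi_{_\classV}(\algA')$ for every $\algA' \in \classK$ (including $\algA' = \algA$).

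Suppose, for contradiction, that some finite $\algB \in \CSub\CHom\algA$ refutes $\chi_{_\classV}(\algA')$ for some $\algA' \in \classK$. Note that $\algA \in \classV$ and $\classV$ is closed under both $\CSub$ and $\CHom$, so $\algB \in \classV$. Theorem \ref{charf}(a), applied to the finitely presentable s.i.\ algebra $\algA'$, then forces $\algA' \in \CSub\CHom\algB$; but $\algB$ is finite, so every member of $\CSub\CHom\algB$ is finite, contradicting the assumption that $\algA'$ is infinite. This establishes the claim, and Proposition \ref{Janax} yields the conclusion.

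There is no genuine obstacle here beyond the cardinality clash between the infinite $\algA' \in \classK$ and the hypothetical finite separator $\algB$; the only delicate point is recalling that $\algB \in \classV$ (so that Theorem \ref{charf}(a) applies) and that the witness $\algA \in \classK$ used to populate the finite part of $\CSub\CHom\algA$ can be chosen independently of the other algebras in $\classK$ whose characteristic formulas must simultaneously hold on each such finite $\algB$.
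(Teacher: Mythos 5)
Your proposal is correct and follows essentially the same route as the paper: choose a witness $\algA \in \classK$ (which refutes its own characteristic formula, hence lies outside $\classV'$), and show via Theorem \ref{charf}(a) plus the finite/infinite cardinality clash that every finite member of $\CSub\CHom\algA$ satisfies all formulas $\chi_{_\classV}(\algA')$ for $\algA' \in \classK$, so Proposition \ref{Janax} applies. Your write-up is in fact slightly more explicit than the paper's about why $\algA \notin \classV'$ and why $\algB \in \classV$, but the argument is the same.
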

\begin{proof} Let $\classV' = \left\{ \algB; \ \algB \vdash \chi(\algA), \ \algA \in \classK \right\}$ be a variety defined by formulas $\left\{ \chi(\algA); \ \algA \in \classK \right\}$. Due to Theorem \ref{Janax} it suffices to show that for some algebra $\algA \in \classK$ all finite members of $\CSub\CHom\algA$ belong to $\classV'$. For contradiction: assume that $\algB \in \CSub\CHom\algA$ is a finite algebra and $\algB \notin \classV'$. Then $\algB \nvDash \chi(\algC)$ for some $\algC \in \classK$. Since $\algB \in \CSub\CHom\algA \subseteq \classV$, we can apply Theorem \ref{charf}(a) and conclude that $\algC \in \CSub\CHom\algB$ which is impossible because $\algC$ is an infinite algebra while $\algB$ is finite algebra and, therefore, all algebras from $\CSub\CHom\algB$ are finite.
\end{proof}

\begin{remark} It is important to note that in the Corollary \ref{notJ} all algebras from $\classK$ are finitely presentable over the same variety $\classV$.
\end{remark}

\subsection{Varieties not axiomatizable by Jankov formulas}

\begin{theorem} \label{cont} There is a continuum of varieties that are axiomatized by characteristic formulas, but cannot be axiomatized by Jankov formulas.
\end{theorem}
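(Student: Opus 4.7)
The strategy is to exhibit a countably infinite antichain $\classK = \{\algA_i\}_{i \in \mathbb{N}}$ of infinite algebras in $FPSI(\classV)$ for some fixed variety $\classV$, and then use the $2^{\aleph_0}$ subsets of $\classK$ to produce a continuum of distinct varieties, each axiomatizable by characteristic formulas but not by Jankov formulas. Here the quasi-order is the one introduced before Corollary~\ref{indep}: $\algA \leq \algB \iff \algA \in \CSub\CHom\algB$.

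For the variety I would take $\classV = \classV(\Z \times \Z_2 + \Z)$ from Section~5, which by the final corollary of that section contains the infinitely many infinite algebras $\Z'$ and $\Z' + \Z_{2n+1}$ (for $n \geq 1$) in $FPSI(\classV)$. The main technical step is to extract from this family (or a refinement obtained via Corollary~\ref{con_fin}) an infinite antichain with respect to $\leq$. The heuristic reason such an antichain exists is that by Proposition~\ref{concatf} every homomorphic image of $\Z' + \Z_{2n+1}$ either factors as $\Z' + (\Z_{2n+1}/\nabla)$ (when the congruence is supported above the opremum of $\Z'$) or collapses enough of $\Z'$ to destroy its infinite part, so that the isomorphism type of the $1$-generated upper factor becomes an invariant distinguishing the members of the family. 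If the direct argument is not sufficient, one falls back on plugging a known infinite antichain of finite s.i.\ Heyting algebras on top of $\Z'$ inside a suitably enlarged variety $\classV(\Z\times\Z_2+\algC)$ where $\algC$ is chosen large enough to contain all the plugs; by Corollary~\ref{con_fin} each resulting concatenation is finitely presentable over this variety.

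Once such an antichain $\classK = \{\algA_i\}_{i\in\mathbb{N}}$ of infinite algebras in $FPSI(\classV)$ has been secured, Corollary~\ref{indep1} gives that the set $\{\chi_{_\classV}(\algA_i) : i\in\mathbb{N}\}$ is independent. For each nonempty $S\subseteq\mathbb{N}$ define
\[
\classV_S \bydef \{\algB \in \Heyt : \algB \vDash \chi_{_\classV}(\algA_i) \text{ for every } i \in S\}.
\]
By construction $\classV_S$ is axiomatized by characteristic formulas, and since $\{\algA_i : i\in S\}$ is a set of infinite algebras from $FPSI(\classV)$, Corollary~\ref{notJ} guarantees that $\classV_S$ cannot be axiomatized by Jankov formulas.

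Finally, I would verify that distinct $S$ yield distinct $\classV_S$. Given $S\ne S'$, pick without loss of generality $i\in S\setminus S'$. Then $\algA_i \nvDash \chi_{_\classV}(\algA_i)$ places $\algA_i$ outside $\classV_S$. On the other hand, for every $j\in S'$ the antichain property gives $\algA_j \not\leq \algA_i$, so by the contrapositive of Theorem~\ref{charf}(a) (applicable because $\algA_i \in \classV$) we have $\algA_i \vDash \chi_{_\classV}(\algA_j)$, which puts $\algA_i$ inside $\classV_{S'}$. Thus $\classV_S \ne \classV_{S'}$, and since there are $2^{\aleph_0}$ nonempty subsets of $\mathbb{N}$, we obtain the required continuum. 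The principal obstacle in the whole argument is step one: building the infinite antichain of infinite finitely presentable s.i.\ algebras inside a single variety.
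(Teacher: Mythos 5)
Your overall architecture is exactly the paper's: fix one variety $\classV$, produce an infinite antichain of infinite algebras in $FPSI(\classV)$, invoke Corollary \ref{indep1} for independence and Corollary \ref{notJ} for non-axiomatizability by Jankov formulas, and separate the continuum of varieties via Theorem \ref{charf}(a). Those parts of your argument are correct. The problem is that you leave the one genuinely hard step --- constructing the antichain --- as a heuristic, and your first-choice family is doubtful. For $\classV=\classV(\Z\times\Z_2+\Z)$ and the algebras $\Z'+\Z_{2n+1}$ you only argue that the upper one-generated factor is an ``invariant'' of homomorphic images; but the relevant relation is homo-embeddability, and $\Z'+\Z_{2k+1}$ could embed into a quotient of $\Z'+\Z_{2m+1}$ by mapping its upper part non-surjectively into the upper part of the other algebra. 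This is not a hypothetical worry: the paper's own remark in the section on varieties not axiomatizable by characteristic formulas notes that the analogous family $\Z+\Z_{2n+1}+\Z_2$ fails to be an antichain (e.g.\ $\Z+\Z_7+\Z_2$ embeds into $\Z+\Z_9+\Z_2$), which is precisely why the paper does not use one-generated tops but instead the tops $\Z_{2n}+\Z_2+\Z_2$, for which non-homo-embeddability is a classical result of Gerchiu--Kuznetsov and Wro\'nski.

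Your fallback (``plug a known finite antichain on top of $\Z'$ inside $\classV(\Z\times\Z_2+\algC)$ with $\algC$ large enough'') is in spirit the paper's actual construction, but it is left entirely unexecuted, and the verifications it requires are the substance of the proof. The paper takes $\algA=\Z\times\Z_2+\prod_{n\geq 3}(\Z_{2n}+\Z_2+\Z_2)$ and $\algA_k=\Z\times\Z_2+\Z_{2k}+\Z_2+\Z_2$, and must check: (i) each $\algA_k$ lies in $\classV(\algA)$, which uses the fact that $\Z_{2k}+\Z_2+\Z_2$ is a homomorphic image of the infinite product together with Proposition \ref{concatf}; (ii) $\Z'=\Z\times\Z_2+\Z_2$ is finitely presentable over this variety, via Theorem \ref{mainth} (which applies because the generating algebra has the form $\Z\times\Z_2+\algA$); (iii) $\algA_k$ itself is finitely presentable, via Theorem \ref{concatfp} (or Corollary \ref{con_fin}), whose hypothesis $\algA'+\algB'\in\classV$ is exactly point (i); and (iv) the antichain property, proved by a case analysis on whether the kernel of the homomorphism contains the top of $\Z\times\Z_2$, reducing to the cited non-embeddability of $\Z_{2k}+\Z_2+\Z_2$ in $\CSub\CHom(\Z_{2m}+\Z_2+\Z_2)$. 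None of (i)--(iv) appears in your proposal, and (iv) in particular cannot be waved through for your $\Z_{2n+1}$-tops. So the reduction to an antichain is right, but the proof as written has a genuine gap exactly where the paper does its real work.
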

\begin{proof} In order to prove the theorem we will demonstrate that there is such a variety $\classV$ that $FPSI(\classV)$ contains an infinite antichain of its infinite members. Indeed, if $\classK \subseteq FPSI(\classV)$ is an antichain of infinite algebras, then, by virtue of Corollary \ref{indep1}, the set of formulas $CH = \left\{ \chi(\algA); \ \algA \in \classK \right\}$ is independent. Thus, all the varieties defined by distinct subsets $CH' \subseteq CH$ are pairwise different. And, due to Corollary \ref{notJ}, no variety defined by any formulas from $CH$ can be axiomatized by Jankov formula.  

Let 
\begin{equation}
\algA = \Z \times \Z_2 + \prod_{n=3}^\infty( \Z_{2n} + \Z_2 + \Z_2). 
\end{equation}

Let us consider algebras
\begin{equation}
\algA_n = \Z \times \Z_2 + \Z_{2n} + \Z_2 + \Z_2 \ ; \ n=1,2,\dots.
\end{equation} 
We need to demonstrate
\begin{itemize}
\item[(a)] For every $k$ algebra $\algA_k$ is finitely presentable over $\classV(\algA)$;
\item[(b)] For any $m \neq k$ algebra $\algA_m \notin \CSub\CHom\algA_k$, that is, the set $\left\{ \algA_n; \ n=1,2,\dots \right\}$ forms an antichain.
\end{itemize}

(a) First, let us observe that for any $k$ algebra $\Z_{2k} + \Z_2 + \Z_2$ is a homomorphic image of the direct product 
\begin{equation*}
\algB = \prod_{n=3}^\infty( \Z_{2n} + \Z_2 + \Z_2).
\end{equation*}
Thus, for each $k$ there is such a filter $\nabla_k \subseteq \algB$ that $\algB/\nabla_k \cong \algA_k$. By virtue of Proposition \ref{concatf}, we have
\[
A_k = \Z \times \Z_2 + (\Z_{2k} + \Z_2 +\Z_2) \cong \Z \times \Z_2 + (\algB/\nabla_k) \cong (\Z \times \Z_2 + \algB)/\nabla_k) = \algA/\nabla_k. 
\]
So, $\algA_k \in \CHom\algA$, hence, $\algA_k \in \classV(\algA) = \classV$. Let us also observe that algebras $\Z \times \Z_2 + \Z_2$ and $\Z_2 + \Z_{2k} + \Z_2 +\Z_2$ are subalgebras of algebra $\algA_k$. Therefore $\Z \times \Z_2 + \Z_2 ,\Z_2 + \Z_{2k} + \Z_2 +\Z_2 \in \classV$. From the Theorem \ref{mainth} it follows that algebra $\Z \times \Z_2 + \Z_2$ is finitely presentable over $\classV$. On the other hand, algebra $\Z_2 + \Z_{2k} + \Z_2 +\Z_2$ is finite and is finitely presentable over $\classV$ too. Now we can apply Theorem \ref{concatfp} and conclude that algebra $\algA_k$ is finitely presentable over $\classV$.

(b) Let $k \neq m$. We need to demonstrate that $\algA_k \notin \CSub\CHom\algA_m$. Let $\nabla \subseteq \algA_m$ be a filter. Let us consider two cases: 
\begin{enumerate}
\item \(\nabla\) contains the top element of algebra $\Z \times \Z_2$; 
\item \(\nabla\) does not contain the top element of algebra $\Z \times \Z_2$.
\end{enumerate}

\textbf{Case 1.}  If \(\nabla\) contains the top element of algebra $\Z \times \Z_2$, then $\algA_k/\nabla \cong (\Z \times \Z_2)/\nabla'$, where $\nabla' = \nabla \cap (\Z \times \Z_2)$. It is not hard to see that algebra $\Z \times \Z_2$ has just 2 infinite homomorphic images, namely, itself and the algebra $\Z$. Let us observe that algebra \(\Z \times \Z_2\) is not a proper subalgebra of itself or of algebra \(\Z\). Hence, algebra \(\algA_k\) is not embeddable in any homomorphic image of \(\algA_m\) as long as its kernel contains the top element of algebra $\Z \times \Z_2$.

\textbf{Case 2.} The important point to note here is that in algebra $\algA_k$ the top element of algebra $\Z \times \Z_2$ is at the same time the bottom element of algebra $\Z_{2k}$. Thus, by virtue of Proposition \ref{concatf}, all considerations can be reduced to algebras $\Z_{2k} + \Z_2 +\Z_2$ and $\Z_{2m} + \Z_2 +\Z_2$. But it is well known (see, for instance \cite{Gerchiu_Kuznetsov,Wronski_Cadr_1974}) that if $k \neq m$, then $\Z_{2k} + \Z_2 + \Z_2 \notin \CSub\CHom(\Z_{2m} +\Z_2 + \Z_2)$.      
\end{proof}

%%%%%%%%%%%%%%%%%%%%%%%%%%%%%%%%%%%%%%
% Pre-true formulas
%%%%%%%%%%%%%%%%%%%%%%%%%%%%%%%%%%%%%%

\subsection{Varieties not axiomatizable by characteristic formulas}

As we saw in the previous section there is a continuum of intermediate logics that cannot be axiomatizable by Jankov formulas, but can be axiomatized by characteristic formulas. Naturally the question arises whether any intermediate logic can be axiomatized by characteristic formulas. In this section we give a negative answer to this question.

Let us recall a notion of pre-true formula introduced by A. V. Kuznetsov \cite{Kuznetsov_Gerchiu} and used also by A. Wronski \cite{Wronski_1973}: a formula $A$ is called \textit{pre-true} in algebra $\algA$ if it is not valid in $\algA$, but is valid in all proper subalgebras and homomorphic images of $\algA$.

It is not hard to see that if $A$ is a Jankov formula of some finite algebra $\algA$, then $A$ is pre-true in $\algA$.

We will say that an algebra $\algA$ is \textit{self-embeddable} if $\algA$ is a proper subalgebra of itself, or it is embeddable in some proper own homomorphic image. For instance, finite algebras are not self-embeddable, Rieger-Nishimura ladder $\Z$ is not self embeddable, while algebra $\Z_2 + \Z$ is self-embeddable.

\begin{theorem} Assume $\classV$ is a variety and $\algA \in \classV$ is a not self-embeddable s.i. algebra finitely presentable over $\classV$. Then characteristic formula of algebra $\algA$ over $\classV$ is a pre-true formula of algebra $\algA$. 
\end{theorem}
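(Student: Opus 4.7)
The plan is to unpack the definition of pre-truth and verify its three constituents: (i) $\chi_{_\classV}(\algA)$ is refuted in $\algA$, (ii) it is valid on every proper subalgebra of $\algA$, and (iii) it is valid on every proper homomorphic image of $\algA$.

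For (i), I would appeal directly to the construction of $\chi_{_\classV}(\algA) = A(\up) \impl B(\up)$. Under the defining valuation $\nu$, we have $\nu(A) = \one$ by clause (2) of Definition \ref{fpdef}, while $\nu(B)$ is by construction the opremum of $\algA$, which is distinct from $\one$. Hence $\nu(\chi_{_\classV}(\algA)) = \one \impl \nu(B) = \nu(B) \neq \one$, so $\algA \nvDash \chi_{_\classV}(\algA)$. This is essentially immediate from the definitions and requires no real work.

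For (ii) and (iii), I would argue by contradiction using Theorem \ref{charf}(a). Let $\algB$ be either a proper subalgebra or a proper homomorphic image of $\algA$; in both cases $\algB \in \classV$ since $\classV$ is a variety. Suppose $\algB \nvDash \chi_{_\classV}(\algA)$; then $\algA \in \CSub\CHom\algB$, so $\algA$ embeds into some $\algB/\nabla$. In the homomorphic-image case, $\algB = \algA/\nabla_0$ with $\nabla_0 \neq \{\one\}$, and $\algB/\nabla \cong \algA/\nabla'$ for some $\nabla' \supseteq \nabla_0$; thus $\algA$ embeds into the proper homomorphic image $\algA/\nabla'$ of itself, contradicting the non-self-embeddability of $\algA$. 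In the subalgebra case, I would invoke the congruence extension property of $\Heyt$ (already cited earlier in the paper) to extend $\nabla$ to a filter $\nabla^* \subseteq \algA$ with $\nabla^* \cap \algB = \nabla$, giving an embedding $\algB/\nabla \hookrightarrow \algA/\nabla^*$. If $\nabla^* \neq \{\one\}$, composing with $\algA \hookrightarrow \algB/\nabla$ embeds $\algA$ into the proper homomorphic image $\algA/\nabla^*$, again contradicting non-self-embeddability. If $\nabla^* = \{\one\}$, then $\nabla = \{\one\}$ as well, so $\algA$ embeds into $\algB$ itself; composing this embedding with the proper inclusion $\algB \hookrightarrow \algA$ makes $\algA$ isomorphic to a proper subalgebra of itself, the other clause of non-self-embeddability.

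The only nontrivial step is the subalgebra case, where the CEP is doing the real work: it lets us lift an arbitrary quotient of $\algB$ to a quotient of the ambient $\algA$ so that both halves of the definition of ``self-embeddable'' become available for the contradiction. Everything else is direct bookkeeping around the characteristic formula and the non-self-embeddability hypothesis.
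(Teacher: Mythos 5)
Your proof is correct and follows essentially the same route as the paper's: refute $\chi_{_\classV}(\algA)$ in $\algA$ directly from the presentation, then use Theorem \ref{charf}(a) to turn a refutation on a proper subalgebra or proper homomorphic image into a self-embedding of $\algA$, contradicting the hypothesis. The only difference is that you spell out the congruence extension step in the subalgebra case, which the paper leaves implicit (it relies on the earlier remark that CEP gives transitivity of the order $\algA \leq \algB$ iff $\algA \in \CSub\CHom\algB$).
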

\begin{proof} Let $C$ be a characteristic formula of algebra $\algA$ over $\classV$. Then, by definition, $\algA \nvDash C$. We need to prove that $\algA' \vDash C$ for any proper subalgebra or homomorphic image $\algA'$ of algebra $\algA$. Assume the contrary: $\algA' \nvDash C$. Then $\algA$ is embeddable in some homomorphic image of $\algA'$ and, therefore, $\algA$ is embeddable in a proper subalgebra or a proper homomorphic image of itself and, thus, $\algA$ is self-embeddable.  
\end{proof}

\begin{example} As we proved in Theorem \ref{mainth}, algebra $\Z'$ is finitely presentable over any variety $\classV(\Z \times \Z_2 + \algA)$. Thus, if $C$ is a characteristic formula of algebra $\Z'$ over variety $\classV(\Z')$, then $C$ is a pre-true formula of algebra $\algA$. Moreover, if $\algA_1$ and $\algA_2$ are two non-isomorphic algebras and $C_1$ and $C_2$ are characteristic formulas of algebra $\Z'$ over varieties $\classV(\Z \times \Z_2 + \algA_1)$ and $\classV(\Z \times \Z_2 + \algA_2)$, then both formulas $C_1$ and $C_2$ are pre-true in $\Z'$ even though formulas $C_1$ and $C_2$ may not be equivalent. 
\end{example}

One of the important questions regarding characteristic formulas is which varieties (or intermediate logics) can be axiomatized by characteristic formulas\footnote{For characteristic formulas of finite algebras this problem is studied in \cite{Bezhanishvili_N_PhD}.}. The following proposition gives some examples when a variety cannot be axiomatized by characteristic formulas of finite algebras. 

\begin{prop}\label{pretrue} Suppose $\classV$ is a variety defined by a formula $A$ which is a pre-true formula of some infinite algebra\footnote{The examples of such formulas can be found, for instance, in \cite{Kuznetsov_Gerchiu,Wronski_1973}} $\algA$. Then variety $\classV$ cannot be defined by Jankov formulas. 
\end{prop}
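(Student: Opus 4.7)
The plan is to argue by contradiction using the criterion from Proposition~\ref{Janax}: a variety is axiomatizable by Jankov formulas precisely when every algebra outside it admits a finite algebra in $\CSub\CHom$ of it that also lies outside the variety. I would assume $\classV$ is axiomatizable by Jankov formulas and derive a contradiction from the pre-truth hypothesis.

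Since $A$ is pre-true in $\algA$ we have $\algA \nvDash A$, hence $\algA \notin \classV$. Applying Proposition~\ref{Janax} to $\algA$ produces a finite algebra $\algC \in \CSub\CHom \algA$ with $\algC \notin \classV$. So I would fix a homomorphic image $\algA'$ of $\algA$ such that $\algC$ embeds into $\algA'$, and split into two cases depending on whether $\algA'$ is $\algA$ itself or a proper homomorphic image of $\algA$. Because $\algA$ is infinite while $\algC$ is finite, $\algC$ cannot be isomorphic to $\algA$, and this cardinality mismatch is what will let me invoke pre-truth in each case.

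In the case $\algA' = \algA$, the embedding realizes $\algC$ as a subalgebra of $\algA$; since $\algC$ is finite and $\algA$ infinite, it is a proper subalgebra, so the pre-truth of $A$ gives $\algC \vDash A$, i.e.\ $\algC \in \classV$. In the case where $\algA'$ is a proper homomorphic image of $\algA$, pre-truth gives $\algA' \vDash A$ directly, and validity passes to the subalgebra $\algC$, so again $\algC \in \classV$. Either conclusion contradicts $\algC \notin \classV$, finishing the argument.

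The only delicate point --- and really the only ``obstacle'' --- is making sure the notion of ``proper'' in the definition of pre-truth is correctly invoked: a subalgebra or quotient of $\algA$ is non-proper only when it is (isomorphic to) $\algA$, and the infiniteness of $\algA$ together with the finiteness of the witness $\algC$ coming from Proposition~\ref{Janax} eliminates that possibility. Once this is pinned down, the proof is a direct application of the Jankov-axiomatizability criterion together with the defining clauses of pre-truth.
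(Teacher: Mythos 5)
Your proof is correct and follows essentially the same route as the paper: both arguments reduce to the observation that any finite witness in $\CSub\CHom\algA$ lying outside $\classV$ must, by the cardinality mismatch, sit inside a proper subalgebra or proper homomorphic image of $\algA$, where pre-truth of $A$ forces it back into $\classV$. The only cosmetic difference is that you invoke the criterion of Proposition~\ref{Janax}, while the paper applies the homo-embedding property of Jankov formulas directly to a hypothetical axiom refuted on $\algA$; the underlying argument is the same.
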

\begin{proof} First, let us observe that since formula $A$ is pre-true on $\algA$, all the proper subalgebras, homomorphic images of $\algA$ and their subalgebras are in $\classV$. Hence, neither Jankov formula can separate algebra $\algA$ from $\classV$: if $X$ is a Jankov formula of some finite s.i. algebra $\algB$ such that it is valid on all algebras from $\classV$ but refutable on $\algA$, by well-known properties of Jankov formulas, we have that algebra $\algB$ is embeddable in some homomorphic image of algebra $\algA$. Thus, since $\algB$ is finite while $\algA$ is infinite, formula $X$ is refutable on some proper subalgebra or proper homomorphic image of algebra $\algA$ and $\algB \in \classV$, and the latter contradicts $\algB \nvDash X$.
\end{proof}

\begin{example} Let $C$ be a characteristic formula of algebra $\Z'$ over $\classV(\Z')$. Then formula $C$ defines a variety that cannot by defined by Jankov formulas of finite algebras.
\end{example}

Now we will prove the main theorem of this section.

\begin{theorem} There exist the intermediate logics that are not axiomatizable by characteristic formulas.
\end{theorem}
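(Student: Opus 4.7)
The plan is to strengthen the Jankov-style argument of Proposition \ref{pretrue} to the richer supply of characteristic formulas. The key observation is that, exactly as in the Jankov case, any refutation of a characteristic formula on an algebra $\algA$ is already witnessed by a finitely generated subalgebra of $\algA$. Hence it suffices to produce an infinite subdirectly irreducible Heyting algebra $\algA$ on which some formula $F$ is pre-true and which, in addition, fails to be finitely generated.

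For the existence step I would invoke a Kuznetsov--Gerchiu style pre-true construction \cite{Kuznetsov_Gerchiu, Wronski_1973}, applied to an infinite s.i. algebra deliberately chosen so that no finite tuple of its elements generates the whole algebra --- for example, an infinite layered concatenation built on the algebra $\Z'$ of Section 4, in which each layer contributes a generator essential to the refutation of $F$. Pre-truth of $F$ on such an $\algA$ would be verified by a direct case analysis of the possible proper subalgebras and proper homomorphic images, in the style of the auxiliary lemmas supporting Theorem \ref{mainth}.

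Granted such $\algA$ and $F$, set $\classV := \{\algB \in \Heyt : \algB \vDash F\}$. Then $\algA \notin \classV$, while pre-truth places every proper subalgebra and every proper homomorphic image of $\algA$ inside $\classV$. Suppose for contradiction that $\classV$ is axiomatized by a family of characteristic formulas $\chi_{\classW_i}(\algC_i)$ with $\algC_i \in FPSI(\classW_i)$. Since $\algA \notin \classV$, some member $\chi_{\classW}(\algC) = A(\up) \impl B(\up)$ of this family is refuted in $\algA$ via a valuation $\ub = (\algb_1, \dots, \algb_n)$ in $\algA$. Let $\algA'$ be the subalgebra of $\algA$ generated by $\ub$. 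Since $\algA'$ inherits the operations of $\algA$, the very same valuation refutes $\chi_{\classW}(\algC)$ in $\algA'$, so $\algA' \nvDash \chi_{\classW}(\algC)$. If $\algA' \subsetneq \algA$, pre-truth forces $\algA' \in \classV$, hence $\algA' \vDash \chi_{\classW}(\algC)$, contradicting $\algA' \nvDash \chi_{\classW}(\algC)$. Otherwise $\algA' = \algA$, whence $\algA$ is finitely generated, contradicting the construction of $\algA$. Either way we reach a contradiction, so $\classV$ admits no axiomatization by characteristic formulas.

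The main obstacle is the very first step: producing an infinite subdirectly irreducible Heyting algebra that simultaneously carries a pre-true formula and is not finitely generated. The contradiction argument around this example is a short reduction, but constructing a sufficiently ``unbounded'' s.i. algebra and verifying pre-truth on it by direct computation --- ensuring both that every finitely generated subalgebra omits a structural feature needed to refute $F$ and that the collapse of any single filter restores validity of $F$ --- is the real content of the theorem.
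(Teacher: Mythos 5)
There is a fatal gap at the very first step, and your own opening observation is what creates it. Suppose $F(p_1,\dots,p_n)$ is pre-true on an algebra $\algA$. A refutation of $F$ in $\algA$ is given by finitely many elements $\algb_1,\dots,\algb_n\in\algA$, and it is inherited by the subalgebra $\algA'$ generated by these elements. Since pre-truth makes $F$ valid on every \emph{proper} subalgebra of $\algA$, we must have $\algA'=\algA$; that is, any algebra carrying a pre-true formula is automatically finitely generated (by at most $n$ elements). Hence an infinite s.i. Heyting algebra that simultaneously carries a pre-true formula and fails to be finitely generated does not exist, and the ``existence step'' you defer to a Kuznetsov--Gerchiu style construction can never be carried out. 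A second symptom of the same problem: your reduction uses no property specific to characteristic formulas (the finite-witness observation holds for \emph{every} formula), so if it worked it would show that $\classV=\{\algB:\algB\vDash F\}$ is not axiomatizable by any formulas at all --- absurd, since $F$ itself axiomatizes $\classV$. This proves your two hypotheses on $\algA$ are jointly unsatisfiable, not merely hard to arrange.

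The paper avoids this trap by a different mechanism. It works inside the variety $\classV$ of the logic \textbf{KG}, where, by the Kuznetsov--Gerchiu description of finitely generated s.i. algebras as finite concatenations of $1$-generated algebras together with Theorem 2.14 of \cite{Citkin_Not_2012}, neither $\classV$ nor any of its subvarieties contains an infinite finitely presentable s.i. algebra; consequently the only characteristic formulas available there are Jankov formulas of finite algebras. It then exhibits a subvariety defined by a formula that is pre-true on the infinite but \emph{finitely generated} algebra $\Z+\Z_7+\Z_2$, and Proposition \ref{pretrue} rules out axiomatizability by Jankov formulas: there the decisive point is the finiteness of the would-be splitting algebra $\algB$ (which forces the refutation down to a proper subalgebra or proper homomorphic image of the infinite algebra), not any failure of finite generation of $\algA$. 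To exclude characteristic formulas of infinite algebras one must show that no such algebras are finitely presentable over the relevant varieties --- which is exactly what the appeal to \cite{Citkin_Not_2012} accomplishes and what your argument has no substitute for.
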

\begin{proof} Let us consider the intermediate logic defined by the following axiom (the logic \textbf{KG } from \cite{Bezhanishvili_N_G_de_Jongh_2008}):
\begin{equation}
(p \impl q) \lor (q \impl r) \lor ((q \impl r) \impl r) \lor (r \impl (p \lor q)) \label{dense}
\end{equation}
And let \(\classV\) be a corresponding variety. From \cite{Kuznetsov_Gerchiu}[Lemma 4] it follows that any finitely generated s.i. algebra from \(\classV\) is a concatenation of finite number of some 1-generated algebras. Thus, every infinite finitely generated generated s.i. algebra from \(\classV\) is a concatenation of finite number of 1-generated algebras at least one of which is infinite, i.e. at least one of which is \(\Z\). From \cite{Citkin_Not_2012}[Theorem 2.14] it immediately follows that neither \(\classV\), nor any of its subvarieties contain infinite finitely  presentable s.i. algebras. Therefore, if we demonstrate that variety \(\classV\) contains subvarieties that are not axiomatizable by Jankov formula, we can complete the proof.  

Let us consider the Kuznetsov-Gerchiu algebra \(\mathsf{KG} = \Z + \Z_7 +\Z_2\) (diagram and frame of which are depicted at Fig. 2) 

\[
\ctdiagram{
\ctnohead
\ctinnermid
\ctel 0,15,15,0:{}
\ctel 15,0,45,30:{}
\ctel 15,0,45,30:{}
\ctel 0,15,45,60:{}
\ctel 0,15,15,30:{}
\ctel 30,15,0,45:{}
\ctel 45,30,10,65:{}
\ctel 0,45,20,65:{}
\ctel 45,60,40,65:{}
\ctel 20,90,5,105:{}
\ctel 20,90,35,105:{}
\ctel 5,105,35,135:{}
\ctel 35,105,50,120:{}
\ctel 35,105,20,120:{}
\ctel 50,120,35,135:{}
\ctel 35,135,35,165:{}
%%%%%%%%%%%%%---------
\ctv 0,15:{\bullet}
\ctv 15,0:{\bullet}
\ctv 30,15:{\bullet}
\ctv 45,30:{\bullet}
\ctv 15,30:{\bullet}
\ctv 30,45:{\bullet}
\ctv 45,60:{\bullet}
\ctv 0,45:{\bullet}
\ctv 15,60:{\bullet}
\ctv 20,80:{\overbrace{. \ . \ . \ . \ . \ . \ . \ . \ .}^{}}
\ctv 20,90:{\bullet}
\ctv 5,105:{\bullet}
\ctv 35,105:{\bullet}
\ctv 20,120:{\bullet}
\ctv 50,120:{\bullet}
\ctv 35,135:{\bullet}
\ctv 35,150:{\bullet}
\ctv 35,165:{\bullet}
%%%%%%%%%%%%%---------
%%%%%%%%%%%%%---------
\ctel 150,0,150,30:{}
\ctel 150,30,180,60:{}
\ctel 150,30,135,45:{}
\ctel 180,60,160,80:{}
\ctel 135,45,170,80:{}
\ctel 150,105,150,150:{}
\ctel 180,105,180,150:{}
\ctel 150,150,180,135:{}
\ctel 150,135,180,120:{}
\ctel 150,120,180,105:{}
\ctel 150,105,155,100:{}
\ctel 180,150,150,120:{}
\ctel 180,135,150,105:{}
\ctel 180,120,175,115:{}
\ctel 180,105,175,100:{}
%%%%%%%%%%%%%---------
\ctv 150,0:{\bullet}
\ctv 150,30:{\bullet}
\ctv 165,45:{\bullet}
\ctv 135,45:{\bullet}
\ctv 180,60:{\bullet}
\ctv 165,75:{\bullet}
\ctv 150,90:{.}
\ctv 180,90:{.}
\ctv 150,95:{.}
\ctv 180,95:{.}
\ctv 150,100:{.}
\ctv 180,100:{.}
\ctv 150,105:{\bullet}
\ctv 180,105:{\bullet}
\ctv 150,120:{\bullet}
\ctv 180,120:{\bullet}
\ctv 150,135:{\bullet}
\ctv 180,135:{\bullet}
\ctv 150,150:{\bullet}
\ctv 180,150:{\bullet}
}
\]
\begin{center} Figure 2. Kuznetsov-Ger{\v{c}}iu algebra \end{center}

This algebra has a pre-true formula, namely, the formula:
\begin{equation}
(((p \impl q) \impl q) \impl p) \impl (p \lor (p \impl q)) \impl (r \lor ((p \impl q) \lor ((p \impl q) \impl q)). \label{KGF}
\end{equation}
(cf. \cite{Kuznetsov_Gerchiu}[formula 2],\cite{Gerchiu_1972}[formula 4]). In \cite{Gerchiu_1972}[Lemma 2] it was proven that if the formula \eqref{dense} is valid in some algebra \(\algA\) then formula \eqref{KGF} is valid in \(\algA\) if and only if algebras \(\algA_1 = \Z_7 + \Z_2\) and \(\algA_2 = \Z_2 + \Z_7 + \Z_2 \) are not embeddable in \(\algA\). Let us observe that algebras \(\algA_1,\algA_2\) are not embeddable in any proper homomorphic image or any proper subalgebra of algebra \(\mathsf{KG}\). Thus formula \eqref{KGF} is a pre-true formula and, by virtue of Proposition \ref{pretrue}, the variety \(\classV\) has a subvariety that is not axiomatizable by Jankov formulas, namely, the subvariety defined by formula \eqref{KGF}. 
\end{proof}

\begin{remark} Using similar reasoning one can prove that all algebras \(\Z + \Z_{2n+1}+\Z_2\), where \(k > 2\), have pre-true formulas and obtain an infinite sequence of logics not axiomatizable by characteristic formulas. Let us note though that algebras  $\Z + \Z_{2n+1}+\Z_2\); \(k > 2$ (as opposed to algebras \(\Z \times Z_2 + \Z_{2n+1}+\Z_2\)) do not form an antichain: for instance, algebra $Z + Z_7 +Z_2$ is embeddable into algebra $Z + Z_9 +Z_2$. Thus, we cannot use algebras $\Z + \Z_{2n+1}+\Z_2$ in order to construct a continuum of logics not axiomatizable by characteristic formulas.  
\end{remark}

%%%%%%%%%%%%%%%%%%%%%%%%%%%%%%%%%%%%%%%%%%%%%%%%
% Characteristic formulas of interior algebras
%%%%%%%%%%%%%%%%%%%%%%%%%%%%%%%%%%%%%%%%%%%%%%%%

\section{Characteristic formulas of interior algebras}

In this section, using connections between varieties of Heyting and interior algebras, we will prove analogous results for varieties of interior algebras. 

\subsection{Basic definition}

Some facts regarding connections between varieties of Heyting and interior algebras that we will be using can be found in \cite{Maksimova_Rybakov,Chagrov_Zakh}.

We consider interior algebras in the signature $\land, \lor. \impl, \neg, \Box$. By $\Heyt$ we denote a variety of all Heyting algebras and by $\Topo$ we denote a variety of all interior algebras. Formulas without occurrences of $\Box$ we will call \textit{assertoric} as opposed to the \textit{modal} formulas in the extended signature. However, we will often omit ``modal'' if no confusion arises. If $A$ is an assertoric formula by $T(A)$ we denote the G\"odel-McKinsey-Tarski translation of $A$. If $A$ is a formula by $Var(A)$ we denote a set of all variables occurring in $A$. 

We will use the following notation and statements from (\cite{Maksimova_Rybakov}): if $\classM \subseteq \Topo$ is a variety then $\rho(\classM) \subseteq \Heyt$ is a variety defined by all assertoric formulas $A$ whenever $\vdash_\classM T(A)$. If $\algB$ is an interior algebra then $H(\algB)$ is the Heyting algebra of open elements of the algebra $\algB$ that we will call \textit{Heyting carcass of} $\algB$. Then, $\rho(\classM) = \left\{H(\algB): \algB \in \classM \right\}$ and $\rho$ is a homomorphism \cite{Maksimova_Rybakov} of the complete lattice of subvarieties of $\Topo$ onto complete lattice of subvarieties of $\Heyt$.

If $\algA$ is a Heyting algebra then \textit{a modal span of} $\algA$ (span for short) is the smallest relative to embeddings interior algebra $s(\algA)$, the Heyting carcass of which is isomorphic with $\algA$. The span of algebra $\algA$ can be constructed by taking the free Boolean extension $B(\algA)$ of $\algA$, and for each $\alga \in B(\algA)$ letting $\Box\alga = \bigwedge_{i=1}^n(\alga_i \impl \alga_i')$ , where $\alga = \bigwedge_{i=1}^n(\neg \alga_i \lor \alga_i')$ (see \cite[p. 191]{Blok_Dwinger_1975} or \cite[pp. 128-130]{Rasiowa_Sikorski}). Then $(B(\algA),\Box) $ is an interior algebra, indeed a $Grz$-algebra, where $Grz$ is the well known Grzegorczyk system. The Blok-Esakia Theorem establishes an isomorphism between lattices of varieties of Heyting and Grzegorczyk algebras. 

If $\algB$ is an interior algebra,	 then by $\algB^o$ we denote a subalgebra of $\algB$ generated by its open elements, that is, by elements of $H(\algB)$. In fact \cite[Lemma 3.4]{Maksimova_Rybakov}, $\algB^o$ is a modal span of $H(\algB)$.

%%%%%%%%%%%%%%%%%%%%%%%%%%%%%%%%%%%%%%%%%%%%%%%%%%
% Finitely presentable interior algebras
%%%%%%%%%%%%%%%%%%%%%%%%%%%%%%%%%%%%%%%%%%%%%%%%%%

\subsection{Finitely presentable interior algebras}

For interior algebras finite presentability can be defined in the following way.

\begin{definition}\label{fpdefi} (cf. \cite{MaltsevBook}) Let $\classM \subseteq \Topo$ be a variety of interior algebras, $A(\up)$ be a formula and $\nu$ be a valuation in algebra $\algA$. Then a pair $\lbr A, \nu \rbr$ \textit{defines algebra} $\algA$ over $\classI$ if
\begin{enumerate}
\item Elements $\alga_1 = \nu(p_1),\dots, \alga_n = \nu(p_n)$ generate algebra $\algA$;
\item $A(\ua) = \one$;
\item For any formula $B(\up)$ if $B(\ua) = \one$ then $\vdash_\classM(A(\up) \dimpl B(\up))$.
\end{enumerate} 
\end{definition}

The relation between finite presentability of Heyting algebra $\algA$ and interior algebra $s(\algA)$ can be expressed by following proposition.

\begin{prop} \cite{Citkin_Not_2012} \label{translf} Let $\classM$ be a variety of interior algebras and $\algA$ be a Heyting algebra. Algebra $s(\algA)$ is finitely presentable over $\classM$ if and only if algebra $\algA$ is finitely presentable over $\rho(\classM)$.
\end{prop}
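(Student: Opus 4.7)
The plan is to verify both directions via Proposition~\ref{homext}, exploiting two structural facts about the modal span: for $\algB \in \rho(\classM)$ the span $s(\algB)$ itself lies in $\classM$, because $s(\algB)$ embeds as an interior subalgebra into any witness $\algC \in \classM$ with $H(\algC) \cong \algB$ and varieties are closed under subalgebras; and any Heyting homomorphism $\algA \to \algB$ extends uniquely to an interior homomorphism $s(\algA) \to s(\algB)$, while every interior homomorphism restricts to its carcass.

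For the direction $(\Leftarrow)$, starting from a presentation $\lbr A(\up), \nu \rbr$ of $\algA$ over $\rho(\classM)$ with $A$ assertoric, I would propose the modal presenting formula
\begin{equation*}
A^{*}(\up) \;=\; T(A)(\up) \;\wedge\; \bigwedge_{i=1}^{n} (p_i \impl \Box p_i)
\end{equation*}
on the same generators, now viewed as open elements of $s(\algA)$. The second conjunct forces any $\ub$ satisfying $A^{*}$ in some $\algC \in \classM$ to consist of open elements, after which the GMT translation reduces the first conjunct to the Heyting identity $A(\ub) = \one_{H(\algC)}$. The presentation of $\algA$ over $\rho(\classM)$ then produces a Heyting homomorphism $\algA \to H(\algC)$, which extends by universality of the span to the required interior-algebra homomorphism $s(\algA) \to \algC$; Proposition~\ref{homext} closes the argument.

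For the direction $(\Rightarrow)$, by Proposition~\ref{tietze} I first choose a presentation $\lbr A^{*}(\up), \nu^{*} \rbr$ of $s(\algA)$ over $\classM$ in which $\nu^{*}(p_i) = \alga_i$ are open; these elements also generate $\algA$ as a Heyting algebra, because the interior subalgebra of $s(\algA)$ generated by open $\ualga$ coincides with $s(H)$ for $H$ the Heyting subalgebra of $\algA$ generated by $\ualga$. The task then reduces to distilling from $A^{*}$ an assertoric formula $A(\up)$ with the property that, in any interior algebra $\algC$ and for any open $\ub \in H(\algC)$, $A^{*}(\ub) = \one_{\algC}$ if and only if $A(\ub) = \one_{H(\algC)}$. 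Given such $A$, the conditions of Definition~\ref{fpdef} follow readily: $A(\ualga) = \one_\algA$ comes from the equivalence at $\algC = s(\algA)$, and for any $\algB \in \rho(\classM)$ with $\ub$ satisfying $A(\ub) = \one$, the equivalence applied inside $s(\algB) \in \classM$ gives $A^{*}(\ub) = \one_{s(\algB)}$, so the $\classM$-presentation of $s(\algA)$ yields an interior homomorphism $s(\algA) \to s(\algB)$ whose restriction to open elements is the desired Heyting homomorphism $\algA \to \algB$.

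The chief obstacle is the production of the assertoric $A$ from $A^{*}$. Since $\Box A^{*}$ belongs to the open fragment of the modal language, it should be $\classM$-equivalent to $T(A)$ for some assertoric $A$, obtained by a structural-induction normal-form argument that iteratively pushes boxes outward and collapses boxed Boolean connectives to their Heyting counterparts (using $S4$-derivable equivalences such as $\Box(\Box p \wedge \Box q) \leftrightarrow T(p \wedge q)$ and $\Box(\Box p \vee \Box q) \leftrightarrow T(p \vee q)$). This reduction to the GMT-translated fragment on open inputs is the only non-routine step; everything else is bookkeeping around the GMT translation, the universal property of the modal span, and the closure of $\rho(\classM)$ under $s(\cdot)$.
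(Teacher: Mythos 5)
The paper does not actually prove Proposition \ref{translf}; it is imported from \cite{Citkin_Not_2012}, so your argument can only be judged on its own merits. Its architecture is sound: both directions are correctly reduced to Proposition \ref{homext}, your two structural facts about the span are true (that $s(\algB)\in\classM$ whenever $\algB\in\rho(\classM)$ follows from $\algC^o$ being a span of $H(\algC)$, as the paper recalls, and the extension of Heyting homomorphisms $\algA\to H(\algC)$ to interior homomorphisms $s(\algA)\to\algC$ is the standard universal property of the free Boolean extension plus a check that $\Box$ is preserved), and the formula $T(A)\wedge\bigwedge_i(p_i\impl\Box p_i)$ does present $s(\algA)$ over $\classM$, so the $(\Leftarrow)$ direction is essentially complete.

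The genuine gap is the step you yourself flag in the $(\Rightarrow)$ direction: the existence of an assertoric $A$ such that $\Box A^*$ and $T(A)$ agree on open arguments. The statement is true, but the syntactic induction you sketch does not go through as described: $\Box$ does not commute with $\lor$, $\impl$ or $\neg$, so the value of $\Box\psi$ on open inputs is not determined by the values of $\Box$ on the subformulas of $\psi$, and your sample equivalences only handle boxed atoms. A correct argument is algebraic: if $H$ is the Heyting subalgebra of $H(\algC)$ generated by open elements $\ub$, then every element of the Boolean subalgebra generated by $H$ can be written as $\bigwedge_i(\neg\algc_i\lor\algc_i')$ with $\algc_i,\algc_i'\in H$, and $\Box\bigwedge_i(\neg\algc_i\lor\algc_i')=\bigwedge_i(\algc_i\impl\algc_i')\in H$ (the same computation that defines the span); hence the interior subalgebra generated by open elements has exactly $H$ as its set of open elements. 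Applying this to the algebra presented over $\Topo$ by the relations $p_i\impl\Box p_i$, whose generators are open and generate it, the open element $\Box A^*$ of that algebra equals $T(A)$ of the generators for some assertoric $A$, and Proposition \ref{homext} transfers this identity to arbitrary interior algebras and arbitrary open $\ub$, which is the uniform equivalence you need (even over all of $\Topo$, not just $\classM$). A second, minor, omission: before invoking Proposition \ref{tietze} you must check that $s(\algA)$ has a \emph{finite} generating set of open elements; this holds because any finite generating set of $s(\algA)=B(\algA)$ lies in the Boolean subalgebra generated by finitely many elements of $\algA$. With these two repairs your proof is correct.
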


The following theorem is a straight consequence of Proposition \ref{translf} and Theorem \ref{mainth}.

\begin{theorem} \label{mainthi} Let $\algA$ be any non-degenerate Heyting algebra and $\Z^* = \Z \times \Z_2 + \algA$. Then algebra $s(\Z')$, where $\Z' = \Z \times Z_2 + \Z_2$, is finitely presentable over variety $\classV(s(\Z^*))$.
\end{theorem}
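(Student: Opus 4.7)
The plan is to chain Proposition \ref{translf} with Theorem \ref{mainth}, with the single intermediate step being to identify $\rho(\classV(s(\Z^*)))$ with $\classV(\Z^*)$. Once that identification is in place, Proposition \ref{translf} applied to the variety $\classM = \classV(s(\Z^*))$ and the Heyting algebra $\Z'$ says that $s(\Z')$ is finitely presentable over $\classM$ if and only if $\Z'$ is finitely presentable over $\rho(\classM) = \classV(\Z^*)$, and the latter is exactly the conclusion of Theorem \ref{mainth}.

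So the only step that needs justification is $\rho(\classV(s(\Z^*))) = \classV(\Z^*)$. The inclusion $\classV(\Z^*) \subseteq \rho(\classV(s(\Z^*)))$ is immediate from the construction of the span: by definition the Heyting carcass $H(s(\Z^*))$ is isomorphic to $\Z^*$, so $\Z^* \in \rho(\classV(s(\Z^*)))$, and since $\rho$-images are varieties, $\classV(\Z^*) \subseteq \rho(\classV(s(\Z^*)))$. For the reverse inclusion I would invoke the Blok-Esakia theorem as recalled in the preceding subsection: the span $s(\Z^*)$ is a $Grz$-algebra, so $\classV(s(\Z^*))$ is a subvariety of $Grz$, and the restriction of $\rho$ to the lattice of subvarieties of $Grz$ is an isomorphism onto the lattice of subvarieties of $\Heyt$ whose inverse sends a variety $\classV(\algA)$ of Heyting algebras to the variety generated by the spans of its members. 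Under this isomorphism $\classV(s(\Z^*))$ corresponds to $\classV(H(s(\Z^*))) = \classV(\Z^*)$, giving the desired equality.

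The hard work is already contained in Theorem \ref{mainth}; this theorem is essentially a transport of that result across the Blok-Esakia bridge via Proposition \ref{translf}. There is no real obstacle, only the conceptual point that one must recognize the span as a $Grz$-algebra in order to identify the $\rho$-image of the modal variety it generates. Once that is observed the proof is a one-line citation of the two earlier results, which is presumably why the author describes the theorem as a straight consequence.
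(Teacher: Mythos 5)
Your proposal is correct and is essentially the paper's own argument: the paper derives Theorem \ref{mainthi} as a straight consequence of Proposition \ref{translf} and Theorem \ref{mainth}, with the identification $\rho(\classV(s(\Z^*))) = \classV(\Z^*)$ left implicit, which you supply. Note only that this identification already follows from the basic semantic fact that an interior algebra $\algB$ validates $T(A)$ iff its Heyting carcass $H(\algB)$ validates $A$ (so $\rho(\classV(\algB)) = \classV(H(\algB))$ for any interior algebra $\algB$), so your appeal to the full Blok--Esakia isomorphism, while workable, is heavier than necessary and tacitly uses $\classV(s(\Z^*)) = \sigma(\classV(\Z^*))$, which itself is most easily seen via that same fact together with monotonicity of $\rho$.
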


\begin{example} Let $\algA= s(\Z \times \Z_2 + \Z_2))$ (see Fig.1). Obviously $\algA$ is subdirectly irreducible and, according to  Theorem \ref{mainthi}, algebra $\algA$ is finitely presentable over $\classV(\algA)$.
\end{example}

\begin{remark} In \cite[Section E]{Kracht_1990} M.Kracht shows the way how to construct an infinite set of infinite algebras that are splitting the variety corresponding to the logic of \textbf{S4}-frames of width 3.   
\end{remark}

Now we can define a characteristic formula for interior algebra similarly to how we did it for Heyting algebra.

\begin{definition} Assume $\classM$ is a variety of interior algebras and $\algA \in \classM$ is a s.i. algebra finitely presentable over $\classM$. Suppose $\lbr A(\up), \nu\rbr$ is a presentation of $\algA$ over $\classM$. If $B(\up)$ is such a formula that $\nu(B)$ is an opremum of $H(\algA)$, then the formula
\begin{equation}
\chi_{_\classM}(\algA) = A(\up) \dimpl B(\up)
\end{equation}
is a \textit{characteristic formula of algebra }$\algA$ \textit{over variety }$\classM$.
\end{definition} 

It is not hard to see that Theorem \ref{charf} holds true for interior algebra.

\begin{theorem} \label{charfi} Assume $\classM \subseteq \Topo$ , $\algA \in FPSI(\classM)$, $\algB \in \classM$ and $B$ is a formula. Then
\begin{itemize}
\item[(a)] if $\algB \nvDash \chi_{_\classM}(A)$, then $\algA \in \CSub\CHom\algB$;
\item[(b)] if $\algA \nvDash B$, then $B \vdash_\classM \chi_{_\classV}(A)$.
\end{itemize} 
\end{theorem}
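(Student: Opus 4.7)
The plan is to mirror the proof of Theorem \ref{charf} essentially verbatim, since Definition \ref{fpdefi} is structurally identical to Definition \ref{fpdef} and the finite-presentability criterion of Proposition \ref{homext} applies to any variety (so in particular to varieties of interior algebras). For part (a), I would begin with a presentation $\lbr A(\up), \nu \rbr$ of $\algA$ over $\classM$ together with a formula $B(\up)$ such that $\nu(B(\up)) = \alga$ is an opremum of the Heyting carcass $H(\algA)$. The hypothesis $\algB \nvDash \chi_{_\classM}(\algA)$ supplies, via passage to a suitable quotient, a homomorphic image $\algB'$ of $\algB$ and elements $\algb_1,\dots,\algb_n \in \algB'$ with $A(\ub) = \one_{\algB'}$ and $B(\ub) \neq \one_{\algB'}$. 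Finite presentability then yields a homomorphism $\phi : \algA \to \algB'$ extending $\nu(p_i) \mapsto \algb_i$, for which $\phi(\alga) \neq \one_{\algB'}$.

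The one step requiring genuine attention, and the principal obstacle, is to conclude from $\phi(\alga) \neq \one_{\algB'}$ that $\phi$ is injective. In the Heyting case this uses the fact that in an s.i. Heyting algebra the opremum lies in the kernel of every proper homomorphism. In the modal setting one transfers this through the Heyting carcass: congruences of an interior algebra $\algA$ correspond bijectively to filters of $H(\algA)$ (via the standard open-filter correspondence), so $\algA$ being s.i. forces $H(\algA)$ to be s.i., with $\alga$ as its opremum. A non-trivial kernel of $\phi$ would therefore correspond to a proper filter of $H(\algA)$ containing $\alga$, forcing $\phi(\alga) = \one_{\algB'}$, a contradiction. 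Hence $\phi$ must be injective, and $\algA$ embeds in $\algB'$, giving $\algA \in \CSub\CHom\algB$.

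For part (b) the argument is formal and identical to the Heyting case: if $\algA \nvDash B$ and $B \nvdash_\classM \chi_{_\classM}(\algA)$, there would exist $\algB \in \classM$ with $\algB \vDash B$ and $\algB \nvDash \chi_{_\classM}(\algA)$; applying (a) yields $\algA \in \CSub\CHom\algB$, so $\algA \vDash B$, contradicting the hypothesis. No use of the G\"odel--McKinsey--Tarski translation $T$ is needed at this stage, because the entire argument takes place in the modal signature; the translation only enters later when relating varieties of Heyting algebras to varieties of interior algebras.
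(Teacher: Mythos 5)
Your proof is correct and is essentially the argument the paper intends: the paper gives no separate proof of Theorem \ref{charfi}, simply noting that Theorem \ref{charf} carries over, and your adaptation — transferring subdirect irreducibility through the congruence/open-filter correspondence so that the opremum of $H(\algA)$ lies in the kernel of every proper homomorphism, forcing $\phi$ to be an embedding — supplies exactly the modal ingredient that makes the Heyting argument go through. Part (b) is the same formal contrapositive as in the Heyting case, so nothing further is needed.
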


Using Theorems \ref{charfi} and \ref{cont} one can prove the following theorem.

\begin{theorem} \label{conti} There is a continuum of varieties of interior algebras that are defined by characteristic formulas, but cannot be axiomatized by Jankov formulas.
\end{theorem}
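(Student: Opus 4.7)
The plan is to transport the proof of Theorem \ref{cont} across the Blok--Esakia correspondence by means of the modal span functor $s$ and the open-element (carcass) functor $H$, using the modal analogs of finite presentability (Proposition \ref{translf}) and of the characteristic-formula machinery (Theorem \ref{charfi}) already established in the paper.

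Take the Heyting algebras from the proof of Theorem \ref{cont}: set $\algA := \Z \times \Z_2 + \prod_{n=3}^{\infty}(\Z_{2n}+\Z_2+\Z_2)$ and $\algA_k := \Z \times \Z_2 + \Z_{2k}+\Z_2+\Z_2$ for $k \geq 1$, and put $\classM := \classV(s(\algA))$, so that $\rho(\classM) = \classV(H(s(\algA))) = \classV(\algA)$. First verify $s(\algA_k) \in FPSI(\classM)$: membership holds because $\algA_k \in \classV(\algA)$ gives $s(\algA_k) \in \classV(s(\algA))$ through the Blok--Esakia lattice isomorphism; subdirect irreducibility passes from $\algA_k$ to its span, since the span of an s.i.\ Heyting algebra is an s.i.\ $Grz$-algebra; and finite presentability of $\algA_k$ over $\classV(\algA) = \rho(\classM)$, supplied by the proof of Theorem \ref{cont}(a), transfers to finite presentability of $s(\algA_k)$ over $\classM$ via Proposition \ref{translf}.

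Next verify that $\{s(\algA_k)\}_{k \geq 1}$ is an antichain with respect to $\CSub\CHom$: if $s(\algA_m) \in \CSub\CHom\, s(\algA_k)$ with $m \neq k$, applying $H$ yields $\algA_m \in \CSub\CHom\,\algA_k$, contradicting Theorem \ref{cont}(b). By the modal form of Corollary \ref{indep}, immediate from Theorem \ref{charfi}, the set $CH := \{\chi_{_\classM}(s(\algA_k)) : k \geq 1\}$ is $\classM$-independent, so its $2^{\aleph_0}$ subsets axiomatize pairwise distinct subvarieties, each by characteristic formulas. Finally, the argument of Corollary \ref{notJ} transports directly: every $s(\algA_k)$ is infinite, so no finite member of $\CSub\CHom\, s(\algA_k)$ can refute any formula from $CH$, for otherwise Theorem \ref{charfi}(a) would embed an infinite $s(\algA_j)$ into a finite algebra; the modal analog of Proposition \ref{Janax} then precludes Jankov-axiomatizability of any of the $2^{\aleph_0}$ varieties obtained.

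The main obstacle will be the clean transfer of the antichain property across $s$ and $H$: one must check that embeddings and quotients of $Grz$-algebras restrict to embeddings and quotients of their carcasses, so that $\CSub\CHom$ on the modal side reflects $\CSub\CHom$ on the Heyting side for the algebras in play. Once this is verified, every other step is a line-by-line replica of the Heyting argument in Theorem \ref{cont}, with $s(\algA_k)$ in place of $\algA_k$ and $\classM$ in place of $\classV(\algA)$.
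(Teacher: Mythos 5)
Your proposal is correct and follows essentially the route the paper intends: the paper's own ``proof'' of this theorem is just the one-line remark that it follows from Theorems \ref{charfi} and \ref{cont} (together with Proposition \ref{translf} and Theorem \ref{mainthi}), and your argument is exactly that transfer carried out in detail, with $\classM = \classV(s(\algA))$, $\rho(\classM)=\classV(\algA)$, and the algebras $s(\algA_k)$. The one step you flag --- that $\CSub\CHom$ on the modal side reflects to $\CSub\CHom$ on the Heyting carcasses --- does hold (embeddings of interior algebras restrict to embeddings of the open-element Heyting algebras, and a surjection $f\colon\algB\to\algB'$ maps $H(\algB)$ onto $H(\algB')$ because every open $\algb'=f(\algb)$ satisfies $\algb'=\Box\algb'=f(\Box\algb)$), so the antichain, the $\classM$-independence of the characteristic formulas, and the failure of Jankov axiomatizability all carry over as you describe.
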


\paragraph{Acknowledgments.} 
Many thanks to A. Muravitsky and G.Bexhanishvili for their pieces of advice and fruitful discussions.

%%%\bibliographystyle{acm} 
%%%%%\bibliographystyle{sl} 

%%\bibliography{../../../Notes/Bibl}
%%\bibliography{D:/Dropbox/Dropbox/Alex/Notes/Bibl}

\begin{thebibliography}{10}

\bibitem{Bezhanishvili_G_N_2009}
{\sc Bezhanishvili, G., and Bezhanishvili, N.}
\newblock An algebraic approach to canonical formulas: intuitionistic case.
\newblock {\em Rev. Symb. Log. 2}, 3 (2009), 517--549.

\bibitem{Bezhanishvili_N_G_de_Jongh_2008}
{\sc Bezhanishvili, G., Bezhanishvili, N., and de~Jongh, D.}
\newblock The {K}uznetsov-{G}er\v ciu and {R}ieger-{N}ishimura logics. {T}he
  boundaries of the finite model property.
\newblock {\em Logic Log. Philos. 17}, 1-2 (2008), 73--110.

\bibitem{Bezhanishvili_N_PhD}
{\sc Bezhanishvili, N.}
\newblock {\em Lattices of intermediate and cylindric modal logics}.
\newblock PhD thesis, Institute for Logic, Language and Computation University
  of Amsterdam, 2006.

\bibitem{Blok_Dwinger_1975}
{\sc Blok, W.~J., and Dwinger, P.}
\newblock Equational classes of closure algebras. {I}.
\newblock {\em Indag. Math. 37\/} (1975), 189--198.

\bibitem{Blk_Pgz_1}
{\sc Blok, W.~J., and Pigozzi, D.}
\newblock On the structure of varieties with equationally definable principal
  congruences. {I}.
\newblock {\em Algebra Universalis 15}, 2 (1982), 195--227.

\bibitem{Butz_1998}
{\sc Butz, C.}
\newblock Finitely presented heyting algebras.
\newblock BRICS Reports, University of Aarhus, 1998.

\bibitem{Chagrov_Zakh}
{\sc Chagrov, A., and Zakharyaschev, M.}
\newblock {\em Modal logic}, vol.~35 of {\em Oxford Logic Guides}.
\newblock The Clarendon Press Oxford University Press, New York, 1997.
\newblock Oxford Science Publications.

\bibitem{Citkin1977}
{\sc Citkin, A.}
\newblock On admissible rules of intuitionistic propositional logic.
\newblock {\em Math. USSR, Sb. 31\/} (1977), 279--288.
\newblock (A.~Tsitkin).

\bibitem{Citkin_Not_2012}
{\sc Citkin, A.}
\newblock Not every splitting {H}eyting or interior algebra is finitely
  presentable.
\newblock {\em Studia Logica 100\/} (2012), 115--135.

\bibitem{Fine_Asc_1974}
{\sc Fine, K.}
\newblock An ascending chain of {${\rm S}4$} logics.
\newblock {\em Theoria 40}, 2 (1974), 110--116.

\bibitem{Gerchiu_1972}
{\sc Ger{\v{c}}iu, V.~J.}
\newblock The finite approximability of superintuitionistic logics.
\newblock {\em Mat. Issled. 7}, 1(23) (1972), 186--192.

\bibitem{Gerchiu_Kuznetsov}
{\sc Ger{\v{c}}iu, V.~J., and Kuznecov, A.~V.}
\newblock The finitely axiomatizable superintuitionistic logics.
\newblock {\em Dokl. Akad. Nauk SSSR 195\/} (1970), 1263--1266.

\bibitem{Jankov_1963_ded}
{\sc Jankov, V.~A.}
\newblock On the relation between deducibility in intuitionistic propositional
  calculus and finite implicative structures.
\newblock {\em Dokl. Akad. Nauk SSSR 151\/} (1963), 1293--1294.

\bibitem{Jankov_1968}
{\sc Jankov, V.~A.}
\newblock The construction of a sequence of strongly independent
  superintuitionistic propositional calculi.
\newblock {\em Dokl. Akad. Nauk SSSR 181\/} (1968), 33--34.

\bibitem{Jankov_1969}
{\sc Jankov, V.~A.}
\newblock Conjunctively irresolvable formulae in propositional calculi.
\newblock {\em Izv. Akad. Nauk SSSR Ser. Mat. 33\/} (1969), 18--38.

\bibitem{deJongh_Th}
{\sc Jongh, D.~d.}
\newblock {\em Investigations on Intuitionistic Propositional Calculus}.
\newblock PhD thesis, University of Wisconsin, 1968.

\bibitem{Kracht_1990}
{\sc Kracht, M.}
\newblock An almost general splitting theorem for modal logic.
\newblock {\em Studia Logica 49}, 4 (1990), 455--470.

\bibitem{Kracht_Tools_1999}
{\sc Kracht, M.}
\newblock {\em Tools and techniques in modal logic}, vol.~142 of {\em Studies
  in Logic and the Foundations of Mathematics}.
\newblock North-Holland Publishing Co., Amsterdam, 1999.

\bibitem{Kuznetsov_1973}
{\sc Kuznetsov, A.~V.}
\newblock On finitely generated pseudo-boolean algebrasand finitely
  approximable varieties.
\newblock In {\em Proceedings of the 12th USSR Algebraic Colloquium\/} (1973),
  Sverdlovsk, p.~281.
\newblock (in Russian).

\bibitem{Kuznetsov_Gerchiu}
{\sc Kuznetsov, A.~V., and Ger{\v{c}}iu, V.~J.}
\newblock The superintuitionistic logics and finitary approximability.
\newblock {\em Dokl. Akad. Nauk SSSR 195\/} (1970), 1029--1032.
\newblock (in Russian).

\bibitem{Maksimova_Rybakov}
{\sc Maksimova, L.~L., and Rybakov, V.~V.}
\newblock The lattice of normal modal logics.
\newblock {\em Algebra i Logika 13\/} (1974), 188--216.

\bibitem{MaltsevBook}
{\sc Mal'cev, A.}
\newblock {\em Algebraic systems}.
\newblock Die Grundlehren der mathematischen Wissenschaften. Band 192.
  Berlin-Heidelberg-New York: Springer-Verlag; Berlin: Akademie-Verlag. XII,317
  p., 1973.

\bibitem{McKenzie_1972}
{\sc McKenzie, R.}
\newblock Equational bases and nonmodular lattice varieties.
\newblock {\em Trans. Amer. Math. Soc. 174\/} (1972), 1--43.

\bibitem{Rasiowa_Sikorski}
{\sc Rasiowa, H., and Sikorski, R.}
\newblock {\em The mathematics of metamathematics}, third~ed.
\newblock PWN---Polish Scientific Publishers, Warsaw, 1970.
\newblock Monografie Matematyczne, Tom 41.

\bibitem{Tomaszewski_PhD}
{\sc Tomaszewski, E.}
\newblock {\em On sufficiently rich sets of formulas}.
\newblock PhD thesis, Institute of Philosophy, JagellonianUniversity, Krakov,
  2003.

\bibitem{Wolter_PhD}
{\sc Wolter, F.}
\newblock {\em Lattices of modal logics}.
\newblock PhD thesis, Freien Universit{\"a}t Berlin, 1993.

\bibitem{Wronski_1973}
{\sc Wro{\'n}ski, A.}
\newblock Intermediate logics and the disjunction property.
\newblock {\em Rep. Math. Logic 1\/} (1973), 39--51.

\bibitem{Wronski_Cadr_1974}
{\sc Wro{\'n}ski, A.}
\newblock On cardinality of matrices strongly adequate for the intuitionistic
  propositional logic.
\newblock {\em Polish Acad. Sci. Inst. Philos. Sociol. Bull. Sect. Logic 3}, 1
  (1974), 34--40.

\bibitem{Yang_PhD}
{\sc Yang, F.}
\newblock {\em Intuitionistic subframe formulas, NNIL-Formulas and n-universal
  models}.
\newblock PhD thesis, Univerity of Amsterdam, Amsterdam, 2009.
\newblock ILLC Dissertation Series MoL-2008-12.

\bibitem{Zakh_Canonical_1992}
{\sc Zakharyaschev, M.}
\newblock Canonical formulas for {${\rm K}4$}. {I}. {B}asic results.
\newblock {\em J. Symbolic Logic 57}, 4 (1992), 1377--1402.

\bibitem{Zakh_Canonical_1996}
{\sc Zakharyaschev, M.}
\newblock Canonical formulas for {${\rm K}4$}. {II}. {C}ofinal subframe logics.
\newblock {\em J. Symbolic Logic 61}, 2 (1996), 421--449.

\bibitem{Zakh_Canonical_1997}
{\sc Zakharyaschev, M.}
\newblock Canonical formulas for {${\rm K}4$}. {III}. {T}he finite model
  property.
\newblock {\em J. Symbolic Logic 62}, 3 (1997), 950--975.

\bibitem{Zakh_intermediate_1983}
{\sc Zakharyashchev, M.~V.}
\newblock On intermediate logics.
\newblock {\em Dokl. Akad. Nauk SSSR 269}, 1 (1983), 18--22.

\bibitem{Zakh_Sintax_1988}
{\sc Zakharyashchev, M.~V.}
\newblock Syntax and semantics of modal logics that contain {${\rm S}4$}.
\newblock {\em Algebra i Logika 27}, 6 (1988), 659--689, 736.

\bibitem{Zakharyashchev_Sintax_1989}
{\sc Zakharyashchev, M.~V.}
\newblock Syntax and semantics of superintuitionistic logics.
\newblock {\em Algebra i Logika 28}, 4 (1989), 402--429, 486--487.

\end{thebibliography}

\def\cprime{$'$}

\end{document}